\newcounter{mysubequations}
\renewcommand{\themysubequations}{(\roman{mysubequations})}
\newcommand{\mysubnumber}{\refstepcounter{mysubequations}\themysubequations}
\newtheorem{thm}{Theorem}
\newtheorem{lemma}{Lemma}
\newtheorem{prop}{Proposition}
\newtheorem{cor}{Corollary}
\newtheorem{defn}{Definition}
\newtheorem*{example*}{Example}
\newcommand{\R}{\mathbb{R}}
\newcommand{\gV}{\mathcal{V}}
\newcommand{\gD}{\mathcal{D}}
\renewenvironment{abstract}
 {\small
  \begin{center}
  \bfseries \abstractname\vspace{-.5em}\vspace{0pt}
  \end{center}
  \list{}{%
    \setlength{\leftmargin}{5mm}
    \setlength{\rightmargin}{\leftmargin}%
  }%
  \item\relax}
 {\endlist}
\newcolumntype{H}{>{\setbox0=\hbox\bgroup}c<{\egroup}@{}}
\title{
{Marital Stability With Committed Couples: A Revealed Preference Analysis}\thanks{We thank Laurens Cherchye, Thomas Demuynck, Bram De Rock, and seminar participants in the RES Annual Conference, the Spring Meeting of Young Economists, the York-Durham Workshop, 18th ACEGD at ISI Delhi, University of Antwerp, Bristol, Essex, Glasgow, Leuven, and Louvain-la-Neuve for many helpful suggestions.}}
\author{Mikhail Freer\thanks{Department of Economics, University of Essex, m.freer@essex.ac.uk} 
\and 
Khushboo Surana\thanks{Department of Economics and Related Studies, University of York, khushboo.surana@york.ac.uk}
}
\begin{document}
\sloppy
\maketitle
\thispagestyle{empty}

\begin{abstract}
\noindent
\normalfont
We present a revealed preference characterization of marital stability where some couples are committed.
A couple is committed if they can divorce only with mutual consent.
We provide theoretical insights into the potential of the characterization for identifying intrahousehold consumption patterns.
We demonstrate that without price variation for private goods among potential couples, intrahousehold resource allocations can only be identified for non-committed couples.
We conduct simulations using Dutch household data to support our theoretical findings.
Our results show that with price variation, the empirical implications of marital stability allow for the identification of household consumption allocations for both committed and non-committed couples. \\
{\bf JEL classifications:} C14, D11, C78 \\    
{\bf Keywords:} household consumption, marital stability, commitment, revealed preferences, intrahousehold allocation

\end{abstract}

\section{Introduction}
Households consist of multiple decision-makers with potentially different preferences.
Over the past few decades, structural household models have become increasingly popular  for analyzing intrahousehold allocations of time and resources \citep[see][]{chiappori2017static}. In the absence of direct observation on “who gets what” in the household, such frameworks are useful for understanding consumption patterns within households. 
Following \citealp{becker1973theory}, researchers have often combined household bargaining analysis with a marital matching framework to exploit individuals’ outside options as threat points. 
A series of recent studies have used a combination of the two theories of “who marries whom” and “who gets what” as a framework
for empirical work \citep[see, e.g.,][]{cherchye2017household,gousse2017marriage,weber2017collective}.

The existing literature on matching models typically assumes that marriages operate under a no-commitment framework, allowing either spouse to unilaterally end the marriage without the other's consent. However, in reality, it is more common for some or all couples to be ``committed'' in the sense that the right to divorce is held jointly by both partners.
This commitment can stem from various sources, such as legal divorce policies, the presence of children or the accumulation of shared assets. 
For example, countries like the Czech Republic (\citealp{hrusakova2003grounds}), Hungary (\citealp{weiss2003grounds}) and Japan (\citealp{akiba1995marriage,tanase2010post}) require mutual consent for divorce among married couples, while cohabiting couples can separate without formal procedures. 
Even in jurisdictions with no-fault unilateral divorce laws, mutual consent may be needed in special cases. In Russia, for instance, a husband cannot file for divorce without his wife's consent if she is pregnant or within a year of giving birth (\citealp{antokolskaia2003grounds}).
Furthermore, joint custody and the division of shared assets often necessitate mutual agreement to proceed with divorce. Recent research suggests that joint asset ownership, such as a house, can serve as a commitment mechanism (\citealp{lafortune2017tying,lafortune2023collateralized}).

This paper presents a revealed preference characterization of household consumption under the assumption of marital stability with commitment. Unlike most of the previous research which assumes that all couples can divorce unilaterally, we allow couples to be committed.\footnote{    
    A notable exception is \citealp{sun2021efficiency}, who study senior level job matching market with committed agents in an non-transferable utility (NTU) framework.
    However, our study is the first to analyze marriage matching with committed couples in an imperfectly transferable utility (ITU) framework.
    \citealp{browning2014economics} provide an overview of the literature on stable matching on the marriage market. See also \citealp{chiappori2016econometrics} for an overview on the literature on the econometric of matching models.
}
In our framework, committed couples can divorce only if both partners agree, whereas non-committed couples can divorce unilaterally. To explore the testable implications of marital stability, we analyze two scenarios: one where committed couples can make transfers to secure agreement for divorce, and one where they cannot. 
We then provide theoretical insights into the empirical bite of our characterizations.
We begin by investigating the issue of rationalizability of a data set, which involves determining the necessary conditions under which the model could be rejected by the data.
We find that if there is no price variation across counterfactual matches and if individual incomes are the same inside and outside marriages, then violations of the model can only come from non-committed couples.
Next, we investigate whether the characterizations can be of use for identifying intrahousehold allocations.
We show that indentifiability of intrahousehold allocations of committed couples requires the data set to have price variation for private goods across potential matches.

To support our theoretical results, we conduct simulation exercises using data from the Longitudinal Internet Studies for the Social Sciences (LISS) panel, which includes a sample of Dutch households. 
Given our theoretical findings indicating the difficulty in identifying intrahousehold allocations for committed couples, we focus on evaluating the effectiveness of our methods for these households. The simulations reveal that, with some variation in the prices of private goods, our framework for marital stability with transfers provides informative bounds on the intrahousehold allocations of committed couples.

A key ingredient of our method is the use of a revealed preference framework following the tradition of \citealp{afriat1967construction,diewert1973afriat}, and \citealp{varian1982nonparametric}. Our revealed preference characterization of marital stability is nonparametric, meaning it does not need a prior functional specification of individual utilities. As such, this nonparametric approach avoids potentially erroneous conclusions due to wrongly specified parametric forms and allows for individual-level heterogeneity in preferences.
Our results complement the revealed preference method developed by \citealp{cherchye2017household}, who derived the implications of stable marriages for household consumption under the implicit assumption of unilateral divorce. We extend their work by characterizing marital stability in the presence of committed couples. 
We demonstrate that Cherchye et al.'s result is a special case of our general framework where none of the couples are committed.

We emphasize two specific features of our framework. First, our model uses an imperfectly transferable utility (ITU) framework. 
We account for intrahousehold consumption transfers but do not assume perfectly transferable utility (TU). 
The latter assumption (TU) would require strict conditions on utility functions such as generalized quasi-linear form.
While appealing from a theoretical perspective, the TU framework imposes substantial restrictions on individuals' preferences which may not hold in general \citep[see][]{chiappori2020transferable}. 
Our paper contributes to the growing strand of literature that uses an ITU framework to model marriage markets \citep[see][]{legros2007beauty,choo2013collective,chiappori2017matching,galichon2019costly}.\footnote{
    There is also important literature on revealed preferences of matching in NTU and TU frameworks \citep[see][]{echenique2008matchings,echenique2013revealed,demuynck2022revealed}. However, this literature primarily focuses on testing the stability of the matching market rather than on identifying intrahousehold resource allocation.}
Second, our analysis focuses on static equilibrium conditions for marital stability within a competitive, frictionless marriage market \citep[see][]{shapley1971assignment,becker1973theory}.
The static nature of our model is a substantial simplification of real-world marital behavior. Intertemporal considerations and the ease with which one can meet potential partners are particularly relevant when analyzing household decisions with long-term consequences (such as fertility). Nonetheless, the equilibrium concept of marital stability that we consider provides a natural starting point from which to analyze individuals' marital and consumption behavior. It can be used as a building block for more advanced dynamic models \citep[for a review, see][]{chiappori2017static}. 
Additionally, using a static framework is advantageous for our primary goal: investigating model identification under minimal data requirements. This approach allows for analysis using cross-sectional data, rather than necessitating panel data.

The remainder of the paper is organized as follows.
Section \ref{sec:model} introduces the structural components of the model, defines the core and shows that it is non-empty. 
Section \ref{sec:results} presents the revealed preference concepts and our main findings, including characterizations of marital stability with and without transfers, as well as theoretical results concerning the models' empirical content. 
Section \ref{sec:empirical} describes the simulation analysis.
Section \ref{sec:conclusion} provides concluding remarks.
All proofs omitted from the main text are provided in the Appendix.

\section{Stable Marriage with Committed Couples}
\label{sec:model}
\subsection{Preliminaries}

\paragraph*{Marriage Market and Committed Couples.} 
We consider a marriage market formed by a finite set of men $M$ and a finite set of women $W$. 
To ease the notational burden, in the following exposition we will assume that $\vert M \vert =\vert W \vert$.
Importantly, this assumption is not critical for our results.\footnote{
    The number of men and women differs when there are singles in the data. Our formal exposition do not explicitly discuss singles, but they can modelled in the same way. Specifically, single males (females) can be included in the analysis by considering them as (virtual) couples where the female (male) has zero consumption. 
    Moreover, we will include singles in our simulation analysis.}
We refer to a man as $m\in M$, a woman as $w\in W$ and the option of staying alone as $\emptyset$.

Let us denote by $\gV\subseteq M\times W$, the set of {committed couples}. Consider a couple $(m,w)$ formed by man $m$ and woman $w$. If the couple $(m,w)\in \gV$, they can only divorce upon mutual consent.
We assume that the set of committed couples $\gV$ is exogenously defined.\footnote{
    Given the static nature of the model, we abstain from modelling the decision to commit. In principle, the commitment status of couples could be treated as unobserved binary variables; however, for the sake of compactness, we do not pursue this approach in the following analysis. For further discussion, see Appendix \ref{app:LinearProgramming:unknownV}.
}

\paragraph{Consumption.}
For any couple $(m,w)$, we assume that the couple consumes two types of goods. 
The first type consists of goods consumed privately by the individual members.
Let us denote by $q^{m}_{m,w}$ and $q^w_{m,w} \in \R^k_{+}$, the vectors of private consumption of man $m$ and woman $w$, respectively.
The second type of consumption consists of goods consumed publicly within the household.
Let $Q_{m,w} \in \R^K_+$ denote the vector of public consumption.


\paragraph{Incomes and Prices.} 
Consumption decisions are made under (linear) budget constraints defined by prices and income. 
The income of the couple $(m,w)$ is denoted by $y_{m,w} \in \R_{++}$, while the incomes of man $m$ and woman $w$ when single are denoted by $y_{m,\emptyset} \in \R_{++}$ and $y_{\emptyset,w} \in \R_{++}$, respectively.
Next, let $p_{m,w}\in \R^k_{++}$ be the price vector of the private goods and $P_{m,w}\in \R^K_{++}$ be the price vector of the public goods faced by the couple $(m,w)$. 
Similarly, let $p_{m,\emptyset} \; (p_{\emptyset,w}) \in \R^k_{++}$ be the price vector of the private goods and $P_{m,\emptyset} \;(P_{\emptyset,w})\in \R^K_{++}$ be the price vector of the public goods faced by man $m$ (woman $w$) as single. 
For the pair $(m,w)$, the consumption possibilities are determined by the budget set:
$$ B_{m,w} = \lbrace (q^m, q^w, Q) \ \vert \ p_{m,w}(q^m + q^w) + P_{m,w} Q \leq y_{m,w}\rbrace.
$$

\paragraph*{Preferences.}
Individuals derive utility from both their private consumption and public consumption.
We assume that every individual $i\in M\cup W$ is endowed with a continuous, monotone, and concave utility function $u^i: \R^k_+ \times \R^K_+ \rightarrow \R$.

\paragraph*{Marriage Game and Allocation.}
We define the marriage game as a 5-tuple 
$$
\Gamma = (M, W, \gV, B, U),
$$
where $M$ and $W$ are the sets of men and women in the marriage market, $\gV \subseteq M\times W$ is the set of committed couples, $B = \{p_{m,w}, P_{m,w}, y_{m,w}\}$ is the set of prices and income faced by all $(m,w) \in (M \cup \lbrace\emptyset\rbrace) \times (W \cup\lbrace\emptyset\rbrace)$, and $U = \{u^i\}_{i\in M\cup W}$ is the set of utilities for all individuals in the market.

The outcome of the marriage game constitutes an allocation which comprises of a matching function and consumption bundle for each individual in the given matching. 
Specifically, the allocation is given by
$$
\alpha = \Big( \sigma, \{q^m_{m,\sigma(m)}, q^{\sigma(m)}_{m,\sigma(m)}, Q_{m,\sigma(m)}\}_{m \in M} \Big),
$$
where $\sigma: M\cup W \rightarrow M\cup W$ is a matching function describing who is married to whom and satisfying the following properties:
\begin{itemize}
    \item [--] $\sigma(m)\in W$ for every $m\in M$,
    \item [--] $\sigma(w)\in M$ for every $w\in W$,
    \item [--] $w = \sigma(m)$ if and only if $m=\sigma(w)$.
\end{itemize}
\noindent
Additionally, the consumption allocations are such that they satisfy the associated household budget constraint. That is, for any $m \in M$,
$$
p_{m,\sigma(m)} (q^m_{m,\sigma(m)} + q^{\sigma(m)}_{m,\sigma(m)}) + P_{m,\sigma(m)} Q_{m,\sigma(m)} \le y_{m,\sigma(m)}.
$$
We assume that the allocation defines the public consumption as well as the individuals’ private consumption for the matched couples, but  not  for  other  potential  (unmatched)  couples.


\subsection{Core}

The standard definition of the core requires that no coalition of agents can benefit by deviating from the allocation (\say{no blocking coalitions}). 
However, in our setting, the presence of committed couples restricts the set of coalitions that can deviate.
Thus, we say that an allocation belongs to the core if there are ``no permissible blocking coalitions''. 
To formally define the core, we first describe what constitutes a coalition and when it is permissible. We then define what it means for a coalition to be blocking. 
An individual might find preferable outside options compared to their current marriage, while their spouse might be at loss upon divorce. 
Therefore, in a committed relationship, there is an incentive for the partner who prefers divorce to compensate their spouse to obtain mutual consent for the divorce. Consequently, it is essential to consider that committed partners might promise transfers to each other in the event of divorce. 
In our formal exposition, we consider two scenarios of what constitutes a blocking coalition. In the first scenario, we assume committed couples cannot make monetary transfers to their current partners to incentivize them to agree to a divorce. We refer to the core of the marriage game in this scenario as the ``core'' (Definition \ref{def:Core}). 
In the second scenario, we allow committed couples to use transfers to incentivize their current partners to agree to divorce. We refer to the core in this scenario as the ``core with transfers'' (Definition \ref{def:CoreT}).\footnote{
    In standard matching models, which do not consider the commitment status of couples, transfers between ex-partners do not play a role. However, in our setting, allowing for transfers between committed couples upon divorce yields different empirical implications.
}

\paragraph{Permissible Coalition.}
We define a coalition as a tuple $(S,\hat \sigma)$, where $S\subseteq M\cup W\cup \{\emptyset\}$ is the set of members of the coalition and $\hat\sigma: S\rightarrow S$ is a matching function among the members of the coalition.\footnote{
    Note that $\lbrace \emptyset \rbrace$ represents the set of virtual partners, which symbolizes the option of remaining single for the individuals.
    }
Commitment status of the couples imposes restrictions on the feasibility of the coalitions that can block a given allocation.
We say that a coalition $(S,\hat\sigma)$ is \emph{permissible} if for every $i\in S$, if $(i,\sigma(i))\in \gV$, then $\sigma(i)\in S$.
Intuitively, permissibility of a coalition requires that if a member of the set of committed couple is part of the coalition, then the spouse should also be a member of the coalition. This notion of permissible coalition is in close spirit to cooperative games on graphs (see \citealp{myerson1977graphs}).


\paragraph{Blocking Coalition (without Transfers).}
A coalition is said to be blocking if the members of the coalition can improve upon the current allocation. To define this concept more formally, we start by defining blocking pairs.
A potential couple $(m,w)$ is called a {\it weakly blocking pair} if both $m$ and $w$ weakly prefer to marry each other than to stay in their current marriages.
Formally, $(m,w)$ is a weakly blocking pair if there is a consumption bundle $(q_{m,w}^m,q_{m,w}^w, Q_{m,w})$ such that 
$$
p_{m,w}(q_{m,w}^m + q_{m,w}^w) + P_{m,w} Q_{m,w} \leq y_{m,w},
$$
and
\begin{align*}
    & u^m(q^m_{m,w},Q_{m,w}) \geq u^m ( q^m_{m,\sigma(m)}, Q_{m,\sigma(m)}), \\
    & u^w(q^w_{m,w},Q_{m,w}) \geq u^w ( q_{\sigma(w),w}^w, Q_{\sigma(w),w}).
\end{align*}
A pair $(m,w)$ is a {\it blocking pair} if it is weakly blocking and at least one of the partners ($m$ or $w$) is strictly better-off in the new match. That is, at least one of the two inequalities above is strict.
A coalition $(S,\hat\sigma)$ is a {\it blocking coalition} if every rematched couple $(m, \hat\sigma(m)) \in S$ is weakly blocking and at least one is blocking.


\paragraph{Blocking Coalition with Transfers.}
Consider a coalition $(S, \hat{\sigma})$.
For any $m \in M$, let us denote by $t_{m,\sigma(m)}$ the transfer that man $m$ commits to pay to his current match $\sigma(m)$ upon divorce.
Note that while we generally interpret the transfer as going from $m$ to his partner $\sigma(m)$, this is only true when $t_{m,\sigma(m)}$ is positive. If $t_{m,\sigma(m)}$ is negative, $\sigma(m)$ pays money to her partner $m$. 
We assume that $t_{m,\sigma(m)}\ne0$ only if both $m$ and $\sigma(m)\in S$. This assumption implies that transfers between partners are zero when at least one of the partners is not a member of the coalition (e.g., if they are non-committed). This is because if a couple can divorce unilaterally, there is no need to incentivize the partner to divorce. 
This condition guarantees that there is no monetary transfer between the coalition and the rest of the individuals.

To define a blocking coalition with transfers, we extend the definition of a coalition as a triple $(S,\hat\sigma,t)$, where $t=\{t_{m,\sigma(m)}\}_{m \text{ or } \sigma(m) \in S}$ denotes the set of transfers. When transfers between ex-partners are allowed, a (weakly) blocking pair is defined similarly as above. Specifically, given transfers $(t_{m,\sigma(m)}, t_{\sigma(w),w})$, a pair $(m,w)$ is a weakly blocking pair if there is a consumption bundle $(q_{m,w}^m,q_{m,w}^w, Q_{m,w})$ such that 
$$
p_{m,w}(q_{m,w}^m + q_{m,w}^w) + P_{m,w} Q_{m,w} \leq y_{m,w} - t_{m,\sigma(m)} + t_{\sigma(w),w},
$$
and
\begin{align*}
    & u^m(q^m_{m,w},Q_{m,w}) \geq u^m ( q^m_{m,\sigma(m)}, Q_{m,\sigma(m)}), \\
    & u^w(q^w_{m,w},Q_{m,w}) \geq u^w ( q_{\sigma(w),w}^w, Q_{\sigma(w),w}).
\end{align*}
If at least one of the two inequalities above is strict, we say that the pair $(m,w)$ is a {\it blocking pair}.
A coalition $(S,\hat\sigma, t)$ is a {\it blocking coalition with transfers} if every rematched couple $(m, \hat\sigma(m)) \in S$ is weakly blocking and at least one is blocking.

\paragraph*{Core.}
We now turn to defining the core of the marriage game. 
\begin{defn}
\label{def:Core}
An allocation $\alpha$ is in the {\bf core}, $C(\Gamma)$, of marriage game $\Gamma$, if there are no permissible blocking coalitions.
\end{defn}

\begin{defn}
\label{def:CoreT}
An allocation $\alpha$ is in the {\bf core with transfers}, $C_T(\Gamma)$, of a marriage game $\Gamma$, if there are no permissible blocking coalitions with transfers.
\end{defn}

\noindent
If allocation $\alpha$ is in the core, we say that the corresponding allocation is \textit{stable}.
Note that Pareto efficiency of within-household resource allocations follows directly from the fact that the matching is in the core. {This is because the requirement of no permissible blocking coalitions applied on the observed couples implies that there cannot be another feasible household allocation that results in both spouses being better off and at least one spouse being strictly better off than the allocation ($q^m_{m,\sigma(m)}, q^{\sigma(m)}_{m,\sigma(m)}, Q_{m,\sigma(m)}$).}
Indeed, if an observed couple is not Pareto efficient, they would always be capable of forming a permissible coalition and would, therefore, block themselves.
This aligns our model within the collective household literature, which was first introduced by \citealp{chiappori1988rational,chiappori1992collective}.

We conclude this expeditionary section by establishing that the core (with transfers) is non-empty.\footnote{
    For the purposes of this paper, our analysis is confined to proving the existence of a core allocation. We have not investigated the development of algorithms that might generate such allocations. This is primarily because, unlike other matching problems (such as college admission problems) which can be centralized, marital matching is an inherently decentralized process. Additionally, since we are working with an ITU framework within households, defining a partial order to examine the potential lattice structure of core allocations is challenging as there are no natural candidates. For a NTU setting, we refer readers to \citealp{sun2021efficiency}, who study many-to-one matching with contracts in a labor market context. They demonstrate the existence of core allocations, establish that the core contains a lattice structure, and show that top trading cycle and a version of deferred acceptance mechanisms yield core allocations.}
Showing non-emptiness of core allocations is important as, in our following revealed preference characterization, we start by assuming that the observed marriages are stable.
Thus, guaranteeing that there is always a stable allocation is important for validity of the revealed preference analysis.

\begin{prop}
\label{prop:NonEmptyCore}
Given a marriage game $\Gamma$, both $C(\Gamma)$ and $C_T(\Gamma)$ are non-empty.
\end{prop}

Appendix \ref{app:Proofs} presents the proof of Proposition \ref{prop:NonEmptyCore}, which builds on a general result of \citealp{alkan1990core} and \citealp{cherchye2017household}. We note that the existence of core allocations does not necessarily imply a unique stable matching. However, non-uniqueness of core allocations does not undermine the validity of the testable implications and the set identification results we show below. Given the necessary and sufficient nature of the stability conditions, we know that for any utility function rationalizing the observed data, the corresponding value of intrahousehold allocation must lie within the bounds that defined by our method. Importantly, this argument does not depend on the utility function we construct in our sufficiency proof. 


\section{Revealed Preferences}
\label{sec:results}
In this section, we provide a characterization of household consumption under the assumption of marital stability.
Following the revealed preference tradition, our goal is to identify the set of behavioral restrictions that align with core allocations. This means that every core allocation must satisfy these ``revealed preference'' conditions, and any behavior that satisfy these conditions is considered core-compatible.\footnote{
    This contrasts with standard matching problems, which typically assume that utilities/preferences are observed and then seek core allocations and their properties. Instead, ours is a revealed preference problem where we observe the rules of the game and the allocation, and aim to determine if utility functions can be constructed such that the observed data are consistent with a core allocation.} 
We start by defining the type of data that we consider and when a data set is said to be rationalizable as a core allocation.
A data set is rationalizable as a core allocation if there are individual utilities such that the data could constitute a core allocation.
We also present a graph representation of the marriage market, which we use to express the revealed preference characterization. Next, we outline the revealed preference conditions that a data set needs to satisfy to be rationalizable as a core allocation. Finally, we discuss the empirical implications of these revealed preference conditions. Specifically, we examine (1) whether the conditions are effective in detecting violations of marital stability and (2) whether they can be used to infer intrahousehold consumption allocations, which we assume are not directly observed in the data.

\subsection{Set-up}
\paragraph{Data and Rationalization.}
So far, we assumed that an allocation is defined by a matching function ($\sigma$) and household consumption allocations $(q^m_{m,\sigma(m)}, q^{\sigma(m)}_{m,\sigma(m)}, Q_{m, \sigma(m)})$ for all matched couples $(m, \sigma(m))$ with $m \in M$. In practice, most household surveys do not provide information on individual private consumption, but only the aggregate consumption bundles.
Thus, instead of observing $q^m_{m,\sigma(m)}$ and $q^{\sigma(m)}_{m,\sigma(m)}$, we only observe $q_{m,\sigma(m)}$. We will account for this in our set-up by considering individual private consumption bundles as unknowns that satisfy the following adding-up conditions: 
$$
q_{m,\sigma(m)} = q^m_{m,\sigma(m)} + q^{\sigma(m)}_{m,\sigma(m)}.
$$

We assume a data set $\gD$ that contains  (1) the set of men $M$, (2) the set of women $W$, (3) the  set of committed couples $\gV$,\footnote{
    In practice, we only need to know which of the matched couples are committed. This information can be derived from the legal system; for instance, if married couples are governed by mutual consent divorce laws while cohabiting couples can separate without formal procedures, the marital status can indicate committed couples. Additional indicators such as the presence of children or shared assets, commonly available in household surveys, can also be used (recent evidence suggests that owning joint assets, like a house, serves as a commitment device; see \citealp{lafortune2017tying,lafortune2023collateralized}). Our model theoretically allows for the identification of committed couples by treating their commitment status as unknown variables within the rationalizability conditions we describe later. Although we do not explore this approach in the main text, we provide a linear programming formulation in Appendix \ref{app:LinearProgramming:unknownV} that can be used for this identification.}
(4) the prices and income $B = \lbrace p_{m,w}, P_{m,w}, y_{m,w} \rbrace$ for all $ (m,w) \in (M \cup \lbrace \emptyset \rbrace) \times (W \cup \lbrace \emptyset \rbrace)$, (5) the matching function $\sigma$ and (6) the consumption bundles $\lbrace q_{m,\sigma(m)}, Q_{m,\sigma(m)} \rbrace $ for all matched couples $(m, \sigma(m))$ with $m \in M$. Items (1)-(4) correspond to the elements of the marriage game and items (5)-(6) correspond to the allocation.
Thus,
$$
\gD = \Big \lbrace M,W,\gV,B, \sigma, \{q_{m,\sigma(m)}, Q_{m,\sigma(m)}\}_{m \in M} \Big\rbrace.
$$

\noindent
We can now state when a given data set $\gD$ is said to be rationalizable.
Specifically, to test whether the data set is rationalizable we need to check if there exist individual utility functions and private consumption bundles such that the allocation belongs to the core. 

\begin{defn}
\label{def:Rationalizble}
A data set $\gD$ is {\bf rationalizable as a core (with transfers) allocation} if there exist continuous, monotone, and concave utility functions $u^m$ and $u^w$ for every $m\in M$ and $w\in W$, and individual private consumption bundles $q^m_{m,\sigma(m)}, q^{\sigma(m)}_{m,\sigma(m)} \in \R^k_+$, with
$$
q^m_{m,\sigma(m)} + q^{\sigma(m)}_{m,\sigma(m)} = q_{m,\sigma(m)},
$$
such that the allocation $\alpha =\big(\sigma, \{q^m_{m,\sigma(m)}, q^w_{\sigma(w),w}, Q_{m,\sigma(m)}\}_{m\in M, w\in W}\big)$ is in the core (with transfers).
\end{defn}

\paragraph{Graph Representation.}
We introduce a description of the marriage market in terms of graph theory by considering a finite, weighted directed graph of the form $G = (V, E, A)$, where $V$ is the set of vertices, $E$ is the set of edges and $A$ is a weighting function $A: E \rightarrow \R$.\footnote{ 
Our graph theoretic interpretation differs from that used in the matching literature as we start under existent matching and look for potential blocking coalitions.
}
Given a data set $\gD$, the associated marriage market can be presented as a weighted directed graph where each matched couple in the data is a vertex and a directed edge in this graph represents a potential pair where the male of the outgoing vertex is matched with the female of the incoming vertex. For example, the edge from vertex $(m,\sigma(m))$  to vertex $(\sigma(w),w)$ would represent the potential pair $(m,w)$.   
The weighting function, $A(\gD)$, is defined as follows.
The weight of an edge going from $(m,\sigma(m))$ to $(\sigma(w),w)$, which represents the potential pair $(m,w)$, is denoted by $a_{m,w} \in A(\gD)$ and defined as:
\begin{align*}
    a_{m,w} =  p_{m,w} (q^m_{m,\sigma(m)} + q^w_{\sigma(w),w}) + P^m_{m,w}Q_{m,\sigma(m)} + P^w_{m,w}Q_{\sigma(w),w} -y_{m,w},
\end{align*}
with
$$
P^m_{m,w} + P^w_{m,w} = P_{m,w}.
$$

The weight function $A(\gD)$ can be interpreted based on revealed preferences. 
First, the edge weight $a_{m,w}$ defines individual prices $P^m_{m,w}, P^w_{m,w}\in \R^K_{+}$ which reflect the willingness-to-pay of $m$ and $w$, respectively, for the public consumption in the allocation ($q^m_{m,w}, q^w_{m,w}, Q_{m,w}$). Essentially, these prices capture the fraction of market prices of the public goods that are borne by $m$ and $w$.
These prices can be seen as Lindahl prices that support the efficient consumption of the public good: they must add up to the observed market prices ($P^m_{m,w} + P^w_{m,w} = P_{m,w}$), so as to be consistent with Pareto efficiency. This concept of personalized prices complements the concept of personalized quantities of private consumption.

Next, the edge weight $a_{m,w}$ specifies the income that the potential pair $(m,w)$ would be left with if they would buy the bundle they consume in their current matches $(q^m_{m,\sigma(m)}, Q_{m, \sigma(m)})$ and $(q^w_{\sigma(w),w}, Q_{\sigma(w),w})$ with their budget conditions (prices $(p_{m,w}, P_{m,w})$ and income $y_{m,w}$).
As we will show, our revealed preference characterization of core allocation will impose further restrictions on the weights $a_{m,w}$.

We now turn to defining the notion of a path of remarriages.
A \textit{path of remarriages} is defined by a set of agents $S = \{m_1,\ldots, m_{n-1}, \sigma(m_2),\ldots,\sigma(m_{n})\}$ and a matching $\hat\sigma: S\rightarrow S$ such that $\hat\sigma(m_j) = \sigma(m_{j+1})$ for every $1 \leq j < n$.
We can represent this path of remarriages as a path in the above directed graph\footnote{A path in a directed graph is a sequence of edges, which connects a sequence of vertices.} in which the sequence of vertices is $((m_1, \sigma(m_1)),(m_2,\sigma(m_2)),\ldots, (m_{n},\sigma(m_{n})))$ and the sequences of edges is $((m_1, \sigma(m_2)),(m_2,\sigma(m_3)),\ldots, (m_{n-1},\sigma(m_{n})))$. The set of edges in this path specify who is remarrying whom. In what follows, we will denote such a path as $\rho = (m_1, (m_2,\sigma(m_2)), \cdots, (m_{n-1}, \sigma(m_{n-1})), \sigma(m_n))$. Note that this path of remarriages would correspond to a permissible coalition if either $m_1 = m_n$ (the path is a cycle of remarriages) or both $(m_1,\sigma(m_1))$ and $(m_n,\sigma(m_n)) \notin \gV$ (couples at both ends of the path are not committed). Both of these cases ensure that both spouses of committed couples are in the coalition.
Clearly, a permissible path of remarriages corresponds to a permissible coalition.
In the Appendix (Lemma \ref{lemma:Coalition2Path}), we demonstrate that every permissible coalition contains a (permissible) path of remarriages.

\subsection{Revealed Preference Conditions}
We now provide revealed preference characterization of a data set $\gD$ that is rationalizable in the sense of Definition \ref{def:Rationalizble}. We first present the revealed preference conditions for the case in which no transfers between current partners post-divorce are allowed. Next, we allow for post-divorce monetary transfers between partners.


\paragraph{Core (without Transfers).}
Marital stability without transfers requires that there is no permissible blocking coalition. To recall, a blocking coalition is a coalition endowed with a rematching such that every pair is weakly blocking and at least one is blocking. 
Consider a potential pair $(m,w)$ in a coalition and suppose that for a given consumption allocations $q^m_{m,\sigma(m)}$ and $q^w_{\sigma(w),w}$, and individual prices $P^m_{m,w}$ and $P^w_{m,w}$ (with $P^m_{m,w} + P^w_{m,w} = P_{m,w}$), we have that $a_{m,w} < 0$. Then,
\begin{equation*}
    p_{m,w} (q^m_{m,\sigma(m)} + q^w_{\sigma(w),w}) + P^m_{m,w}Q_{m,\sigma(m)} + P^w_{m,w}Q_{\sigma(w),w} < y_{m,w}.
\end{equation*}
The left hand side of this inequality represents the total value of the currently consumed bundles by man $m$ ($p_{m,w}q^m_{m,\sigma(m)} + P^m_{m,w}Q_{m,\sigma(m)}$) and woman $w$ ($p_{m,w}q^w_{\sigma(w),w} + P^w_{m,w}Q_{\sigma(w),w}$), evaluated at the prices they will face in the new match, using personalized prices to evaluate the public goods. If their income $y_{m,w}$ exceeds this sum, they can reallocate their income so that both individuals are better off than in their current marriages (with at least one strictly better off). This will make $(m,w)$ a blocking pair.  

Consider the coalition corresponding to a path of remarriages $\rho = (m_1, (m_2,\sigma(m_2)), \cdots, (m_{n-1}, \sigma(m_{n-1})), \sigma(m_n))$. If $a_{m_r,\sigma(m_{r+1})}\le 0$ for all $ 1 \leq r < n$, with at least one inequality being strict, then the path of remarriages $\rho$ would specify a  blocking coalition. 
Therefore, given a data set $\gD$, if the observed marriages are stable, there must exist feasible intrahousehold allocations within current marriages and individual prices such that if the edge weights along any path of remarriages $\rho$ are negative and at least one is strictly negative, it must not be permissible.
Following this logic, Definition \ref{def:PathConsistency} defines the concept of \emph{path consistency}, and Theorem \ref{thm:PathConsistency} states that path consistency is a necessary and sufficient condition for rationalizability with core.

\begin{defn}
\label{def:PathConsistency}
Given a data set $\gD$, $A(\gD)$ satisfies {\bf path consistency} if, {for all $m \in M$ and $w \in W$}, there are
$$
q^m_{m,\sigma(m)}, q^w_{\sigma(w),w} \in \R^k_+ \text{ and } P^m_{m,w}, P^w_{m,w} \in \R^K_{++}
$$
with
$$
q^m_{m,\sigma(m)} + q^{\sigma(m)}_{m,\sigma(m)} = q_{m,\sigma(m)} \text{ and }  P^m_{m,w} + P^w_{m,w} = P_{m,w},
$$
such that for every path of remarriages $\rho = (m_1, (m_2,\sigma(m_2))$, $\ldots$, $(m_{n-1},\sigma(m_{n-1})), \sigma(m_n))$ if 
$$
a_{m_r,\sigma(m_{r+1})} \le 0 \; \text{ for all } \; 1 \leq r < n
$$
with at least one inequality being strict then,
\begin{itemize}
    \item [(i)] $(m_1,\sigma(m_1)) \in \gV$ or $(m_n,\sigma(m_n))\in \gV$, and
    \item [(ii)] $m_1\ne m_n$.
\end{itemize} 
\end{defn}

\begin{thm}
\label{thm:PathConsistency}
A data set $\gD$ is rationalizable as a core allocation if and only if $A(\gD)$ satisfies path consistency.
\end{thm}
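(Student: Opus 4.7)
The plan is to use the equivalence between permissible blocking coalitions and paths of remarriages provided by Lemma \ref{lemma:Coalition2Path}, so that the whole argument can be phrased in terms of paths in the weighted directed graph. A dataset fails to be rationalizable precisely when some permissible blocking coalition exists, and by the lemma this is equivalent to the existence of a path of remarriages whose endpoints satisfy either $m_1=m_n$ (a cycle, which closes the coalition automatically) or $(m_1,\sigma(m_1)),(m_n,\sigma(m_n))\notin\gV$ (endpoints non-committed, so no committed spouse is left behind). Path consistency is exactly the statement that no such path can have all non-positive edge weights with at least one strict.

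For necessity, suppose $\gD$ admits continuous, monotone, concave $u^m, u^w$ together with intrahousehold allocations $q^m_{m,\sigma(m)}, q^{\sigma(m)}_{m,\sigma(m)}$ under which no permissible coalition is blocking. For each observed household, Pareto efficiency of the current allocation implies the existence of common supergradients of $u^m$ and $u^{\sigma(m)}$ at $(q^m_{m,\sigma(m)}, Q_{m,\sigma(m)})$ and $(q^{\sigma(m)}_{m,\sigma(m)}, Q_{m,\sigma(m)})$ that are collinear with prices $p_{m,\sigma(m)}$ for the private part and whose sum matches $P_{m,\sigma(m)}$ for the public part; this decomposition produces Lindahl shares $P^m_{m,w}, P^w_{m,w}$ for every potential pair. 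I would then argue by contrapositive: if path consistency is violated along some path $P$, concavity gives $u^{m_j}(q^{m_j}_{m_j,\sigma(m_{j+1})}, Q_{m_j,\sigma(m_{j+1})}) \ge u^{m_j}(q^{m_j}_{m_j,\sigma(m_j)}, Q_{m_j,\sigma(m_j)})$ whenever the budget inequality implied by $a_{m_j,\sigma(m_{j+1})}\le 0$ holds, with strict preference somewhere from the strict weight, so the path is a blocking coalition; since $P$ meets either condition (i) or (ii), Lemma \ref{lemma:Coalition2Path} upgrades it to a permissible blocking coalition, a contradiction.

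For sufficiency, I would take the allocations $q^m_{m,\sigma(m)}$ and Lindahl prices $P^m_{m,w}$ produced by path consistency and construct explicit Afriat-style utilities of the form
\begin{equation*}
u^m(q,Q)=\min_{w\in W\cup\{\emptyset\}}\Bigl[U^m_w+\lambda^m_w\bigl(p_{m,w}\cdot(q-q^m_{m,\sigma(m)})+P^m_{m,w}\cdot(Q-Q_{m,\sigma(m)})\bigr)\Bigr],
\end{equation*}
and symmetrically for women, with positive multipliers $\lambda^m_w$ and levels $U^m_w$ to be chosen. These are automatically continuous, concave, and (by positivity of the coefficients) monotone. The remaining task is to select the $U^m_w$ so that (a) they are consistent at the current marriages, i.e.\ $u^m$ attains $U^m_{\sigma(m)}$ at the current bundle, and (b) no permissible blocking coalition can arise. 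Translating (b) through the min-form utility and the definition of $a_{m,w}$, the existence of appropriate $U^m_w$ reduces to the solvability of a system of Afriat-type difference inequalities along directed edges of the graph; the standard Afriat/no-cycle argument shows this system is solvable iff there are no directed cycles of strictly negative total ``slack,'' which is exactly what path consistency rules out for cycles, and the non-committed endpoint clause takes care of open paths.

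The main obstacle is the sufficiency direction, specifically the joint choice of the level indices $U^m_w$, $U^w_m$ for men and women simultaneously so that every pair, not just unilateral deviations, is barred from forming a blocking coalition. The novelty relative to the cherchye2017household construction is the presence of committed couples: path consistency only forbids negative-weight paths that are either cycles or have non-committed endpoints, so the Afriat-style feasibility system inherits those same exemptions. I would handle this by showing that the commitment restriction effectively quotients the graph by contracting committed edges, and then applying the classical no-negative-cycle lemma on the contracted graph to guarantee a feasible assignment of the $U^m_w$.
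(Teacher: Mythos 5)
Your overall skeleton — reduce permissible blocking coalitions to paths via Lemma \ref{lemma:Coalition2Path}, and tie the sign of $a_{m,w}$ to the (non-)blocking status of the pair $(m,w)$ via supergradients — matches the paper, and your necessity direction is essentially the paper's argument (modulo one gloss: $a_{m,w}\le 0$ prices the two partners' \emph{separate} current public bundles at individual Lindahl prices, so concluding that the pair can jointly afford a \emph{common} public bundle that weakly improves both requires the supergradient/first-order-condition argument of Lemma \ref{lemma:UtilityAndAdjacency}, not just monotonicity; also the Lindahl shares $P^m_{m,w}$ for $w\neq\sigma(m)$ come from the \emph{potential} pair's Pareto problem, not from the observed household's).

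The genuine gap is in your sufficiency direction. First, you invoke the wrong solvability criterion: the Afriat-type system for the levels $U^m_w$ with positive multipliers $\lambda^m_w$ is solvable if and only if a \emph{cyclical consistency} condition holds (all edge slacks non-positive along a cycle with one strict), whereas ``no directed cycles of strictly negative \emph{total} slack'' is cyclical \emph{monotonicity} — that is the condition for the core \emph{with} transfers (Theorem \ref{thm:PathMonotonicity}), not for path consistency. Relatedly, your closing idea of contracting committed edges and applying a no-negative-cycle lemma implicitly sums weights along the contracted path, which is precisely the operation that is only legitimate when post-divorce transfers are allowed; in the no-transfers model each edge must fail individually, so the contraction argument does not go through. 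Second, even granting solvable levels, you never show that the two individually constructed min-of-affine utilities \emph{jointly} preclude a blocking pair: blocking requires a single affordable bundle $(q^m,q^w,Q)$ with a common $Q$ improving both partners, and ruling this out requires an argument that couples the two utilities through the shared budget and the decomposition $P^m_{m,w}+P^w_{m,w}=P_{m,w}$ — you flag this as ``the main obstacle'' but do not resolve it. The paper sidesteps all of this: it takes the allocations and Lindahl prices whose existence path consistency asserts, and builds for each individual a single universal two-slope piecewise-linear utility (slope $M$ below and $m$ above the current bundle, with $m$ and $M$ bracketing all prices, so the current bundle gets utility zero); for these utilities, $a_{m,w}\ge 0$ directly implies $(m,w)$ is not blocking, with no level constants to solve for, and path consistency then kills every permissible path of remarriages. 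To complete your route you would need to prove an Afriat-type representation result adapted to the two-person public-good blocking condition, which is substantially more work than the construction the theorem actually requires.
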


\noindent
Intuitively, the path consistency condition ensures that the observed data is rationalizable by a stable matching by requiring any path of remarriages that is blocking to be non-permissible.
Condition $(i)$ in Definition \ref{def:PathConsistency} guarantees that the path of remarriages $\rho$ cannot start and end with non-committed couples while condition $(ii)$ guarantees that $\rho$ cannot be a cycle. If either of the conditions are violated, $\rho$ would correspond to a permissible blocking coalition.

\paragraph{Core with Transfers.}
Next, we consider the situation where individuals can commit to transferring money to their current partners, thereby providing incentives to consent to divorce.
Recall that $t_{m, \sigma(m)} \in \R$ denotes the transfer from $m$ to $\sigma(m)$ upon divorce. 
If $t_{m, \sigma(m)}$ is positive (negative), $m$ pays (receives) money to (from) his current partner $\sigma(m)$ after divorce. 
Moreover, we assume that transfers (if any) are made only between spouses who belong to the potentially blocking coalition. This implies that for any permissible coalition $S$, if $m$ or $\sigma(m)\notin S$, then $t_{m,\sigma(m)} = 0$.

Fix a set of consumption allocations and individual prices. 
Consider a path of remarriages $\rho= ( m_1, (m_2,\sigma(m_2)) \cdots, (m_{n-1}, \sigma(m_{n-1})), \sigma(m_n))$ and a set of transfers $t_{m_r, \sigma(m_r)}$ for all $1 \leq r \leq n$. 
If the following inequality holds for all $1 \leq r < n$, with at least one being strict,
$$
a_{m_r, \sigma(m_{r+1})} - t_{m_r, \sigma(m_r)} + t_{m_{r+1}, \sigma(m_{r+1})} \le 0,
$$
then $\rho$ corresponds to a blocking coalition. Further, if either $m_1=m_n$ or $(m_1,\sigma(m_1))$ and $(m_n,\sigma(m_n)) \notin \gV$, $\rho$ would correspond to a permissible blocking coalition. 
Summing the above inequality along the permissible path implies
$$
\sum_{r = 1}^{n-1} a_{m_r,\sigma(m_{r+1})} < 0.
$$
Note that the transfers cancel out because for every agent (who isn't first or last) in the path their partner is also in the path.\footnote{
    That is, if $-t_{m_r,\sigma(m_r)}$ appears in the sum corresponding to the potential pair $(m_r,\sigma(m_{r+1}))$, then $+t_{m_r,\sigma(m_r)}$ appears in the sum corresponding to the potential pair $(m_{r-1},\sigma(m_{r}))$.
} 
Moreover, since the coalition is permissible, we know that the incoming transfer to the first agent and the outgoing transfer from the last agent are equal to each other: if the path is a cycle then the first and last vertex agents are matched to each other, otherwise both agents are non-committed and their transfers are equal to zero.


Therefore, given a data set $\gD$, if the observed marriages are stable, there must exist feasible intrahousehold allocations and individual prices such that if the sum of edge weights along any path of remarriages $\rho$ is strictly negative, it must not be permissible. Definition \ref{def:PathMonotonicity} defines this condition formally as  \emph{path monotonicity}. Clearly, path monotonicity is a necessary condition for rationalizability with core with transfers. Theorem \ref{thm:PathMonotonicity} states that it is also sufficient.

\begin{defn}
\label{def:PathMonotonicity}
Given a data set $\gD$, $A(\gD)$ satisfies {\bf path monotonicity} if, {for all $m \in M$ and $w \in W$}, there are
$$
q^m_{m,\sigma(m)}, q^w_{\sigma(w),w} \in \R^k_+ \text{ and } P^m_{m,w}, P^w_{m,w} \in \R^K_{++}
$$
with
$$
q^m_{m,\sigma(m)} + q^{\sigma(m)}_{m,\sigma(m)} = q_{m,\sigma(m)} \text{ and }  P^m_{m,w} + P^w_{m,w} = P_{m,w},
$$
such that for every path of remarriages $\rho = (m_1, (m_2,\sigma(m_2)), \ldots, (m_{n-1},\sigma(m_{n-1})), \sigma(m_n))$ if 
$$
\sum_{r=1}^{n-1} a_{m_r,\sigma(m_{r+1})} < 0
$$
then,
\begin{itemize}
    \item [(i)] $(m_1,\sigma(m_1)) \in \gV$ or $(m_n,\sigma(m_n))\in \gV$, and
    \item [(ii)] $m_1\ne m_n$.
\end{itemize} 
\end{defn}

\begin{thm}
\label{thm:PathMonotonicity}
A data set $\gD$ is rationalizable as a core with transfers allocation if and only if $A(\gD)$ satisfies path monotonicity.
\end{thm}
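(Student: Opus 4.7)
The plan is to prove necessity and sufficiency separately, following an architecture analogous to the proof of Theorem~\ref{thm:PathConsistency} but exploiting the fact that transfers can redistribute slack along a path of remarriages.

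For necessity, suppose $\gD$ is rationalizable with core with transfers; there are then continuous, monotone, concave utilities $u^m, u^w$ and Pareto-efficient intrahousehold allocations $q^m_{m,\sigma(m)}, q^w_{\sigma(w),w}$ from which one extracts Lindahl individual prices $P^m_{m,w}, P^w_{m,w}$ summing to $P_{m,w}$, yielding the edge weights $a_{m,w}$. If path monotonicity fails for this choice, there is a permissible path $P = (m_1, (m_2,\sigma(m_2)), \ldots, \sigma(m_n))$ with $T := \sum_{r=1}^{n-1} a_{m_r, \sigma(m_{r+1})} < 0$. I would set $t_{m_1} = 0$ (which is required anyway for non-cyclic permissible paths, since both endpoint couples are non-committed, forcing $t_{m_1} = t_{m_n} = 0$) and recursively define $t_{m_{r+1}} = t_{m_r} + T/(n-1) - a_{m_r, \sigma(m_{r+1})}$. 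Telescoping gives $t_{m_n} = t_{m_1}$, consistent with both the non-cyclic boundary condition and the cyclic identification $m_1 = m_n$. Each edge then satisfies $a_{m_r, \sigma(m_{r+1})} - t_{m_r} + t_{m_{r+1}} = T/(n-1) < 0$, which by monotonicity of utilities encodes a strictly blocking pair. The path thus induces a permissible blocking coalition, contradicting core stability.

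For sufficiency, assume path monotonicity and fix the allocations and individual prices it supplies. The construction has two steps. First, convert path monotonicity into a feasible Afriat-type system via a Farkas/linear-programming argument: assign to each individual $i$ and each potential match $(m,w)$ containing $i$ a personal expenditure $e^i_{m,w}$ with $e^m_{m,w} + e^w_{m,w} = y_{m,w}$, such that the cost of reproducing $i$'s observed bundle at the potential match's individual prices is at least $e^i_{m,w}$. The cumulative-slack formulation in path monotonicity is exactly what a Farkas-type separation needs to guarantee feasibility, because along any path the transfers telescope into a single aggregate inequality. Second, build continuous, monotone, concave utilities $u^m, u^w$ in the standard Afriat manner --- as pointwise minima of affine support functions anchored at the observed bundles with slopes proportional to the individual prices --- so that each person's current bundle is optimal against any potential outside budget. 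To verify stability, suppose a permissible blocking coalition $(S, \hat\sigma)$ exists; by Lemma~\ref{lemma:Coalition2Path} it contains a permissible path of remarriages $P$. Summing the weak-blocking inequalities along $P$ with transfers telescoping (vanishing at non-cyclic boundaries, wrapping around cycles) yields $\sum_r a_{m_r, \sigma(m_{r+1})} < 0$, contradicting path monotonicity.

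The main obstacle is the Farkas-type step in sufficiency: showing that the cumulative "sum of edges" condition of path monotonicity is exactly the right dual condition to produce the feasible per-individual Afriat inequalities when individual Lindahl prices must sum to the observed market prices. This is the qualitative departure from the proof of Theorem~\ref{thm:PathConsistency}, where per-edge inequalities are controlled individually; here, transfers absorb per-edge slack so that only path-level slack is pinned down. Once this feasibility step is in hand, the Afriat construction and the stability verification proceed in close analogy to the arguments of \citealp{cherchye2017household}.
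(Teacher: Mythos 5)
Your necessity argument is essentially the paper's: the paper also sets $\hat t_{m_1}=0$ and defines transfers recursively so that the adjusted edge weights telescope, the only difference being that the paper makes every intermediate adjusted weight exactly zero and loads the entire deficit $T$ onto the last edge, whereas you spread $T/(n-1)$ evenly so every pair is strictly blocking. Both work, and your handling of the quantifiers (fixing the allocations and Lindahl prices extracted from the rationalizing utilities before negating path monotonicity) is if anything cleaner than the paper's phrasing. Your final verification in the sufficiency direction --- permissible blocking coalition $\Rightarrow$ path of remarriages by Lemma~\ref{lemma:Coalition2Path} $\Rightarrow$ per-edge weak-blocking inequalities via Lemma~\ref{lemma:UtilityAndAdjacency} $\Rightarrow$ telescoping of transfers (zero at non-committed endpoints, cancelling around cycles) $\Rightarrow$ $\sum_r a_{m_r,\sigma(m_{r+1})}<0$, contradiction --- is exactly the paper's argument and is, by itself, the entire sufficiency proof once the utilities are in hand.

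The genuine problem is the Farkas/decentralization step you interpose, which you yourself flag as the main obstacle. As literally stated it cannot work: if $e^m_{m,w}+e^w_{m,w}=y_{m,w}$ and each individual's observed bundle costs at least $e^i_{m,w}$ at the individual prices, then adding the two inequalities forces $a_{m,w}\ge 0$ for \emph{every} edge. Path monotonicity does not deliver per-edge nonnegativity --- it only controls sums along \emph{permissible} paths, and an individual edge (or a non-permissible path) may well have negative weight. To salvage the decentralization you would have to split the transfer-adjusted income $y_{m,w}-t_{m}+t_{\sigma(w)}$, which requires first proving the existence of transfer potentials $t$, vanishing at non-committed couples, with $a_{m,w}+t_m-t_{\sigma(w)}\ge 0$ on every edge; that existence is a nontrivial LP-duality/shortest-path argument whose dual feasibility condition is precisely path monotonicity, and you have deferred it rather than supplied it. The paper avoids this entirely: it reuses the single piecewise-linear ``kinked'' utility $\hat u^i$ built in the proof of Lemma~\ref{lemma:UtilityAndAdjacency}, anchored at each individual's observed bundle, and derives the contradiction purely at the path level without ever establishing that each person's bundle is optimal against each outside budget separately. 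So either delete the Farkas step and run your verification directly with those utilities, or carry out the potential construction in full; as written, the sufficiency direction rests on an unproven and, in its stated form, false feasibility claim.
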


\paragraph{Remarks.}
Before moving on to the empirical content of the rationalizability conditions, some remarks are in order. First, the conditions presented in Definitions \ref{def:PathConsistency} and \ref{def:PathMonotonicity} are the path counterparts of two well-known conditions in the revealed preference literature: cyclical consistency \citep{afriat1967construction} and cyclical monotonicity \citep{rockafellar1970convex}.
In particular, the path consistency and path monotonicity conditions collapse to their cyclical counterparts if $\gV = M\times W$ (i.e., when all couples committed).
Our second remark concerns the nested structure of the two sets of revealed preference conditions. If the observed behavior is consistent with the stability conditions with transfers, it would also be consistent with the stability conditions without transfers. This means that data consistency with path consistency is a necessary condition for consistency with path monotonicity.
Third, the results of \citealp{cherchye2017household} can be obtained as a corollary of Theorems \ref{thm:PathConsistency} and \ref{thm:PathMonotonicity} by setting $\gV=\emptyset$ (i.e., when all individuals can divorce unilaterally).
In that case, both path consistency and path monotonicity boil down to the requirement that there is no pair $(m,w)$ such that $a_{m,w}<0$.
%
Fourth, the stability conditions and corresponding revealed preference conditions can be trivially modified if one wants to limit the size of blocking coalitions (or the length of the path of remarriages). 
Finally, in Appendix \ref{app:LinearProgramming}, we show that the path consistency and path monotonicity conditions can be reformulated in terms of inequality constraints that are linear in unknowns. These linear inequality constraints are convenient from a practical point of view as they can be easily operationalized.

\subsection{Empirical Content}
We now focus on the empirical tractability of the models. In particular, we discuss the empirical bite of the revealed preference conditions to detect violations of marital stability and to identify the intrahousehold consumption allocations. 
As our aim is to determine the data setting under which marital stability loses all empirical power, in what follows, we concentrate on the more stringent path monotonicity condition. Due to the nested structure of these conditions, the same results will apply to the path consistency condition as well.

\paragraph{Rationalizability.}

So far, we assumed that the data set contains the matching function, household consumption for all matched couples, and prices and income for all potential pairs.
While it is easy to observe the matching function and household consumption of the matched couples in household surveys, prices and income faced by potential couples are typically unknown.
Therefore, empirical applications require making some assumptions about the prices and income that individuals would face in counterfactual matches. 
Existing studies assume that prices for private and public goods are the same across all potential matches and that household income is the sum of individual incomes, which are independent of their partner's income (see, e.g., \citealp[][]{cherchye2017household,cherchye2020marital,browning2021stable}).
We show that, under this assumption, violation of marital stability cannot be due to coalitions formed solely by committed couples.

To present the formal result, we define the notion of a \emph{blocking cycle}. 
A path of remarriages $\rho = (m_1, (m_2,\sigma(m_2)), \ldots, (m_{n-1},\sigma(m_{n-1})), \sigma(m_n))$ forms a blocking cycle if $m_1 = m_n$ and every pair is weakly blocking and at least one is blocking. 
When transfers between committed spouses are allowed, a blocking cycle corresponds to case $(ii)$ of Definition \ref{def:PathMonotonicity}.
Corollary \ref{cor:MutualConsentTransfersNoEmpiricalContent} shows that, if data does not contain price or income variation, then no violation of marital stability can be due to a blocking cycle of remarriages. 
Thus, if the observed household behavior is not rationalizable, it will be due to a presence of permissible blocking coalition involving at least one non-committed couple.

\begin{cor}
\label{cor:MutualConsentTransfersNoEmpiricalContent}
Suppose post-divorce transfers between partners are allowed. If, for every $m,m'\in M$ and $w,w'\in W$,
\begin{itemize}
    \item [(i)] $p_{m,w} = p_{m',w'}$,
    \item [(ii)] $P_{m,w} = P_{m',w'}$, and
    \item [(iii)] $y_{m,w} = y_m + y_w$,
\end{itemize}
then no data set $\gD$ can contain a blocking cycle.
\end{cor}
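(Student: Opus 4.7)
The plan is to exhibit an admissible choice of intrahousehold shares and Lindahl prices for which the sum of edge weights around every cycle of remarriages is identically zero; since a blocking cycle (case~(ii) of Definition~\ref{def:PathMonotonicity}) requires a strictly negative edge-weight sum, this rules out its existence.

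First, I would fix any feasible private shares $q^{m}_{m,\sigma(m)}$ satisfying the adding-up constraint and set $P^{m}_{m,w} = P^{w}_{m,w} = P/2$ for every potential pair $(m,w)$, where $P$ denotes the common public-good price guaranteed by assumption~(ii). This assignment is strictly positive and satisfies $P^{m}_{m,w} + P^{w}_{m,w} = P_{m,w}$, so it is admissible.

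Next, I would take an arbitrary cycle $(m_1, (m_2,\sigma(m_2)), \ldots, (m_{n-1},\sigma(m_{n-1})), \sigma(m_n))$ with $m_1 = m_n$ and sum the edge weights $a_{m_r,\sigma(m_{r+1})}$ along it, grouping terms by the observed couple $(m_r,\sigma(m_r))$. Because the cycle closes, each such couple in the cycle contributes both the husband's outgoing edge and the wife's incoming edge. Using assumption~(i) to factor out the common private-good price $p$ and then the adding-up constraint $q^{m}_{m,\sigma(m)} + q^{\sigma(m)}_{m,\sigma(m)} = q_{m,\sigma(m)}$, the private-consumption portion collapses to $p \sum_{r=1}^{n-1} q_{m_r,\sigma(m_r)}$. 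The chosen Lindahl prices collapse the public-consumption portion to $P \sum_{r=1}^{n-1} Q_{m_r,\sigma(m_r)}$, because for each $r$ the coefficient of $Q_{m_r,\sigma(m_r)}$ is $P^{m_r}_{m_r,\sigma(m_{r+1})} + P^{\sigma(m_r)}_{m_{r-1},\sigma(m_r)} = P/2 + P/2 = P$ (with cyclic convention $m_0 := m_{n-1}$). By assumption~(iii) together with $\sigma(m_n) = \sigma(m_1)$, the income terms reduce to $\sum_{r=1}^{n-1} y_{m_r,\sigma(m_r)}$.

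Combining these three pieces yields
$$\sum_{r=1}^{n-1} a_{m_r,\sigma(m_{r+1})} = \sum_{r=1}^{n-1} \bigl[ p\, q_{m_r,\sigma(m_r)} + P\, Q_{m_r,\sigma(m_r)} - y_{m_r,\sigma(m_r)} \bigr].$$
Each bracketed term is the slack in the observed couple's budget, which vanishes under the maintained monotone-utility assumption on preferences. Thus the cycle sum is exactly zero, precluding the strict negativity that a blocking cycle would require. The main subtlety is the combinatorial bookkeeping that pairs each $Q_{m_r,\sigma(m_r)}$ with Lindahl-price coefficients summing to $P$; this relies both on the cycle closing ($m_1 = m_n$) and on the invariance of both prices and individual incomes across pairs, and would break if any of assumptions (i), (ii), or (iii) were relaxed --- precisely the point flagged in the text preceding the corollary and confirmed by the simulations in Section~\ref{sec:empirical}.
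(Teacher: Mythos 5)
Your proposal is correct and follows essentially the same route as the paper's proof: fix Lindahl prices that are constant across potential matches (you take $P/2$ each; the paper takes a generic constant split $P^m$ and $P-P^m$), then show that around any closed cycle the edge-weight sum regroups into the observed couples' budget identities $p\,q_{m,\sigma(m)}+P\,Q_{m,\sigma(m)}-y_{m,\sigma(m)}=0$ and hence vanishes, contradicting the strict negativity a blocking cycle would force via Lemma~\ref{lemma:UtilityAndAdjacency}. The only cosmetic difference is that you sum over the whole cycle at once while the paper removes one observed couple at a time.
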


\paragraph{Identifiability.}
The revealed preference conditions can be useful in the identification of intrahousehold allocation.  
As the rationalizability conditions are linear in nature, they can be used to identify the unobserved parameters of household allocation (such as individual private consumption or Lindahl prices).
Such an identification is usually in the form of set identification where the identified set contains all the feasible values of the unobserved parameter that are consistent with the stability conditions. 
For example, we can identify female's private consumption by defining an upper (lower) bound by maximizing (minimizing) the linear function ($q^w_{\sigma(w),w}$) subject to the rationalizability conditions.
\citealp{cherchye2017household} have demonstrated that the stability conditions with $\gV=\emptyset$ can tightly identify household allocations for non-committed couples.
Here we discuss the identifying power of the stability conditions in more general cases.

As previously discussed, if there is variation in prices and/or income across potential matches, it is possible for a data set to violate the revealed preference conditions by generating blocking cycles. 
Interestingly, however, this assumption is not sufficient to ensure that the model has identifying power for all type of couples. 
Corollary \ref{cor:MutualConsentTransfersNoIdentification} shows that if prices of private goods are the same across outside options, then the stability conditions with transfers cannot identify intrahousehold allocations for committed couples.
In other words, we can assign any possible intrahousehold allocation for committed couples without violating the rationalizability restrictions.
The nested structure of the revealed preference conditions implies that whenever the stability conditions with transfers lacks identifying power, so does the stability conditions without transfers.

\begin{cor}
\label{cor:MutualConsentTransfersNoIdentification}
If  $p_{m,w} = p_{m',w'}$ for every $m,m'\in M$ and $w,w'\in W$ and a data set $\mathcal{D}$ is rationalizable as a core with transfers allocation, then the model has no identifying power for committed couples.
Equivalently, if there exist $q^m_{m,\sigma(m)}, q^{\sigma(m)}_{m,\sigma(m)} \in \R^k_+$ and $P^m_{m,w}, P^w_{m,w} \in \R^K_{++}$
with
$$
q^m_{m,\sigma(m)} + q^{\sigma(m)}_{m,\sigma(m)} = q_{m,\sigma(m)} \text{ and } P^m_{m,w} + P^w_{m,w} = P_{m,w},
$$
such that the data set is rationalizable as a core with transfers allocation, then for any committed couple $(i, \sigma(i)) \in \gV$ and any allocation $\bar{q}^i_{i,\sigma(i)}, \bar{q}^{\sigma(i)}_{i,\sigma(i)} \in \R^k_+$  with $\bar{q}^i_{i,\sigma(i)} + \bar{q}^{\sigma(i)}_{i,\sigma(i)} = q_{i,\sigma(i)}
$
there exist
$$
\bar q^m_{m,\sigma(m)}, \bar q^{\sigma(m)}_{m,\sigma(m)} \in \R^k_+ \text{ for all $m \neq i$ and } \bar P^m_{m,w}, \bar P^w_{m,w} \in \R^K_{++} \text{ for all $(m, w)$}
$$
with
$$
\bar q^m_{m,\sigma(m)} + \bar q^{\sigma(m)}_{m,\sigma(m)} = q_{m,\sigma(m)} \; \text{ and } \; \bar{P}^m_{m,w} + \bar{P}^w_{m,w} = P_{m,w},
$$
such that the data set is rationalizable as a core with transfers allocation.
\end{cor}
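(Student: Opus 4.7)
The plan is to invoke Theorem \ref{thm:PathMonotonicity} in both directions: I would extract a path-monotonicity certificate for the original data, perturb only the committed couple's private split, and show that the perturbed tuple still certifies path monotonicity.

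First I would apply Theorem \ref{thm:PathMonotonicity} to $\gD$ to obtain a certificate $(q^m_{m,\sigma(m)}, q^w_{\sigma(w),w}, P^m_{m,w}, P^w_{m,w})$. I would then build the perturbed certificate by replacing only the split of couple $(i,\sigma(i))$: take $\bar q^i_{i,\sigma(i)}$ and $\bar q^{\sigma(i)}_{i,\sigma(i)}$ as prescribed (they sum to $q_{i,\sigma(i)}$ by hypothesis), set $\bar q^m_{m,\sigma(m)} = q^m_{m,\sigma(m)}$ for every other couple, and keep $\bar P^m_{m,w} = P^m_{m,w}$ and $\bar P^w_{m,w} = P^w_{m,w}$ for every potential pair. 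The task reduces to showing that the new edge-weight system $\bar A(\gD)$ continues to satisfy path monotonicity.

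The central observation is that, under the constant-private-price assumption $p_{m,w}\equiv p$, the split of $(i,\sigma(i))$ enters an edge weight only through the scalar $p\cdot q^i_{i,\sigma(i)}$ on any edge outgoing from vertex $(i,\sigma(i))$ and through $p\cdot q^{\sigma(i)}_{i,\sigma(i)}$ on any edge incoming to vertex $(i,\sigma(i))$. Consequently, if a path of remarriages contains both an edge outgoing from and an edge incoming to $(i,\sigma(i))$, then both terms appear exactly once in the sum $\sum_{r=1}^{n-1} a_{m_r,\sigma(m_{r+1})}$, contributing the invariant quantity $p\cdot q_{i,\sigma(i)}$; in every other case the vertex $(i,\sigma(i))$ either sits at an endpoint of a non-cyclic path or does not appear at all.

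The main step is then a case split over paths for which conditions (i) and (ii) of Definition \ref{def:PathMonotonicity} impose a binding requirement on the weight-sum, namely cycles ($m_1 = m_n$) and non-cyclic paths with both endpoint couples outside $\gV$. For a cycle, every vertex---including $(i,\sigma(i))$---is interior, so both split terms appear and the sum is invariant under the perturbation. For a non-cyclic path with both endpoints outside $\gV$, the committed couple $(i,\sigma(i)) \in \gV$ cannot occupy an endpoint, so whenever it appears it is interior, with both split terms again present and yielding an invariant sum. Since the sum is unchanged on every such path and was nonnegative under the original certificate, the perturbed certificate satisfies path monotonicity; Theorem \ref{thm:PathMonotonicity} then gives rationalizability of the perturbed tuple. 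The hard part is the careful case analysis identifying which paths are constrained by path monotonicity and verifying that in each such case both contributions $p\cdot q^i_{i,\sigma(i)}$ and $p\cdot q^{\sigma(i)}_{i,\sigma(i)}$ appear simultaneously, ensuring the cancellation $p\cdot(\bar q^i_{i,\sigma(i)} + \bar q^{\sigma(i)}_{i,\sigma(i)} - q^i_{i,\sigma(i)} - q^{\sigma(i)}_{i,\sigma(i)}) = 0$.
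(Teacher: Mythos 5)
Your proposal is correct and follows essentially the same route as the paper's proof: both arguments rest on the observation that, with a common private-goods price $p$, any path on which path monotonicity is binding (a cycle, or a path with both endpoint couples non-committed) must contain the committed couple only as an interior vertex, so its outgoing edge carries $p\,\bar q^{\,i}_{i,\sigma(i)}$ and its incoming edge carries $p\,\bar q^{\,\sigma(i)}_{i,\sigma(i)}$, which sum to the split-invariant quantity $p\,q_{i,\sigma(i)}$. The only difference is presentational — you verify invariance of the weight-sum directly over all binding paths, whereas the paper argues by contradiction from a hypothetical violating path — and your explicit endpoint case analysis is, if anything, slightly more careful than the paper's.
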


\noindent
Corollary \ref{cor:MutualConsentTransfersNoIdentification} indicates that using marital stability as an identifying assumption for committed couples necessitates price variation for private goods across potential matches.  In this context, we remark that price variation is typically required in revealed preference methods to gain identifying power (see, e.g., \citealp{varian1982nonparametric,beatty2011demanding,cherchye2015sharing}). Similarly, in the collective household literature, many traditional parametric methods also rely on price variation for identification (see, e.g., \citealp{browning1998efficient,chiappori2009microeconomics,browning2013estimating}).
In the context of marriage markets, identifying intrahousehold allocation for committed couples requires price variation across potential matches. In Section \ref{sec:empirical}, we reference empirical literature suggesting that such price variation is likely to be present in real-life situations.

\section{Simulation Analysis}
\label{sec:empirical}
We demonstrate the empirical performance of the stability conditions through a simulation exercise. 
Corollaries \ref{cor:MutualConsentTransfersNoEmpiricalContent} and \ref{cor:MutualConsentTransfersNoIdentification} outline the minimal data requirements for our model to be refutable or identifiable. 
However, the mere presence of these requirements does not guarantee that the model can be refuted or identified.
As there is no clear theoretical approach to show that these minimal requirements ensure empirical power for the stability conditions, we rely on a simulation exercise. This exercise allows us to illustrate realistic data settings in which marital stability can be tested and intrahousehold allocations can be identified for committed couples. 
In this simulation, we introduce random variation in the counterfactual prices and income that individuals face in their outside options. 
We consider the goodness-of-fit and the tightness of the bounds that the stability conditions recover for the within-household sharing pattern. 
We conclude that with small variations in prices, the path monotonicity condition generates very tight bounds.  

\subsection{Data and Setup}
\paragraph{Data.}Our analysis is based on a sample of households drawn from the LISS panel, a representative survey of households in the Netherlands conducted by CentERdata.\footnote{
While the Netherlands has a policy of allowing unilateral divorce, we choose to use this data set for the illustration as our main aim is to explore the empirical tractability of the rationalizability conditions. Furthermore, this data set has been utilized by numerous studies that employ the collective household model. In particular, \citealp{cherchye2017household} use the same data set to identify intrahousehold resource allocation in the context of unilateral divorce.}
The survey collects rich data on economics and sociodemographic variables at both individual and household levels. 
We consider a sample of 632 households comprising of 264 couples, 170 single males, and 198 single females. This sample is formed by individuals with or without children, working at least 10 hours per week in the labor market, and aged between 25 and 65. 
This data set was also studied by \citealp{cherchye2017household}; we refer to this paper for further details on the data and sample selection procedure.  

In terms of the set-up, we consider a labor supply setting in which a household spends its full potential income on leisure, private consumption, and public consumption. In the data, we observe both aggregated household expenditure as well as some assignable expenditure. Following \citealp{cherchye2017household}, we use all expenditure information to form a Hicksian good with price normalized to one.
We assume that the non-assignable expenditure is equally divided between public and private consumption. For leisure, we take the price to be the individual's hourly wage.

To deal with large sample size, we use subsampling to bring the stability conditions to data (similar to \citealp{browning2021stable}). We randomly draw 100 subsamples of 50 households from the original sample.\footnote{
    We have conducted robustness checks to assess the sensitivity of our results to the subsampling procedure by considering alternative subsample sizes of 70 and 100. Increasing the size of the subsamples leads to higher empirical content of the rationalizability constraints. This leads to lower stability indices and sharper upper and lower stable bounds (i.e., tighter identification). Our main qualitative conclusions remain intact; see Appendix \ref{app:simulation:robustness}.
}
For each subsample, we consider 11 different marriage market scenarios by varying the share of committed couples to take one of the following values $\{$0\%, 10\%, 20\%, $\cdots$ 100\%$\}$, each time assigning the commitment status of couples randomly. Then for each subsample-marriage market scenario, we apply the revealed preference methods separately. In what follows, we will report the summary results for these 100 subsamples.

We note that our subsamples include both committed and non-committed couples, as well as singles. Including singles allows for the possibility that married individuals consider a single person of the opposite gender as a potential mate, bringing our simulation setting closer to real-world marriage markets.\footnote{
    \citealp{cherchye2017household} also include singles in their sample.}
However, in this context, our theoretical result in Corollary \ref{cor:MutualConsentTransfersNoEmpiricalContent} has limited availability, as the data will always contain some non-committed couples, either directly or in the form of  singles.
We present a simulation exercise without singles in Appendix \ref{app:simulation:couples}, which allows us to demonstrate that the result of Corollary \ref{cor:MutualConsentTransfersNoEmpiricalContent} holds.

\paragraph{Simulation Set-up.}
Recall that while defining the rationalizability of a data set, we assumed that the data set contains a matching function, a set of committed couples,  household consumption for all matched couples, and price and income faced by all potential pairs. All of these, except prices and income faced by potential pairs, are present or can be inferred from the information recorded in typical household surveys. Existing empirical applications assume that prices are the same within and outside marriage and household incomes are the sum of individual incomes (see \citealp{cherchye2017household,cherchye2020marital,browning2021stable,bostyn2023time}). We examine the empirical content of our rationalizability conditions under two simulation scenarios that deviate from these assumptions. For each subsample-marriage market scenario, we consider two simulation set-ups with: (1) variation in prices and (2) variation in income. Below we outline how we introduce variation in prices and income across potential matches and briefly discuss why we might expect such variation to be present in reality.

{\it Price variation.} Recent empirical research has documented the existence of substantial price dispersion even for relatively homogeneous goods both across space (e.g., because of price differences across regions or differences across stores within a given geographical location) and across time within space (e.g., because of high-frequency sales) (see \citealp{aguiar2007life,griffith2009consumer,kaplan2015morphology,kaplan2019relative,nevo2019elasticity}). In our setting, both sources of price variation become relevant. If individuals relocate for the purpose of marriage, they may encounter different prices, for example, due to differences in housing costs or local market structure. Even within a local market, potential couples may experience different prices because they may shop at cheaper stores or at the same store but on different days.

We simulate variation in prices of private and public goods across potential matches through an additive random component. Specifically, we assume that prices for both private and public Hicksian consumption are equal to one in the current marriages and are defined as follows in potential marriages:
    \begin{equation*}
    \begin{split}
        p_{m,w} = 1 + \alpha \epsilon^p \; \text{where} \; \epsilon^p \sim U[-1,1] \; \text{and}  \; \alpha \in \lbrace 0\%, 1\%, 3\%, 5\% \rbrace, \\
        P_{m,w} = 1 +  \alpha \epsilon^P\; \text{where} \; \epsilon^P \sim U[-1,1] \; \text{and}  \; \alpha \in  \lbrace 0\%, 1\%, 3\%, 5\% \rbrace.
        \end{split}
    \end{equation*}
When $\alpha = 0\%$, there is no variation in prices across counterfactual matches, this is the base scenario. Increasing $\alpha$ increases the price variation across the outside options. For example, when $\alpha = 1\%$, prices in potential matches lie between 0.99 and 1.01, while $\alpha = 5\%$ implies prices ranging from 0.95 to 1.05. 

{\it Income variation.} 
 The existence of marital wage premium - the fact that married individuals earn more than their single counterparts - is a well-established empirical result (\citealp{korenman1991does}). It has been shown that this marital wage premium is not because of selection (that productive men are more likely to marry) but because marriage causes individual wages to rise (\citealp{antonovics2004all}). Additionally, recent evidence has shown that the marital wage premium depends on the partners' characteristics (\citealp{wang2013marriage,pilossoph2021household}). These findings provide support to our simulation set-up that household incomes may be different than the sum of individual members' current income depending on the marital status (if the exit option is becoming single) or the new partners' characteristics (if the exit option is marrying someone else).
    
We simulate variation in household incomes through a multiplicative random component. Specifically, we assume that household incomes in the current marriages are the sum of individual labor incomes, while household incomes in outside options are defined as follows:
    \begin{equation*}
        \begin{split}    
        y_{m,f} &= (y_m  + y_f) (1 + \alpha \epsilon^y) \; \text{where} \; \epsilon^y \sim U[-1,1] \; \text{and}  \;  \alpha \in \lbrace 0\%, 1\%, 3\%, 5\% \rbrace.
        \end{split}        
    \end{equation*}
The random component ($\alpha \epsilon^y$) can be interpreted as the gain/loss associated with the new match. When $\alpha = 0\%$, there is no gain or loss associated with the outside options, and the household income is the sum of individual incomes, that is the base scenario. Increasing $\alpha$ increases the random variation in income across counterfactual marriages. For instance, when $\alpha = 1\%$, there can be at most a 1\% gain or loss of household income in a new marriage, while when $\alpha = 5\%$, couples can experience at most a 5\% loss or gain in household total income.

\subsection{Estimation Procedure}
\paragraph{Goodness-of-fit.}
The stability conditions in Definitions \ref{def:PathConsistency} and \ref{def:PathMonotonicity} can be checked by simple linear programming techniques (see Appendix \ref{app:LinearProgramming}). This defines sharp tests for rationalizable consumption behavior: either the data satisfy the stability conditions or they do not. When the data do not satisfy the exact stability conditions, it is interesting to evaluate the degree of violation as the data may not be exactly rationalizable but close to satisfying the exact conditions.\footnote{
    In reality, household consumption behavior may not be exactly consistent with the model if, for example, the data contain measurement errors, there are frictions in the marriage market, or other factors, such as match quality, affect marital behavior. If the data are close to satisfying the exact conditions, we may want to include such data in the empirical analysis. The procedure we use can be applied to model such almost rationalizable behavior.}
To this end, we follow \citealp{cherchye2017household} by evaluating the goodness-of-fit of a model by introducing stability indices. These stability indices allow us to quantify the degree to which the data is consistent with the rationalizable behavior.

Formally, we include a stability index $s_{m,w}$ in each edge weight $a_{m,w}$ associated with the potential pair $(m,w)$. Specifically, we redefine $a_{m,w}$ as
$$
a_{m,w} = p_{m,w} ( q_{m,\sigma(m)}^m +  q_{\sigma(w),w}^w) + P^m_{m,w} Q_{m,\sigma(m)} + P^w_{m,w} Q_{\sigma(w),w} - y_{m,w} s_{m,w}
$$
and add the restriction $0 \leq s_{m,w} \leq 1$. Imposing $s_{m,w} = 1$ gives the original rationalizability restrictions, while imposing $s_{m,w}= 0$ would rationalize any data. 
Intuitively, the stability indices measure the loss of post-divorce income needed to represent the observed marriages as stable.
Generally, a lower stability index corresponds to a higher income loss associated with a particular outside option and can be interpreted as a greater violation of the underlying model assumptions. In other words, a higher stability index signals a better fit of the stability conditions. 

For a given data set $\mathcal{D}$, we compute the values of stability indices by computing
\begin{equation*}
    \max \sum_{m} \sum_{w} s_{m,w}
\end{equation*}
subject to the linear rationalizability conditions. Intuitively, maximizing this objective function boils down to introducing minimum adjustment in the data such that the conditions are satisfied. The solution to this optimization problem gives a stability index $s_{m,w}$ for each potentially blocking pair $(m,w)$.
If the original constraints are satisfied, then there is no need for adjustment, and all stability indices will be equal to one. 
Otherwise, a strictly smaller index will be required to rationalize the behavior.
Thus, the data is rationalizable if and only if the indices equal one. 

\paragraph{Identification.}
By using the computed stability indices, we can address identification of intrahousehold sharing patterns. Specifically, by multiplying the solution values of $s_{m,w}$ with the original income levels $y_{m,w}$, we obtain an adjusted dataset that is rationalizable by a stable matching. These stability indices define minimally adjusted data sets which then allows us to ``set identify'' the unknowns in the stability conditions (e.g., individual private consumption or personalized prices) from Definitions \ref{def:PathConsistency} and \ref{def:PathMonotonicity}. In practice, set identification requires computing upper and lower bounds that define an interval containing all parameter values that are consistent with the conditions.

In our simulation analysis, we focus on identifying the private consumption shares of the females. 
For any woman $w$, her private consumption share $q^{w}_{\sigma(w),w}/q_{\sigma(w),w}$ is linear in unknowns and thus we can define an upper (lower) bound by maximizing (minimizing) this objective function subject to the linear rationalizability conditions.
This effectively set identifies her private consumption share. 
The size of the interval indicates the identifying power of the stability conditions:  closer upper and lower bounds indicate a more informative identification.

\subsection{Estimation Results} 
\paragraph{Goodness-of-fit.} We begin by examining the extent to which the simulated data meet the rationalizability restrictions. 
Using the procedure outlined above, we compute the minimal adjustments required in the data to satisfy the rationalizability conditions.
These adjustments, quantified by stability indices, represent the minimal adjustments in post-divorce income needed to obtain consistency with the rationalizability conditions. As previously mentioned, the stability indices serve as indicators of the empirical content of the models. A higher stability index implies that the simulated data are closer to satisfying the exact conditions.

Figure \ref{fig_dcavg} presents the goodness-of-fit results. Sub-figures (a) and (b) correspond to the simulation scenarios with variations in prices and incomes, respectively.  
In each figure, plots with filled markers represent path consistency results, while plots with hollow markers represent path monotonicity results. The marker shapes indicate different $\alpha$ values, denoting the degree of price/income variation. 
Each plot shows the mean value of the average stability indices across 100 subsamples at a given share of committed couples in the market. 

\begin{figure}[htbp]
\caption{Goodness-of-fit}%
    \label{fig_dcavg}%
    \centering
    \subfloat[\centering Price variation]{{\includegraphics[scale=0.25]{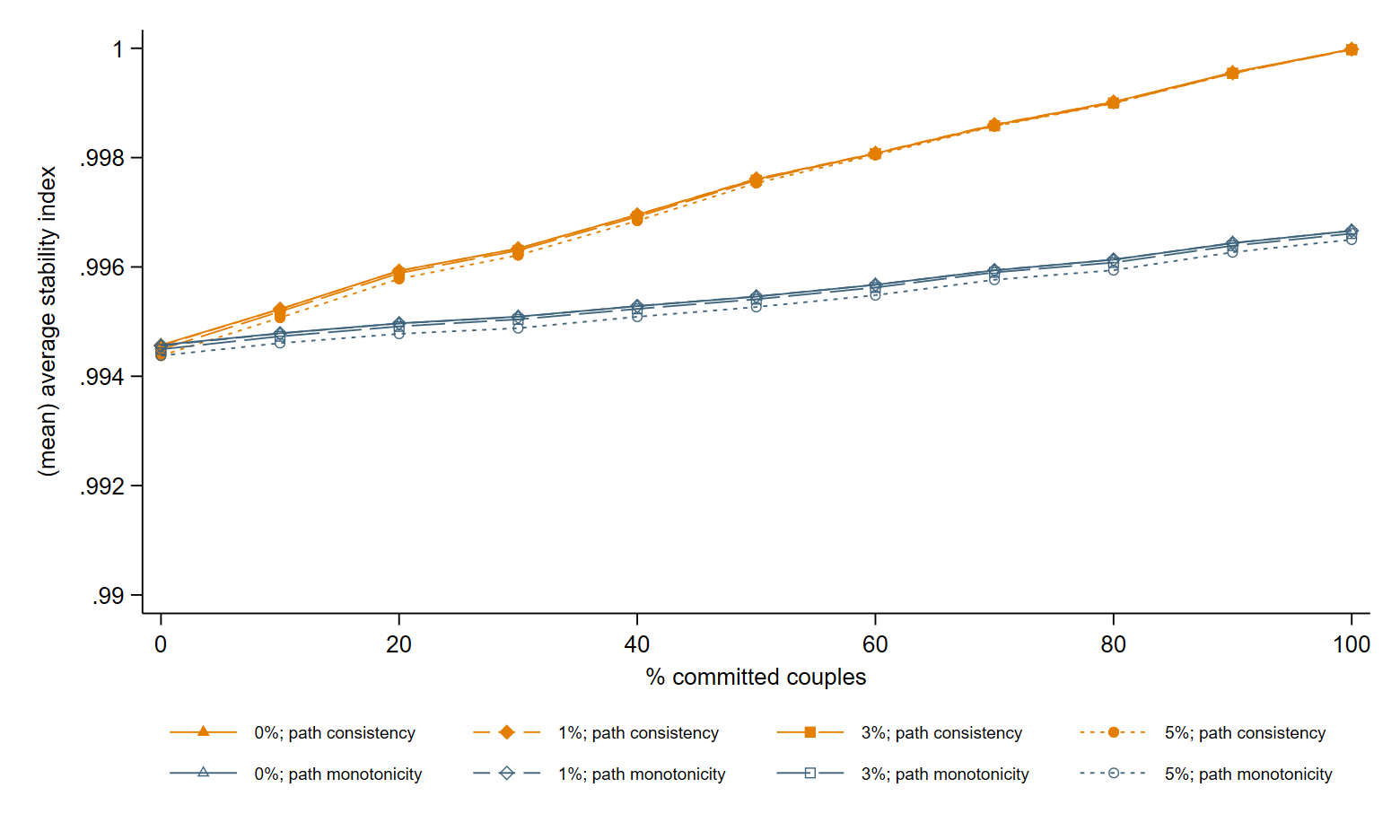} }}%
    \qquad
    \subfloat[\centering Income variation]{{\includegraphics[scale=0.25]{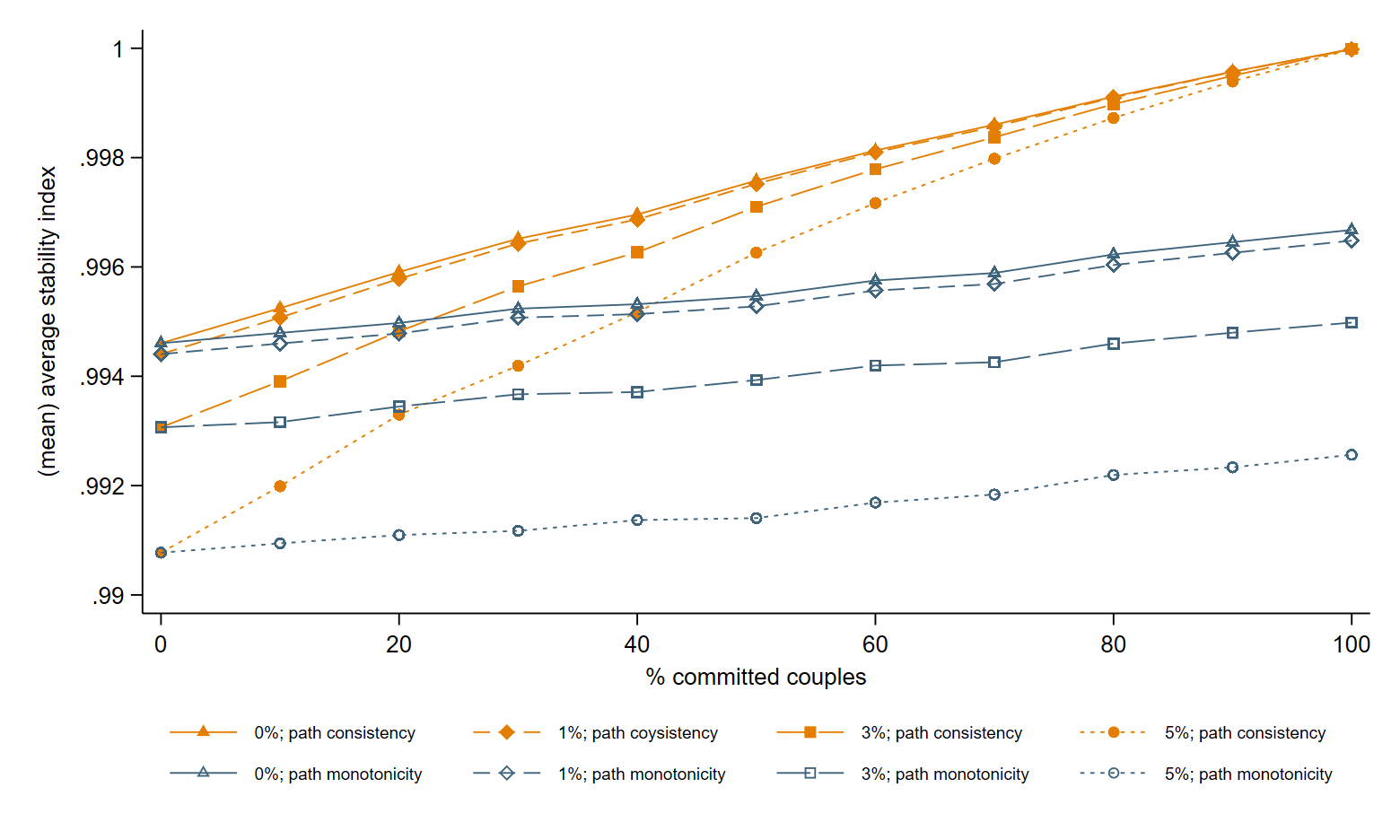} }}%
\end{figure}

We observe several interesting patterns. 
First, in most cases, the average stability index is strictly below one. This indicates that both path consistency and path monotonicity conditions exhibit empirical content, as the simulated data do not fully satisfy the exact condition. The only exception is when path consistency is imposed and all couples are committed. 
Second, there is substantial variation in the average stability indices across different marriage market and simulation scenarios. This suggests that the empirical power of the conditions may vary depending on the level of commitment and degree of price/income variation. We find that the empirical power of the conditions is higher when there is greater degree of price/income variation or smaller share of committed couples in the market. This indicates that the conditions lose some empirical bite as couples become committed.  
Third, given the nested structure of the models, the empirical power is generally higher  under path monotonicity than under path consistency. 
The testable implications of the two stability conditions coincide when none of the couples are committed, resulting in identical stability indices for both rationalizability conditions. 
This result is the consequence of the theoretical fact that if no couples are committed, then both path consistency and path monotonicity collapse to the result from \citealp{cherchye2017household}.

\paragraph{Identification.} 
We use the identified stability indices summarized above to construct new simulated data sets that are rationalizable as core allocations. These adjusted data allow us to set identify the intrahousehold sharing patterns that correspond to stable marriage behavior. We focus on committed couples and estimate the upper and lower bounds for the female share of private consumption in these households.

To demonstrate the identifying power of the stability conditions, we compare our estimated upper and lower bounds with so-called naive bounds.
These naive bounds are defined using the assignable information present in the data and do not make use of the theoretical restrictions associated with marital stability. They are defined as follows. For any woman, the naive lower bound equals the fraction of her assignable private consumption in the total private consumption. The naive upper bound adds the nonassignable private consumption share to the lower bound. Thus, the lower (upper) bound corresponds to the scenario when all nonassignable private consumption is allocated to the male (female). 

We compare the tightness of the ``stable'' bounds, which we obtain through our method, with  naive bounds. The tightness of the stable bounds ($\Delta^s$) are defined as the difference between our estimated upper and lower bounds. Similarly, the tightness of the naive bounds ($\Delta^n$) are defined as the difference between the upper and lower naive bounds.
We compute the relative difference between $\Delta^s$ and $\Delta^n$  (i.e., $\frac{\Delta^{n} - \Delta^{s}}{\Delta^{n}}$) which measures the extent to which our stable bounds are tighter than the naive bounds. In a sense, it quantifies the identifying power that follows from the stability assumptions. If our identification method yields point identification, the relative difference between the stable and naive bounds will be one. On the other hand, if the stability conditions have no identifying power, the relative difference between the two bounds will be zero. Thus, higher relative difference implies higher identification power of the stability restrictions. 

The results of this comparison are summarized in Figure \ref{fig_shares}. Similar to Figure \ref{fig_dcavg}, sub-figures (a) and (b) correspond to the simulation scenarios with variations in prices and income, respectively. Plots with filled markers represent path consistency results, while plots with hollow markers represent path monotonicity results. The marker shapes indicate different $\alpha$ values, denoting the degree of price/income variation. Each plot shows the mean value of the relative difference between the stable and naive bounds among the committed couples in the market across 100 subsamples. 

\begin{figure}[htbp]
    \centering
    \caption{Relative difference in bounds for committed couples}%
    \label{fig_shares}%
    \subfloat[\centering Price variation]{{\includegraphics[scale=0.25]{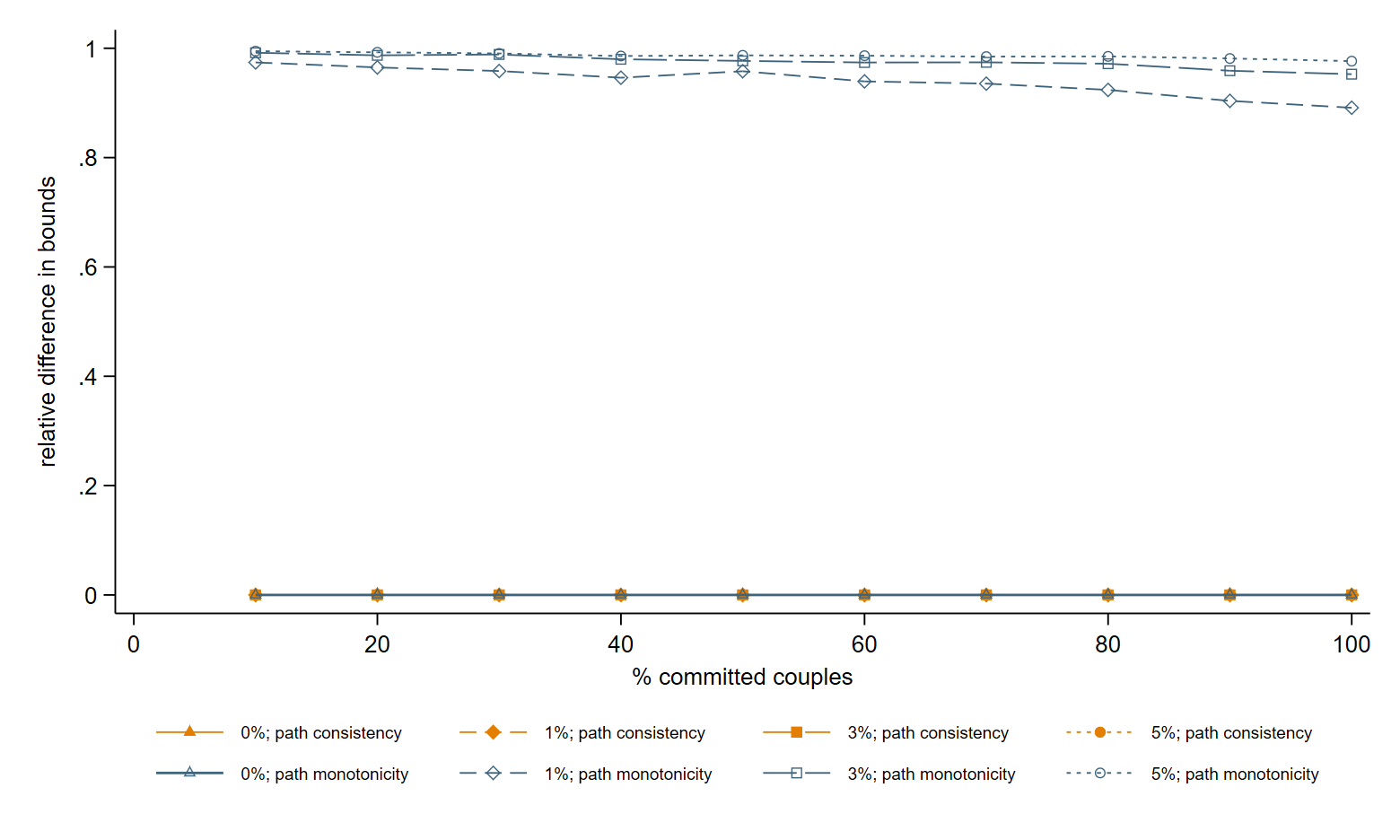} }}%
    \qquad
    \subfloat[\centering Income variation]{{\includegraphics[scale=0.25]{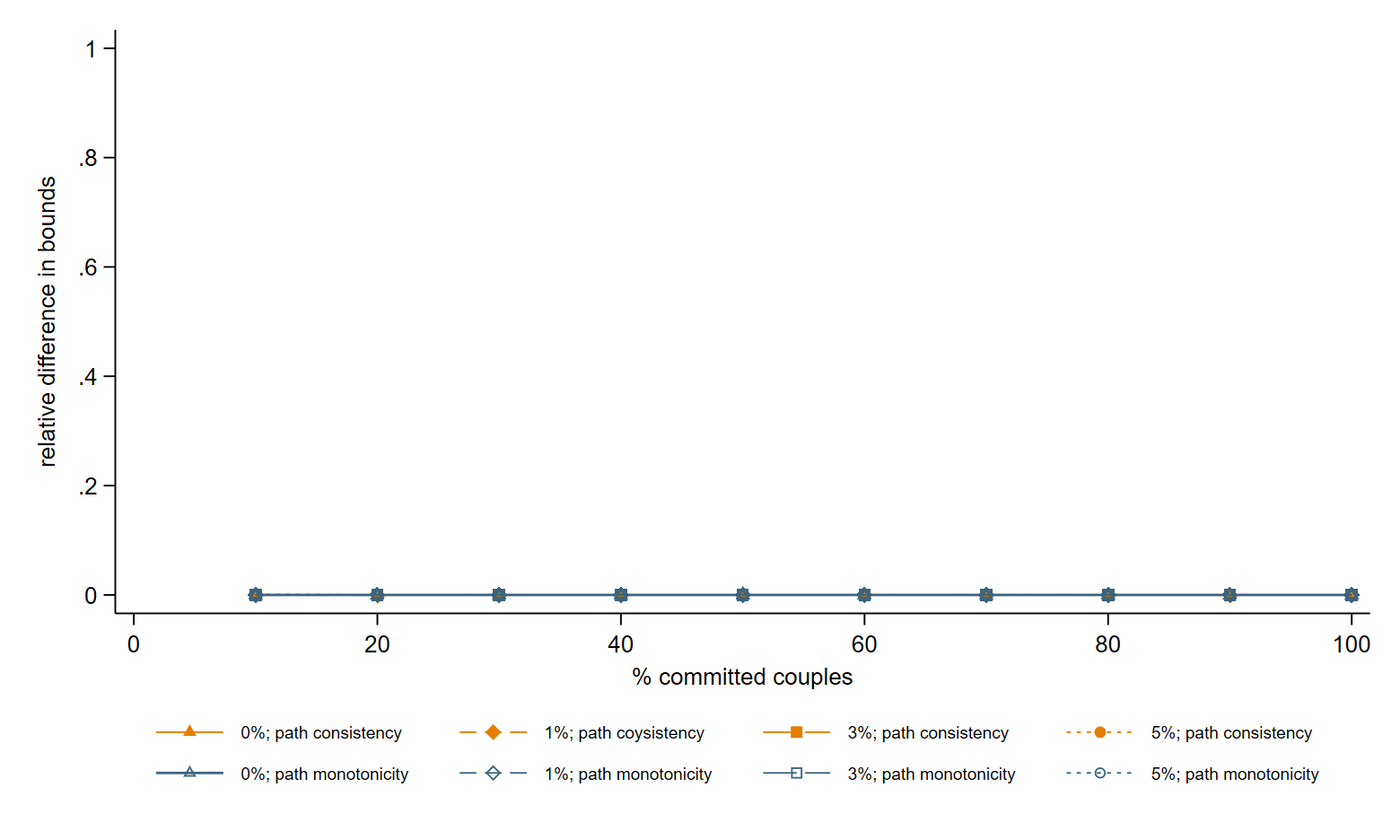} }}%
\end{figure}

From the two sub-figures, we learn that the path consistency condition lacks identifying power for committed couples, even when there is variation in prices or income faced by potential pairs. 
The same conclusion applies to the path monotonicity condition when there is no price variation for potential pairs. Sub-figure (b) shows that adding income variation does not enhance the identifying power of path monotonicity. This observation is in line with our theoretical result in Corollary 2 - if prices are the same across outside options and transfers are allowed, marital stability has no empirical bite for committed couples.\footnote{
    Figure \ref{fig_shares_noncommitted} in Appendix \ref{app:simulation:noncommitted} shows that the stable bounds for non-committed couples are very informative in all scenarios.} 
Interestingly, comparing Figures \ref{fig_dcavg} and \ref{fig_shares} reveals that model being refutable is necessary but not sufficient for identifying intrahousehold allocation. 
We observe several scenarios where the average stability index is strictly less than one, yet the stable bounds are as wide as the naive bounds.
On a more positive note, we find that path monotonicity exhibits substantial identifying power when there is price variation across marriages. Even a small degree of price variation ($\alpha = 1\%$) leads to almost point identification of female private consumption shares. Additionally, the tightness of the path monotonicity-based stable bounds increases with larger price variation or a higher share of non-committed couples in the market.

\paragraph{Summary.} 
Our simulation analysis yields two key conclusions. 
First, we demonstrate that both stability conditions are generally empirically significant, meaning not all observed behaviors can be explained as outcomes of stable marriages. 
The only situation where these conditions lack empirical content (non-refutable) is when the data consists solely of committed couples with no variation in price or income among the potential pairs. 
Second, while these conditions are quite powerful for identifying intra-household allocations among non-committed couples, they have limited ability to identify sharing patterns among committed couples. 
However, if the path monotonicity condition is applied and there is variation in the prices faced by potential pairs, it is possible to estimate informative bounds even for committed couples.

\section{Conclusion}
\label{sec:conclusion}
We presented a novel framework for analyzing rational household consumption under the assumption of marital stability with committed couples. 
The existing literature mostly considers a non-commitment framework, where individuals can divorce their partners at will. 
We extended the framework by generalizing the setting to allow some couple to divorce only with the consent of their current partner.
We began by showing that the core of the game is non-empty.
Next, we provided revealed preference characterization for cases when ex-partners are and aren't allowed to transfer money to each other upon divorce.
The stability conditions are linear in nature and can be used to (partially) identify the intrahousehold resource allocation. The key features of such identification are that it (i) does not make parametric assumptions about utility functions, (ii) allows for individual heterogeneity of preferences, and (iii) uses cross-sectional household data.
We presented a theoretical analysis of the empirical content (whether a data set can reject the model) and identifiability (whether intrahousehold consumption patterns can be identified) of the models. 
Along these lines, we identified the necessary conditions on data under which the models have empirical bite for rationalizability and identification of intrahousehold allocation for committed couples.
Finally, using a simulation exercise, we provided empirical support for our theoretical results.
We showed that in the presence of price variation, the testable implications of marital stability with transfers exhibit significant empirical bite for committed couples.

\appendix
\section{Proofs}
\label{app:Proofs}

\subsection{Proof of Proposition \ref{prop:NonEmptyCore}}

We begin by providing two supplementary lemmas.
Lemma \ref{lemma:Coalition2Path} shows that a permissible coalition can be represented as a permissible path or cycle of remarriages on a graph. 
Lemma \ref{lemma:NegativeNetTransfer} shows that in a permissible coalition with transfers, there exists at least one pair with non-positive net transfers.

For a given data set $\gD$, consider the graph representation of the marriage market and the notion of path of remarriages we defined in Section \ref{sec:results}. To recall, a path of remarriages $\rho = (m_1, (m_2,\sigma(m_2)), \cdots, (m_{n-1}, \sigma(m_{n-1})), \sigma(m_n))$ constitutes a set of agents $S = \{m_1,\ldots, m_{n-1}, \sigma(m_2),\ldots,\sigma(m_{n})\}$ and a matching $\hat\sigma: S\rightarrow S$ such that $\hat\sigma(m_j) = \sigma(m_{j+1})$ for every $1 \leq j < n$.
In the marriage market graph, $\rho$ is a path formed by a sequence of vertices $((m_1, \sigma(m_1)),(m_2,\sigma(m_2)),\ldots, (m_{n},\sigma(m_{n})))$ and a sequences of edges $((m_1, \sigma(m_2)),(m_2,\sigma(m_3)),\ldots, (m_{n-1},\sigma(m_{n})))$. 
The edges in this path specify who is remarrying whom. 
Note that $\rho$ forms a permissible coalition if either $m_n = m_1$ or $(m_1,\sigma(m_1))$ and $(m_n,\sigma(m_n)) \notin \gV$.
We show in Lemma \ref{lemma:Coalition2Path} that a permissible coalition implies existence of a permissible subcoalition that can be represented as a permissible path of remarriages.

\begin{lemma}
\label{lemma:Coalition2Path}
Let $S\subseteq M\cup W$ be a permissible coalition endowed with rematching $\hat \sigma$, then there is a path of remarriages, 
$$
\hat \rho = (m_1, (m_2,\sigma(m_2)), \ldots, (m_{n-1},\sigma(m_{n-1})), \sigma(m_n))
$$
such that $\hat\sigma(m_j) = \sigma(m_{j+1})$  for all $ 1 \leq j < n$, and either $m_n = m_1$ or $(m_1,\sigma(m_1))$ and $(m_n,\sigma(m_n))\notin \gV$.

\end{lemma}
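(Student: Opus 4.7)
The plan is to construct the desired path inside any given permissible coalition $(S,\hat\sigma)$ by chasing the rematching chain bidirectionally from a single remarried agent, terminating either in a cycle or at non-committed endpoints. First dispense with the degenerate case: if $\hat\sigma(m)=\sigma(m)$ for every $m\in S\cap M$ then nobody is actually remarried and the conclusion holds trivially with a single-vertex path. Otherwise, pick any $m^{*}\in S\cap M$ with $\hat\sigma(m^{*})\ne \sigma(m^{*})$ and set $m_{1}:=m^{*}$.

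Forward extension. Inductively, given the current tail $m_{j}$, set $w_{j}:=\hat\sigma(m_{j})\in S$ (it belongs to $S$ because $\hat\sigma:S\to S$) and define $m_{j+1}:=\sigma(w_{j})$, the original partner of $w_{j}$. Three sub-cases arise:
\begin{itemize}
    \item[(a)] $m_{j+1}\in S$ and $m_{j+1}=m_{i}$ for some earlier index $i$: truncate the sequence to $m_{i},m_{i+1},\ldots,m_{j+1}$ and output this cycle, so $m_{1}=m_{n}$;
    \item[(b)] $m_{j+1}\in S$ and is new: append it and continue;
    \item[(c)] $m_{j+1}\notin S$: stop and set $n:=j+1$. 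Since $\sigma(m_{n})=w_{j}\in S$ while $m_{n}\notin S$, permissibility of $S$ forces $(m_{n},\sigma(m_{n}))\notin \gV$.
\end{itemize}
Finiteness of $S$ guarantees that the process must terminate in one of (a) or (c). The backward extension is entirely symmetric: starting from $m_{1}$, consider $\tilde w:=\sigma(m_{1})$; if $\tilde w\notin S$ stop (permissibility gives $(m_{1},\sigma(m_{1}))\notin \gV$); if $\tilde w\in S$ set $m_{0}:=\hat\sigma(\tilde w)$, detect any cycle, and otherwise prepend and iterate.

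By construction $\hat\sigma(m_{j})=\sigma(m_{j+1})$ for every $j\le n-1$, delivering (i). For (ii), either a cycle closed during the forward or backward process (so $m_{1}=m_{n}$), or both extensions terminated via case (c), in which case both endpoint couples $(m_{1},\sigma(m_{1}))$ and $(m_{n},\sigma(m_{n}))$ are non-committed. In either event, the sequence $(m_{1},(m_{2},\sigma(m_{2})),\ldots,(m_{n-1},\sigma(m_{n-1})),\sigma(m_{n}))$ is a path of remarriages of the desired form.

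The main obstacle is the bookkeeping surrounding termination. One must verify that the bidirectional search cannot exit through a remarried agent who already lives in $S$ without first detecting a cycle, which follows from $\hat\sigma$ being a bijection on $S$ and $S$ being finite. One must also confirm that termination of type (c) genuinely forces the endpoint couple out of $\gV$; this is exactly the contrapositive of the definition of permissibility applied at the boundary of $S$. These two observations together make the construction clean.
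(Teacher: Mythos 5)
Your construction is correct and takes essentially the same approach as the paper's own proof: both chase the rematching chain $m \mapsto \sigma(\hat\sigma(m))$ through the coalition, closing a cycle if the chain returns to its start and otherwise invoking permissibility at the points where the chain exits $S$ to conclude that the endpoint couples lie outside $\gV$. The only cosmetic difference is that you search bidirectionally from an arbitrary remarried agent, whereas the paper first splits into two cases (some man's spouse outside $S$ versus none) and runs a single forward pass from a suitably chosen starting point.
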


\begin{proof}
We prove the Lemma by construction.
First, note that since $\hat \sigma$ is a matching, $\vert (M \cup \{\emptyset\}) \cap S\vert = \vert (W\cup \{\emptyset\}) \cap S\vert$.
Recall that $\lbrace \emptyset \rbrace$ corresponds to the set of virtual partners symbolizing the staying alone option.
Given that $S$ is a permissible coalition, an individual without their spouse in $S$ cannot be a committed couple.
In our construction, we consider two cases, (1) $\exists m\in S$ such that $\sigma(m)\notin S$ and (2) $\neg({\exists} m\in S$ such that $\sigma(m)\notin S$).
The first case would yield a path of remarriages (in a pure sense) and the second one would yield a cycle of remarriages. 
For both cases, we show an algorithm to construct $\hat \rho$.

\medskip
\noindent
\textbf{Case 1:} $\exists m\in S$ such that $\sigma(m)\notin S$

\noindent
\begin{itemize}
    \item [] Step 1:
    \begin{itemize}
        \item [] Let $m_1 = m\in S$ for some $m\in S$ such that $\sigma(m)\notin S$.\\
        (if there are multiple such $m$, pick one at random)
        
        \item [] Let $\sigma(m_2) = \hat \sigma(m_1)\in S$.\\
        ($\sigma(m_2)$ is always in $S$, otherwise $\vert M \cup \{\emptyset\}\cap S\vert \neq \vert W\cup \{\emptyset\} \cap S\vert$)
        
        \item [] Let $\hat \rho_1 = (m_1,\sigma(m_2))$.        
    \end{itemize}
    
    \item [] Step $i$ for $i \geq 2$:
    \begin{itemize}
        \item [] if $m_i\in S$, then
        \begin{itemize}
            \item [] Let $\sigma(m_{i+1}) = \hat \sigma(m_i)$
            \\
            ($\sigma(m_{i+1})$ is always in $S$, otherwise $\vert M \cup \{\emptyset\}\cap S\vert \neq \vert W\cup \{\emptyset\} \cap S\vert$)
            \item [] Let $\hat \rho_i = (\hat \rho_{i-1},m_i,\sigma(m_{i+1}))$.
        \end{itemize}
        \item [] else if $m_i\notin S$
        \begin{itemize}
            \item [] Let $\hat \rho = \hat \rho_{i-1}$.
            \item [] Let $n= i$.
        \end{itemize}
    \end{itemize}
\end{itemize}
The algorithm takes finite time (with a finite number of agents) and returns a sequence $\hat \rho$ that satisfies all the conditions for being a path of remarriages.
Note that because $m_n \notin S$ and $\sigma(m_1) \notin S$, it must be the case that $(m_1,\sigma(m_1))$ and $(m_n,\sigma(m_n))\notin \gV$.

\medskip
\noindent
\textbf{Case 2:} $\neg (\exists m\in S$ such that $\sigma(m)\notin S)$

\noindent
\begin{itemize}
    \item [] Step 1:
    \begin{itemize}
        \item [] Let $m_1 = m\in S$ for some $M$.
        \item [] Let $\sigma(m_2) = \hat \sigma(m_1)$.\\
        ($\sigma(m_{2}) \in S$, otherwise $\vert M \cup \{\emptyset\}\cap S\vert \neq \vert W\cup \{\emptyset\} \cap S\vert$)
        \item [] Let $\hat \rho_1 = (m_1,\sigma(m_2))$.       
        \end{itemize}
    \item [] Step $i$ for $i \geq 2$:
    \begin{itemize}
        \item [] if $m_i \neq m_1$, then
        \begin{itemize}
            \item [] Let $\sigma(m_{i+1}) = \hat \sigma(m_i)$
            \\
            ($\sigma(m_{i+1})$ is always in $S$, otherwise $\vert M \cup \{\emptyset\}\cap S\vert \neq \vert W\cup \{\emptyset\} \cap S\vert$)
            \item [] Let $\hat \rho_i = (\hat \rho_{i-1},m_i,\sigma(m_{i+1}))$.
        \end{itemize}
        \item [] else ($m_i = m_1$)
        \begin{itemize}
            \item [] Let $\hat \rho = \hat \rho_{i-1}$.
            \item [] Let $n= i$.
        \end{itemize}
    \end{itemize}
\end{itemize}
The algorithm takes finite time (with a finite number of agents) and returns a sequence $\hat \rho$ that satisfies all the conditions for being a path of remarriages. In particular, by construction we have $m_1 = m_n$.

\end{proof}

\noindent
Next, we consider a permissible coalition with transfers $(S,\hat\sigma,t)$. We show that if there exist at least one pair in the coalition which receives a positive net transfer, then there exists another pair that receives a negative net transfer. Recall that $t_{m,\sigma(m)}$ denotes the transfer that man $m$ commits to pay to his current match $\sigma(m)$ upon divorce. 
For a (potential) pair $(m,w) \in S$, the budget constraint is defined as 
$$
 p_{m,w} (q^m + q^w) + P_{m,w}Q \le  y_{m,w}  - t_{m,\sigma(m)}+t_{\sigma(w),w},
$$
and $- t_{m,\sigma(m)}+t_{\sigma(w),w}$ is the net transfers for the pair.

\begin{lemma}
\label{lemma:NegativeNetTransfer}
Let $(S,\hat\sigma,t)$ be a permissible coalition with transfers, if there is $(m,w)\in S\times S$ such that $\hat \sigma (m) = w$ and $-t_{m,\sigma(m)}+t_{\sigma(w),w)} > 0$, then there is $(m',w')\in S\times S$ such that $\hat \sigma (m') = w'$ and $-t_{m',\sigma(m')}+t_{\sigma(w'),w'} < 0$.
\end{lemma}

\begin{proof}
Given Lemma \ref{lemma:Coalition2Path} and $(S,\hat\sigma,t)$ we can focus on a path of remarriages $\rho = (m_1, (m_2,\sigma(m_2)), \ldots, (m_{n-1},\sigma(m_{n-1})), \sigma(m_n))
$ such that $\hat\sigma(m_j) = \sigma(m_{j+1})$  for all $1 \leq j < n$, and either $m_n = m_1$ or $(m_1,\sigma(m_1)),(m_n,\sigma(m_n))\notin \gV$.
To simplify the notation, let us denote $t_{m_i,\sigma(m_i)}$ as $t_i$.
We prove the Lemma by contradiction. Suppose there is a pair $t_{i+1}-t_i>0$ and no pair such that $t_{j+1}-t_j<0$.

\medskip
\noindent
\textbf{Case 1:} $(m_1,\sigma(m_1))$ and $(m_n,\sigma(m_n))\notin \gV$.

\noindent
Since both $(m_1,\sigma(m_1))$ and $(m_n,\sigma(m_n))$ are non-committed couples, we know that $t_1 = t_{n} = 0$. Given that $t_1\ge 0$ and $t_{j+1}-t_j\ge 0$ for all $j \le n-1$, it must be the case $t_{j+1} \ge t_j$.
Moreover, since there is a $i$ such that $t_{i+1} > t_i$, it must be that $t_n > t_1 \ge 0$. This is a contradiction as $t_n = 0$.

\medskip
\noindent
\textbf{Case 2:} $m_n = m_1$

\noindent
Since $m_1 = m_n$, we know that $t_1 = t_n$. Given that $t_{j+1}-t_j\ge 0$ for all $j \le n-1$, it must be the case $t_{j+1} \ge t_j$.
Moreover, since there is a $i$ such that $t_{i+1} > t_i$, it must be that $t_n > t_1$. This is a contradiction as $t_1 = t_n$.

\end{proof}

\paragraph{Proof of Proposition \ref{prop:NonEmptyCore}}

\begin{proof}
For a couple $(m,w)$, the household problem is given by
\begin{align*}
    \max_{q^m, q^w, Q} u^m(q^m, Q) &+ \mu u^w(q^w, Q) \text{ such that } \\
    p_{m,w}(q^m, q^w) &+ P_{m,w}Q \leq y_{m,w}
\end{align*}
where $\mu>0$ is the Pareto weight for $w$. The Pareto frontier of the couple's decision problem is continuous and strictly decreasing (see \citealp{cherchye2017household} for proof).
Building on this result, for a marriage game $\Gamma^{\emptyset} = (M, W, \emptyset, B, U)$ with no committed couples in the market (i.e., when $\gV = \emptyset$), a general result of \citealp{alkan1990core} shows that the core is nonempty. Therefore, we know that $C(\Gamma^{\emptyset}) \ne \emptyset$.
Let 
$
\Gamma = (M, W, \gV, B, U),
$
with $\gV \ne \emptyset$, be a game that is different from $\Gamma^{\emptyset}$ only by requiring some couples to be committed.
To complete the proof, we need to show that $C(\Gamma^{\emptyset}) \subseteq C_T (\Gamma) \subseteq C (\Gamma)$.

\medskip
\noindent
{\bf $C_T (\Gamma) \subseteq C (\Gamma)$.} 

\noindent
Suppose that there exist an allocation $\alpha$ such that $\alpha \in C_T(\Gamma)$ but $\alpha \notin C(\Gamma)$.
Then, there exists a permissible coalition $(S,\hat\sigma)$ that blocks $\alpha$.
Letting $t_{m,\sigma(m)}=0$ for every $m\in S$, we obtain a blocking coalition with transfers $(S,\hat\sigma,t)$ that blocks $\alpha$. This implies that $\alpha \notin C_T(\Gamma)$. 

\medskip
\noindent
{\bf $C(\Gamma^{\emptyset}) \subseteq C_T (\Gamma) $.}

\noindent
Consider an allocation $\alpha$ such that $\alpha \in C(\Gamma^{\emptyset})$ and assume that $\alpha \notin C_T(\Gamma)$.
Then there is a blocking coalition with transfers $(S,\hat\sigma,t)$.
By Lemma \ref{lemma:NegativeNetTransfer}, we know that either $t_{m,\sigma(m)}=0$ for every $m\in S$ or this coalition contains a pair $(m,w)\in S \times S$ such that $m=\hat\sigma(w)$ and $t_{\sigma(w),w}-t_{m,\sigma(m)}<0$.
In the first case, all rematched couples in the coalition are weakly blocking and at least one is blocking without any transfer of money. This immediately implies that $\alpha \notin C(\Gamma^{\emptyset})$, a contradiction.
In the second case, consider the pair $(m,w)$ with negative net transfer. Given that this pair is weakly blocking with reduced income after transfers and utilities are monotone then, without transfer, the couple's income will strictly increase allowing both $m$ and $w$ to obtain strictly higher utilities. 
Hence, if $\gV = \emptyset$, the pair $(m,w)$ will be a blocking pair. 

\end{proof}

\subsection{Proof of Theorem \ref{thm:PathConsistency}}

\begin{proof}
{\bf Necessity.}
Assume that the data set is rationalizable as a core allocation.
Then, for any permissible path of remarriages $\rho = (m_1, (m_2,\sigma(m_2))$, $\ldots$, $(m_{n-1},\sigma(m_{n-1})), \sigma(m_n))$, the corresponding coalition cannot be blocking. This means that either $(i)$ none of the pairs in $\rho$ are blocking or $(ii)$ there is at least one pair in $\rho$ which is not weakly blocking.

Consider the first case and any potential pair in the coalition, denoted as $(m,w)$. For this couple, consider a Pareto efficient allocation $(q^m_{m,w}, q^w_{m,w}, Q_{m,w}) \in B_{m,w}$ such that the man $m$ is indifferent between the consumption bundle $(q^m_{m,w}, Q_{m,w})$ and the bundle $(q^m_{m,\sigma(m)}, Q_{m,w})$ in his current marriage. The slope of $m$'s indifference curve at $(q^m_{m,w}, Q_{m,w})$ is given by the price vectors $(p_{m,w}, P^m_{m,w})$ where $P^m_{m,w}$ represents man $m$'s willingness-to-pay for the public goods. Given convex preferences, a revealed preference argument implies that $m$'s indifference curve between $(q^m_{m,w}, Q_{m,w})$ and $(q^m_{m,\sigma(m)}, Q_{m,\sigma(m)})$ must satisfy
\begin{equation}
    p_{m,w}q^m_{m,w} + P^m_{m,w}Q_{m,w} \leq p_{m,w}q^m_{m,\sigma(m)} + P^m_{m,w}Q_{m,\sigma(m)},
\end{equation}
as the hyperplane with slope $(p_{m,w}, P^m_{m,w})$  through $(q^m_{m,w}, Q_{m,w})$  must be situated below the bundle $(q^m_{m,\sigma(m)}, Q_{m,\sigma(m)})$.

Next, consider the indifference curve of woman $w$ at the allocation $(q^m_{m,w}, q^w_{m,w}, Q_{m,w})$. The slope of $w$'s indifference curve at $(q^w_{m,w}, Q_{m,w})$ is given by the price vectors $(p_{m,w}, P^w_{m,w})$, where $P^w_{m,w}$ is her willingness-to-pay for the public consumption. By Pareto efficiency, we have $P^m_{m,w} + P^w_{m,w} = P_{m,w}$. For $(m,w)$ to not be a blocking pair, we must have that woman $w$ prefers the bundle $(q^w_{\sigma(w),w}, Q_{\sigma(w),w})$ to  $(q^w_{m,w}, Q_{m,w})$. Then, $w$'s indifference curve through $(q^w_{m,w}, Q_{m,w})$ must lie below the indifference curve through $(q^w_{\sigma(w),w}, Q_{\sigma(w),w})$. This implies
\begin{equation}
    p_{m,w}q^w_{m,w} + P^w_{m,w}Q_{m,w} \leq p_{m,w}q^w_{\sigma(w),w} + P^w_{m,w}Q_{\sigma(w),w}.
\end{equation}

Combining inequalities (1) and (2) and using the budget constraint, we get
\begin{equation*}
    y_{m,w} \leq p_{m,w}(q^m_{m,\sigma(m)} + q^w_{\sigma(w),w}) + P^m_{m,w}Q_{m,\sigma(m)} + P^w_{m,w}Q_{\sigma(w),w},    
\end{equation*}
which simplifies to 
\begin{equation*}
   a_{m,w} \geq 0. 
\end{equation*}

Consider the second scenario and let $(m,w)$ denote the potential pair which is not weakly blocking. We apply the same argument as before for man $m$ and derive inequality (1). Since $(m,w)$ is not weakly blocking, it must be that woman $w$ strictly prefers the bundle $(q^w_{\sigma(w),w}, Q_{\sigma(w),w})$ over $(q^w_{m,w}, Q_{m,w})$. Then, inequality (2) holds as strict inequality. We have that 
\begin{equation*}
    y_{m,w} < p_{m,w}(q^m_{m,\sigma(m)} + q^w_{\sigma(w),w}) + P^m_{m,w}Q_{m,\sigma(m)} + P^w_{m,w}Q_{\sigma(w),w},    
\end{equation*}
which implies
\begin{equation*}
   a_{m,w} > 0. 
\end{equation*}

We have shown that if the data is rationalizable, we can find individual consumption bundles and price vectors such that for any permissible path $\rho = (m_1, (m_2,\sigma(m_2))$, $\ldots$, $(m_{n-1},\sigma(m_{n-1}), \sigma(m_n))$ either $a_{m_k, \sigma(m_{k+1})} \geq 0$ for all $1 \leq k < n$, or there exists a $1 \leq k < n$, such that $a_{m_k, \sigma (m_{k+1}) } > 0$. Therefore, path consistency is a necessary condition for the rationalizability of the data as a core allocation.

\medskip
\noindent
{\bf Sufficiency.}
Suppose $A(\mathcal{D})$ satisfies path consistency. 
For a vector $x$, let $[x]_j$ denote its $j$-th component.
Given the Lindahl prices and individual consumption values that satisfy path consistency, let $o$ and $O$ be such that,
\begin{align*}
O > \max_{m\in M, w\in W, j \le k, J\le K} \lbrace [p_{m,w}]_j; [P^m_{m,w}]_J; [P^w_{m,w}]_J \rbrace & \text{ and } \\
o < \min_{m\in M, w\in W, j \le k, J\le K} \lbrace [p_{m,w}]_j; [P^m_{m,w}]_J; [P^w_{m,w}]_J \rbrace. &
\end{align*}
Next, define a piece-wise linear function to construct individual utilities.
Let $v$ be defined as
\begin{equation*}
    v(x) = 
    \begin{cases}
    Ox & \text{ if } x\le 0, \\
    ox & \text{ if } x > 0.
    \end{cases}
\end{equation*}
Following \citealp{cherchye2017household,browning2021stable}, for any individual $i\in M\cup W$, consider the utility function
$$
u^i(q,Q) = \sum_{j=1}^k v([q]_j - [q^i_{i,\sigma(i)}]_j) + \sum_{J=1}^K v([Q]_J - [Q_{i,\sigma(i)}]_J).
$$
For this specification of utilities, we have
\begin{equation*}
     u^m(q^m_{m,\sigma(m)}, Q_{m,\sigma(m)}) = 0 \quad \text{and} \quad u^w(q^w_{\sigma(w),w}, Q_{\sigma(w),w}) = 0.
\end{equation*}

\noindent
Assume that, for these utility functions, the dataset is not rationalizable as a core allocation. This means that there is a permissible coalition $\rho = (m_1, (m_2,\sigma(m_2))$, $\ldots$, $(m_{n-1},\sigma(m_{n-1})), \sigma(m_n))$ such that every pair is weakly blocking and at least one pair is blocking. Consider any potential pair in the path and denote the couple as $(m,w)$. Since $(m,w)$ is weakly blocking, there exists an allocation $(q^m, q^w, Q) \in B_{m,w}$ such that 
\begin{equation*}
    \begin{split}
        u^m(q^m, Q) & \geq u^m(q^m_{m,\sigma(m)}, Q_{m,\sigma(m)}) = 0, \\
        u^w(q^w, Q) & \geq u^w(q^w_{\sigma(w),w}, Q_{\sigma(w),w}) = 0.
    \end{split}
\end{equation*}
\\
For $m$, if $[q^m]_j > [q^m_{m,\sigma(m)}]_j$ (or $[Q]_J > [Q_{m,\sigma(m)}]_J$) then by construction of $o$ we know that $ o([q^m]_j - [q^m_{m,\sigma(m)}]_j) < p_{m,w} ([q^m]_j - [q^m_{m,\sigma(m)}]_j)$ (or, $o ([Q]_J - [Q_{m,\sigma(m)}]_J) < P^m_{m,w} ([Q]_J - [Q_{m,\sigma(m)}]_J)$). On the other hand, if $[q^m]_j \leq [q^m_{m,\sigma(m)}]_j$ (or, $[Q]_J \leq [Q_{m,\sigma(m)}]_J$), then by construction of $O$ we know that $ O([q^m]_j - [q^m_{m,\sigma(m)}]_j) < p_{m,w} ([q^m]_j - [q^m_{m,\sigma(m)}]_j)$ (or, $O([Q]_J - [Q_{m,\sigma(m)}]_J) < P^m_{m,w} ([Q]_J - [Q_{m,\sigma(m)}]_J)$). Summing these inequalities across all goods, we get:
\begin{equation}
p_{m,w} (q^m_{m,w}-q^m_{m,\sigma(m)}) + P^m_{m,w}(Q-Q_{m,\sigma(m)}) \geq 0.    
\end{equation}
Similarly, for $w$, we obtain:
\begin{equation}
p_{m,w} (q^w_{m,w}-q^w_{\sigma(w),w}) + P^w_{m,w}(Q-Q_{\sigma(w),w}) \geq 0.    
\end{equation}
Adding inequalities (3) and (4) for $m$ and $w$, we obtain
$$
p_{m,w} (q^m_{m,w}-q^m_{m,\sigma(m)}) + P^m_{m,w}(Q-Q_{m,\sigma(m)}) + p_{m,w} (q^w_{m,w}-q^w_{\sigma(w),w}) + P^w_{m,w}(Q-Q_{\sigma(w),w}) \geq 0.
$$
Given that the Lindahl prices add up to the market price of the public good ($P^m_{m,w} + P^w_{m,w} = P_{m,w}$) and that $p_{m,w} q^m_{m,w} + P_{m,w} Q + p_{m,w} q^w_{m,w} = y_{m,w}$, we can simplify the inequality above to obtain
$$
y_{m,w} - (p_{m,w} q^m_{m,\sigma(m)} + P^m_{m,w}Q_{m,\sigma(m)} + p_{m,w} q^w_{\sigma(w),w} + P^w_{m,w} Q_{\sigma(w),w}) \geq 0.
$$
Thus,
$$
a_{m,w} \leq 0.
$$
\\
Now, consider the blocking pair in the path and denote the pair as $(m,w)$. This implies that there exists an allocation $(q^m, q^w, Q) \in B_{m,w}$ such that 
\begin{equation*}
    \begin{split}
        u^m(q^m, Q) & \geq u^m(q^m_{m,\sigma(m)}, Q_{m,\sigma(m)}) = 0, \\
        u^w(q^w, Q) & \geq u^w(q^w_{\sigma(w),w}, Q_{\sigma(w),w}) = 0.
    \end{split}
\end{equation*}
with at least one strict inequality. Applying the same arguments as before, we derive that inequalities (3) and (4) hold, with at least one being strict. Adding these two inequalities yields:
$$
a_{m,w} < 0.
$$
This obtains a violation of path consistency.

\end{proof}

\subsection{Proof of Theorem \ref{thm:PathMonotonicity}}

\begin{proof}
{\bf Necessity.}
Consider a data set $\gD$ that is rationalizable as a core with transfers allocation. For any  permissible path of remarriages $\rho = (m_1, (m_2,\sigma(m_2))$, $\cdots$, $(m_{n-1},\sigma(m_{n-1})), \sigma(m_n))$ and transfers $t_{m_k, \sigma(m_k)}$ for $ 1 \leq k \leq n$, the corresponding coalition with transfers cannot be blocking. This requires that either (i) none of the pairs in $\rho$ are blocking, or (ii) there exist at least one pair in $\rho$ which is not weakly blocking.

Consider the first scenario and any potential pair in the coalition. For simplicity, denote the pair as $(m,w)$. For this couple, consider a Pareto efficient allocation $(q^m_{m,w}, q^w_{m,w}, Q_{m,w})$ in the budget set such that the man $m$ is indifferent between the consumption bundle $(q^m_{m,w}, Q_{m,w})$ and $(q^m_{m,\sigma(m)}, Q_{m,w})$ in his current marriage. The slope of $m$'s indifference curve at $(q^m_{m,w}, Q_{m,w})$ is given by the price vectors $(p_{m,w}, P^m_{m,w})$ where $P^m_{m,w}$ represents man $m$'s willingness-to-pay for the public goods. Given convex preferences, a revealed preference argument implies that $m$'s indifference curve between $(q^m_{m,w}, Q_{m,w})$ and $(q^m_{m,\sigma(m)}, Q_{m,\sigma(m)})$ must satisfy
\begin{equation}
    p_{m,w}q^m_{m,w} + P^m_{m,w}Q_{m,w} \leq p_{m,w}q^m_{m,\sigma(m)} + P^m_{m,w}Q_{m,\sigma(m)},
\end{equation}
because the hyperplane with slope $(p_{m,w}, P^m_{m,w})$  through $(q^m_{m,w}, Q_{m,w})$ must be situated below the bundle $(q^m_{m,\sigma(m)}, Q_{m,\sigma(m)})$.

Next, consider the indifference curve of woman $w$ at the allocation $(q^m_{m,w}, q^w_{m,w}, Q_{m,w})$. The slope of $w$'s indifference curve at $(q^w_{m,w}, Q_{m,w})$ is $(p_{m,w}, P^w_{m,w})$, where $P^w_{m,w}$ is her willingness-to-pay for the public consumption. By Pareto efficiency, we have $P^m_{m,w} + P^w_{m,w} = P_{m,w}$. For $(m,w)$ to not be a blocking pair, we must have that woman $w$ prefers the bundle $(q^w_{\sigma(w),w}, Q_{\sigma(w),w})$ over $(q^w_{m,w}, Q_{m,w})$. Then, $w$'s indifference curve $(q^w_{m,w}, Q_{m,w})$ must lie below the indifference curve through $(q^w_{\sigma(w),w}, Q_{\sigma(w),w})$. This implies
\begin{equation}
    p_{m,w}q^w_{m,w} + P^w_{m,w}Q_{m,w} \leq p_{m,w}q^w_{\sigma(w),w} + P^w_{m,w}Q_{\sigma(w),w}.
\end{equation}

Combining inequalities (5) and (6) and using the budget constraint, we get
\begin{equation*}
    y_{m,w} - t_{m, \sigma(m)} + t_{\sigma(w),w} \leq p_{m,w}(q^m_{m,\sigma(m)} + q^w_{\sigma(w),w}) + P^m_{m,w}Q_{m,\sigma(m)} + P^w_{m,w}Q_{\sigma(w),w},    
\end{equation*}
or 
\begin{equation}
   a_{m,w} + t_{m, \sigma(m)} - t_{\sigma(w),w} \geq 0 \text{ for all } (m,w). 
\end{equation}

Summing inequality (7) over the permissible path, we obtain
$$
\sum_{j=1}^{n-1} a_{m_j, \sigma(m_{j+1})} + t_{m_1, \sigma(m_1)} - t_{m_n, \sigma(m_n)} \geq 0 .
$$
If $(m_1,\sigma(m_1))$ and $(m_n,\sigma(m_n))\notin \gV$, then $t_{m_1, \sigma(m_1)} = t_{m_n, \sigma(m_n)} = 0$  since there are no transfers going to and coming from individuals who do not belong to the coalition. Otherwise, if $m_1 = m_n$, then $t_{m_1, \sigma(m_1)} = t_{m_n, \sigma(m_n)}$. Thus, 
$$
\sum_{j=1}^{n-1} a_{m_j, \sigma(m_{j+1})} \geq 0 .
$$

Let us now consider the second scenario. We begin by showing that for any set of transfers, the sum of net-transfers over all pairs in a permissible path is zero. For a permissible path $\rho = (m_1, (m_2,\sigma(m_2))$, $\ldots$, $(m_{n-1},\sigma(m_{n-1})), \sigma(m_n))$, the sum of net transfers over all pairs in the path is given by
$$
\sum_{j=1}^{n-1} - t_{m_j, \sigma(m_{j})} + t_{m_{j+1}, \sigma(m_{j+1})}, 
$$
which simplifies to 
$$
- t_{m_1, \sigma(m_{1})} + t_{m_{n}, \sigma(m_{n})}.
$$
As $\rho$ is permissible, either $(m_1,\sigma(m_1))$ and $(m_n,\sigma(m_n))\notin \gV$ or $m_1 = m_n$. In the first case, $t_{m_1, \sigma(m_1)} = t_{m_n, \sigma(m_n)} = 0$, because no transfers are made to or from individuals outside the coalition. In the second case, $t_{m_1, \sigma(m_1)} = t_{m_n, \sigma(m_n)}$. This proves our claim.

Next, consider a permissible path of remarriages $\rho = (m_1, (m_2,\sigma(m_2))$, $\ldots$, $(m_{n-1},\sigma(m_{n-1})), \sigma(m_n))$ with transfer $t_{m_k, \sigma(m_k)}$ for $ 1 \leq k \leq n$ such that there exist at least one pair in $\rho$ which is not weakly blocking. 
If none of the other pairs in $\rho$ are blocking, we can apply the same reasoning as in the first scenario.  Therefore, assume that some pairs in $\rho$ are blocking and at least one pair is not weakly blocking. Suppose we reduce the net transfers of the blocking pairs in the path such that they become weakly blocking pairs.
That is, we adjust the net transfer such that all previously blocking pairs are indifferent between remarrying and staying with their current partners.\footnote{This follows from the fact that household Pareto frontiers are continuous and increasing in income.} 
If we redistribute the collected sum of money among the pairs which are not weakly blocking, we claim that this sum will be insufficient to make all non-weakly blocking pairs weakly blocking and at least one pair blocking. This is because, if such transfers did exist then the corresponding coalition would block the allocation and hence, the allocation cannot belong to the core with transfers. Therefore, we can find an adjusted set of transfers $\hat t_{m_k, \sigma(m_k)}$ for $ 1 \leq k \leq n$ such that none of the pairs are blocking. This brings us back to scenario 1, which shows that 
$$
\sum_{j=1}^{n-1} a_{m_j, \sigma(m_{j+1})} \geq 0.
$$

We have shown that if the data is rationalizable, we can find individual consumption bundles and price vectors such that for any permissible path $\rho = (m_1, (m_2,\sigma(m_2))$, $\ldots$, $(m_{n-1},\sigma(m_{n-1})), \sigma(m_n))$,  $\sum_{k=1}^{n-1} a_{m_k, \sigma(m_{k+1})} \geq 0$. Thus, path monotonicity is a necessary requirement for data rationalizability as a core with transfers allocation.

\medskip
\noindent
{\bf Sufficiency.}
Suppose that $A(\mathcal{D})$ satisfies path monotonicity. 
Given the Lindahl prices and individual consumption values that satisfy path monotonicity, we construct individual utility functions using the same approach as in the proof of Theorem \ref{thm:PathConsistency}. For this specification of utilities, recall that we have
\begin{equation*}
     u^m(q^m_{m,\sigma(m)}, Q_{m,\sigma(m)}) = 0, \; u^w(q^w_{\sigma(w),w}, Q_{\sigma(w),w}) = 0
\end{equation*}

Now, suppose that for these utility functions, the data set cannot be rationalized as a core with transfers allocation. This means there exists a permissible coalition $\rho = (m_1, (m_2,\sigma(m_2))$, $\ldots$, $(m_{n-1},\sigma(m_{n-1})), \sigma(m_n))$ and transfers $t_{m_k, \sigma(m_k)}$ for $ 1 \leq k \leq n$ such that every pair in the coalition is weakly blocking and at least one pair is blocking. Consider any potential pair in the path and denote this pair as $(m,w)$. As $(m,w)$ is weakly blocking, this implies that there exists an allocation $(q^m, q^w, Q) \in B_{m,w}$ such that 
\begin{equation*}
    \begin{split}
        u^m(q^m, Q) & \geq u^m(q^m_{m,\sigma(m)}, Q_{m,\sigma(m)}) = 0, \\
        u^w(q^w, Q) & \geq u^w(q^w_{\sigma(w),w}, Q_{\sigma(w),w}) = 0.
    \end{split}
\end{equation*}
\\
Repeating the arguments from the proof in Theorem \ref{thm:PathConsistency}, we obtain the following inequalities,
\begin{equation*}
\begin{split}
p_{m,w} (q^m_{m,w}-q^m_{m,\sigma(m)}) + P^m_{m,w}(Q-Q_{m,\sigma(m)}) \geq 0, \\
p_{m,w} (q^w_{m,w}-q^w_{\sigma(w),w}) + P^w_{m,w}(Q-Q_{\sigma(w),w}) \geq 0.   
\end{split}
\end{equation*}
Combining the inequalities for $m$ and $w$, we obtain
$$
p_{m,w} (q^m_{m,w}-q^m_{m,\sigma(m)}) + P^m_{m,w}(Q-Q_{m,\sigma(m)}) + p_{m,w} (q^w_{m,w}-q^w_{\sigma(w),w}) + P^w_{m,w}(Q-Q_{\sigma(w),w}) \geq 0.
$$
Given that the Lindahl prices add up to the market price of the public good ($P^m_{m,w} + P^w_{m,w} = P_{m,w}$) and that $p_{m,w} q^m_{m,w} + P_{m,w} Q + p_{m,w} q^w_{m,w} = y_{m,w} - t_{m,\sigma(m)} + t_{\sigma(w),w}$, we can simplify the inequality above to obtain
$$
y_{m,w} - t_{m,\sigma(m)} + t_{\sigma(w),w} - (p_{m,w} q^m_{m,\sigma(m)} + P^m_{m,w}Q_{m,\sigma(m)} + p_{m,w} q^w_{\sigma(w),w} + P^w_{m,w} Q_{\sigma(w),w}) \geq 0,
$$
which simplifies to
\begin{equation}
    a_{m,w} + t_{m,\sigma(m)} - t_{\sigma(w),w} \leq 0 \text{ for all } (m,w).    
\end{equation}
\\
Next, consider the blocking pair in the path and denote the pair as $(m,w)$. This implies that there exists an allocation $(q^m, q^w, Q) \in B_{m,w}$ such that 
\begin{equation*}
    \begin{split}
        u^m(q^m, Q) & \geq u^m(q^m_{m,\sigma(m)}, Q_{m,\sigma(m)}) = 0, \\
        u^w(q^w, Q) & \geq u^w(q^w_{\sigma(w),w}, Q_{\sigma(w),w}) = 0.
    \end{split}
\end{equation*}
with at least one strict inequality. Repeating the same reasoning as above, we obtain
\begin{equation}
    a_{m,w} + t_{m,\sigma(m)} - t_{\sigma(w),w} < 0.    
\end{equation}
Summing inequalities (8) and (9) over all pairs along the permissible path $\rho$ yields
$$
\sum_{j=1}^{n-1} a_{m_j, \sigma(m_{j+1})} + t_{m_1, \sigma(m_{1})} - t_{m_{n}, \sigma(m_{n})} < 0 .
$$
If $(m_1,\sigma(m_1))$ and $(m_n,\sigma(m_n))\notin \gV$, then $t_{m_1, \sigma(m_{1})} = t_{m_{n}, \sigma(m_{n})} = 0$  since there are no transfers to and from individuals outside the coalition. Otherwise, if $m_1 = m_n$, then $t_{m_1, \sigma(m_{1})} = t_{m_{n}, \sigma(m_{n})}$. Thus, 
$$
\sum_{j=1}^{n-1} a_{m_j, \sigma(m_{j+1})} < 0 .
$$
This results in a violation of path monotonicty.
\end{proof}

\subsection{Proof of Corollary \ref{cor:MutualConsentTransfersNoEmpiricalContent}}
\begin{proof}
Let $p_{m,w} = p$ and $P_{m,w}=P$ as they are assumed to be the same across all couples.
Recall that we have freedom in determining the Lindahl prices.
Let $P^m_{m,w} = P^{m}$ for all $m\in M$ and $P^w = P-P^m$ for every $w\in W$.
On the contrary, assume that there is a blocking cycle $S$.
Consider the cycle of remarriages and focus on the potential pairs formed by $m$ and $\sigma(m)$ for some $m \in S$. Let us denote the rematches as $(m, w')$ and  $(m', \sigma(m))$.
\begin{align*}
\sum_{m \in S} a_{m, \hat \sigma(m)} &= \sum_{m* \ne m, m'} a_{m*, \hat \sigma(m*)} + a_{m, w'} + a_{m', \sigma(m)} \\
    &= \sum_{m* \ne m, m'} a_{m*, \hat \sigma(m*)} + \\
    & \quad \quad p (q^m_{m,\sigma(m)} + q^{w'}_{\sigma(w'),w'} ) + P^m Q_{m,\sigma(m)} + P^w Q_{\sigma(w'),w'} -  y_{m,w'}  + \\ 
    & \quad \quad p (q^{m'}_{m',\sigma(m')} + q^{\sigma(m)}_{m,\sigma(m)} ) + P^m Q_{m',\sigma(m')} + P^w Q_{m,\sigma(m)} -  y_{m',\sigma(m)}
\end{align*}
Recall that $y_{m,w} = y_m+y_w$, then we can further rearrange the terms
\begin{align*}
\sum_{m \in S} a_{m, \hat \sigma(m)} &= \sum_{m* \ne m, m'} a_{m*, \hat \sigma(m*)} + \\
    & \quad \quad  p (q^m_{m,\sigma(m)} + q^{w'}_{\sigma(w'),w'} ) + P^m Q_{m,\sigma(m)} + P^w Q_{\sigma(w'),w'} -  y_{m} - y_{w'} + 
    \\ 
    & \quad \quad  p (q^{m'}_{m',\sigma(m')} + q^{\sigma(m)}_{m,\sigma(m)} ) + P^m Q_{m',\sigma(m')} + P^w Q_{m,\sigma(m)} -  y_{m'} - y_{\sigma(m)} \\ 
    &= \sum_{m* \ne m, m'} a_{m*, \hat \sigma(m*)} + \\
    & \quad \quad  p (q^m_{m,\sigma(m)} + q^{\sigma(m)}_{m,\sigma(m)}  ) + P^m Q_{m,\sigma(m)} + P^w Q_{m,\sigma(m)}  -  y_{m} -  y_{\sigma(m)} + 
    \\ 
    & \quad \quad  p (q^{m'}_{m',\sigma(m')} + q^{w'}_{\sigma(w'),w'} ) + P^m Q_{m',\sigma(m')} +  P^w Q_{\sigma(w'),w'} -  y_{m'} - y_{w'}  
    \\ 
    &= \sum_{m* \ne m, m'} a_{m*, \hat \sigma(m*)} + \\
    & \quad \quad  p (q^m_{m,\sigma(m)} + q^{\sigma(m)}_{m,\sigma(m)}  ) + PQ_{m,\sigma(m)} -   y_{m,\sigma(m)} + 
    \\ 
    & \quad \quad  p (q^{m'}_{m',\sigma(m')} + q^{w'}_{\sigma(w'),w'} ) + P^m Q_{m',\sigma(m')} +  P^w Q_{\sigma(w'),w'} -  y_{m'} - y_{w'}  
     \\ 
    &= \sum_{m* \ne m, m'} a_{m*, \hat \sigma(m*)} + \\
    & \quad \quad  p (q^{m'}_{m',\sigma(m')} + q^{w'}_{\sigma(w'),w'} ) + P^m Q_{m',\sigma(m')} +  P^w Q_{\sigma(w'),w'} -  y_{m'} - y_{w'} 
\end{align*}

\noindent
Thus, we have removed the terms corresponding to the observed couple $(m,\sigma(m))$ from the cycle of remarriages.
Repeating the same procedure for other observed matches in $S$, we can obtain that the sum of edge weights along the cycle equals zero for every cycle of remarriages.
Thus, path monotonicity along the cycle of remarriages is trivially satisfied, implying that the data cannot contain the blocking cycle.

\end{proof}

\subsection{Proof of Corollary \ref{cor:MutualConsentTransfersNoIdentification}}

\begin{proof}

Consider a data set $\gD$ which is rationalizable with core with transfers. Suppose $p_{m,w} = p$ for every $(m,w ) \in M\cup \{\emptyset\} \times W\cup \{\emptyset\}$. 
As the data set is rationalizable, there are individual consumption bundles $(q^m_{m,\sigma(m)},q^{\sigma(m)}_{m,\sigma(m)})$  and Lindahl prices  $(P^m_{m,w},P^{w}_{m,w})$ such that there is no blocking path of remarriages.
Consider a committed couple $(m,\sigma(m))$. 
Suppose there is another allocation ($\hat q^m_{m,\sigma(m)}$, $\hat q^{\sigma(m)}_{m,\sigma(m)}$) which makes the matching unstable.
Then, according to Theorem \ref{thm:PathMonotonicity}, there is a violation of path monotonicity over a path of remarriages $\rho$ that includes $m$ and $\sigma(m)$. Consider this path of remarriages and focus on the potential pairs formed by $m$ and $\sigma(m)$. Let us denote the rematches as $(m, w')$ and  $(m', \sigma(m))$. Violation of path monotonicity over this path implies,
\begin{align*}
    0 &> \sum_{m \in S} a_{m, \hat \sigma(m)} 
    \\
    &> \sum_{m* \ne m, m'} a_{m*,w*} + \\
    & \quad \quad p (\hat q^m_{m,\sigma(m)} + q^{w'}_{\sigma(w'),w'} ) + P^m_{m,w'}Q_{m,\sigma(m)} + P^{w'}_{m,w'} Q_{\sigma(w'),w'} - y_{m,w'} + 
    \\ 
    & \quad \quad p (q^{m'}_{m',\sigma(m')} + \hat q^{\sigma(m)}_{m,\sigma(m)} ) + P^{m'}_{m',\sigma(m)}Q_{m',\sigma(m')} + P^{\sigma(m)}_{m',\sigma(m)} Q_{m,\sigma(m)} -  y_{m',\sigma(m)} 
    \\
    &> \sum_{m* \ne m, m'} a_{m*,w*} + \\
    & \quad \quad p(\hat q^m_{m,\sigma(m)} + \hat q^{\sigma(m)}_{m,\sigma(m)}) + P^m_{m,w'}Q_{m,\sigma(m)} + P^{\sigma(m)}_{m',\sigma(m)} Q_{m,\sigma(m)}  - y_{m,w'} + 
    \\
    & \quad \quad p(q^{m'}_{m',\sigma(m')} + q^{w'}_{\sigma(w'),w'}) +  P^{m'}_{m',\sigma(m)}Q_{m',\sigma(m')} + P^{w'}_{m,w'} Q_{\sigma(w'),w'} - y_{m',\sigma(m)}  
    \\
    &> \sum_{m* \ne m, m'} a_{m*,w*} + \\
    & \quad \quad p q_{m,\sigma(m)} + P^m_{m,w'}Q_{m,\sigma(m)} + P^{\sigma(m)}_{m',\sigma(m)} Q_{m,\sigma(m)}  - y_{m,w'} + 
    \\
    & \quad \quad p(q^{m'}_{m',\sigma(m')} + q^{w'}_{\sigma(w'),w'}) +  P^{m'}_{m',\sigma(m)}Q_{m',\sigma(m')} + P^{w'}_{m,w'} Q_{\sigma(w'),w'} - y_{m',\sigma(m)}  
\end{align*}
\noindent
At the same time, rationalizability of the data with $q^m_{m,\sigma(m)}$ and $q^{\sigma(m)}_{m,\sigma(m)}$ implies that
\begin{align*}
    0 &\leq \sum_{m \in S} a_{m, \hat \sigma(m)} 
    \\
    &\leq \sum_{m* \ne m, m'} a_{m*,w*} + \\
    & \quad \quad p (q^m_{m,\sigma(m)} + q^{w'}_{\sigma(w'),w'} ) + P^m_{m,w'}Q_{m,\sigma(m)} + P^{w'}_{m,w'} Q_{\sigma(w'),w'} - y_{m,w'} + 
    \\ 
    & \quad \quad p (q^{m'}_{m',\sigma(m')} + q^{\sigma(m)}_{m,\sigma(m)} ) + P^{m'}_{m',\sigma(m)}Q_{m',\sigma(m')} + P^{\sigma(m)}_{m',\sigma(m)} Q_{m,\sigma(m)} -  y_{m',\sigma(m)} 
    \\
    &\leq \sum_{m* \ne m, m'} a_{m*,w*} + \\
    & \quad \quad p(q^m_{m,\sigma(m)} + q^{\sigma(m)}_{m,\sigma(m)}) + P^m_{m,w'}Q_{m,\sigma(m)} + P^{\sigma(m)}_{m',\sigma(m)} Q_{m,\sigma(m)}  - y_{m,w'} + 
    \\
    & \quad \quad p(q^{m'}_{m',\sigma(m')} + q^{w'}_{\sigma(w'),w'}) +  P^{m'}_{m',\sigma(m)}Q_{m',\sigma(m')} + P^{w'}_{m,w'} Q_{\sigma(w'),w'} - y_{m',\sigma(m)}  
    \\
    &\leq \sum_{m* \ne m, m'} a_{m*,w*} + \\
    & \quad \quad p q_{m,\sigma(m)} + P^m_{m,w'}Q_{m,\sigma(m)} + P^{\sigma(m)}_{m',\sigma(m)} Q_{m,\sigma(m)}  - y_{m,w'} + 
    \\
    & \quad \quad p(q^{m'}_{m',\sigma(m')} + q^{w'}_{\sigma(w'),w'}) +  P^{m'}_{m',\sigma(m)}Q_{m',\sigma(m')} + P^{w'}_{m,w'} Q_{\sigma(w'),w'} - y_{m',\sigma(m)}  
\end{align*}

\noindent
That is a contradiction.
Thus, the data set is also rationalizable with the allocation $\hat q^m_{m,\sigma(m)}$ and $\hat q^{\sigma(m)}_{m,\sigma(m)}$ for $(m, \sigma(m))$ and  $\hat q^m_{m,\sigma(m)} = q^m_{m,\sigma(m)}$ and $\hat q^{\sigma(m)}_{m,\sigma(m)} = q^{\sigma(m)}_{m,\sigma(m)}$ for all other couples. We can repeat this logic for the remaining committed couples in the path.

\end{proof}

\section{Practical Implementation}
\label{app:LinearProgramming}


\subsection{Path Consistency}

The path consistency condition in Definition \ref{def:PathConsistency} can be reformulated in terms of linear inequalities characterized by binary integer variables. 

\begin{prop}
\label{prop:pc}
Given a data set $\gD$, $A(\gD)$ satisfies path consistency if and only if there exist vectors $q^m_{m,\sigma(m)}, q^{\sigma(m)}_{m,\sigma(m)} \in \R^k_+$ and numbers $z_m \in \R$ for all $m \in M$ and vectors $P^m_{m,w}, P^w_{m,w} \in \R^K_{++}$, numbers $a_{m,w} \in \R$ and integer variables $\delta_{m,w}, \xi_{m,w} \in\{0,1\}$ for all $(m,w) \in (M \cup \lbrace \emptyset \rbrace) \times (W \cup \lbrace \emptyset \rbrace)$ such that
\begin{equation*}
\label{eq:PC}
\begin{aligned} 
\mysubnumber\quad & q^m_{m,\sigma(m)}+ q^{\sigma(m)}_{m,\sigma(m)}  = q_{m,\sigma(m)}, \\
\mysubnumber\quad & P^m_{m,w} + P^w_{m,w}  = P_{m,w}, \\
\mysubnumber\quad & a_{m,w} = p_{m,w} q^m_{m,\sigma(m)} + P^m_{m,w} Q_{m,\sigma(m)} + p_{m,w} q^w_{\sigma(w),w}  + P^w_{m,w} Q_{\sigma(w),w} - y_{m,w}, \\
\mysubnumber\quad & a_{m,w} + \delta_{m,w} M \ge 0, \\
\mysubnumber\quad & a_{m,w} - (1 - \delta_{m,w})M < 0, \\
\mysubnumber\quad & a_{m,w} + \xi_{m,w} M > 0, \\
\mysubnumber\quad & a_{m,w} - (1 - \xi_{m,w})M \le 0, \\
\mysubnumber\quad & v_m z_m  <  v_{\sigma(w)} z_{\sigma(w)} + (1-\delta_{m,w})M, \\ 
\mysubnumber\quad & v_m z_m  \le  v_{\sigma(w)} z_{\sigma(w)} + (1-\xi_{m,w})M,
\end{aligned}
\end{equation*}
where $v_m = 1 \text{ if } (m,\sigma(m)) \in \gV$ and $0$ otherwise and  $M$ is a fixed number greater than any possible values $a_{m,w}$.
\end{prop}

\noindent
The numbers $v_m$ are indicators that take the value 1 if the couple $(m,\sigma(m))$ is committed and 0 otherwise. The integer variables $\delta_{m,w}$ are indicator variables that equal 1 if and only if $a_{m,w} < 0$ and 0 otherwise. Similarly, the variables $\xi_{m,w}$ are indicators that take the value 1 if and only if $a_{m,w} \leq 0$ and 0 otherwise.
Equations $(i)-(iii)$ set the stage by balancing private consumption, Lindahl prices, and defining the edge weights.
Equations $(iv)-(vii)$ define the values of $\delta_{m,w}$ and $\xi_{m,w}$ by check whether $a_{m,w}$ is positive or strictly positive, respectively. 
Equations $(viii)-(ix)$ generate numbers $z_m$ that implement path consistency.\footnote{
 We note that the strict inequalities in conditions $(v)$, $(vi)$ and $(viii)$ of  Proposition \ref{prop:pc} are not desirable from numerical point of view. To address this, we need to replace the strict inequalities with weak ones by adding a small positive $\epsilon$, ensuring that the conditions are still feasible (see \citealp{nobibon2016revealed}). Additionally, our formulation employs the Big-M method which uses large constants to enforce conditions $(iv)-(ix)$ under specific scenarios. We note that choosing an appropriate value for $M$ is crucial because the Big-M formulation is known to suffer from numerical issues (\citealp{schrijver1998theory}).
} 
To prove Proposition \ref{prop:pc}, we first present the following lemma.

\begin{lemma}
\label{lemma:PathConsistency2MultiplierConsistency}
Given a data set $\gD$, if $A(\gD)$ satisfies path consistency, then there are numbers $z_m$ and $\mu_{m,w}>0$ for every $m\in M$ and $w\in W$ such that 
\begin{equation*}
\begin{cases}
z_{m}-z_{\sigma(w)} \le  \mu_{m,w} a_{m,w} 
& \text{ if } (m,\sigma(m)), (\sigma(w),w)\in\mathcal V, \\ 
z_m \le  \mu_{m,w} a_{m,w} 
& \text{ if } (m,\sigma(m)) \in \mathcal{V}, (\sigma(w),w)\notin\mathcal V, \\ 
- z_{\sigma(w)} \le  \mu_{m,w} a_{m,w} 
& \text{ if } (m,\sigma(m))\notin \mathcal V, (\sigma(w),w)\in\mathcal V.
\end{cases}
\end{equation*}
\end{lemma}
\begin{proof}
Let $o>0$ be a sufficiently small number such that
$$
o < \min \left\{1, \min_{m\in M; w\in W} \{ |a_{m,w}| \} \right\},
$$
and $O>0$ be a sufficiently large number such that
$$
O > \frac{\sum_{m\in M, w\in W: a_{m,w}<0} |a_{m,w}|}{o}
$$
Then, let
\begin{equation*}
\mu_{m,w} = 
\begin{cases}
o & \text{ if } a_{m,w}\le 0, \\
O & \text{ if } a_{m,w} > 0.
\end{cases}
\end{equation*}
We can show that for every permissible path $\rho$, 
$$
\sum_{(m,w)\in \rho} a_{m,w} \mu_{m,w} \ge 0.
$$
The reasoning can be divided into two cases:
First, if the path contains strictly negative $a_{m,w}$, then it should also include at least one strictly positive element. The weight $\mu_{m,w}$ for positive $a_{m,w}$
is chosen to be large enough to outweigh the sum of possible negative values.
Second, if the path contains only zero edge weights, then the sum will be zero by construction.

Next, given the weights defined above, let
$$
z_m = \min\{\mu_{m,w} a_{m,w} + \mu_{\sigma(w),w'} a_{\sigma(w),w'} + \ldots \}
$$
where the minimization is over (i) all sequences starting with $m\in M$ if $\gV = M\times W$; or (ii) all sequences starting with $m\in M$ and ending with a non-committed couple if  $\gV \subset M\times W$. 
With this construction, we show that the inequalities in Lemma \ref{lemma:PathConsistency2MultiplierConsistency} are satisfied.

\medskip
\noindent
\textbf{Case 1: $(m,\sigma(m)), (\sigma(w),w)\in\mathcal V$.}
\\
Since $z_{\sigma(w)}$ is defined as a weighted sum
$$
z_{\sigma(w)} = \mu_{\sigma(w),w'} a_{\sigma(w),w'} + \ldots, 
$$
over a sequence that starts with $\sigma(w)$, then the following sum
$$
\mu_{m,w} a_{m,w} + \mu_{\sigma(w),w'} a_{\sigma(w),w'} + \ldots
$$
is over a sequence that starts with $m$ and follows the sequence formed by $z_{\sigma(w)}$. Then, by construction of $z_m$, we know that 
$$
z_m \le \mu_{m,w} a_{m,w} + \mu_{\sigma(w),w'} a_{\sigma(w),w'} + \ldots = \mu_{m,w}a_{m,w} + z_{\sigma(w)}.
$$

\medskip
\noindent
\textbf{Case 2: $(m,\sigma(m)) \in \mathcal{V}, (\sigma(w),w)\notin\mathcal V$.}
\\
In this case, $z_m \le  \mu_{m,w} a_{m,w}$ by construction, because $(\sigma(w),w)$ is a non-committed couple.

\medskip
\noindent
\textbf{Case 3: $(m,\sigma(m)) \notin \mathcal{V}, (\sigma(w),w)\in\mathcal V$.}
\\
Here, showing $- z_{\sigma(w)} \le  \mu_{m,w} a_{m,w}$
is equivalent to showing $0 \le \mu_{m,w} a_{m,w} + z_{\sigma(w)}$. 
By construction of $z_{\sigma(w)}$, we know that any sequence starting from $m$ and following the sequence formed by $z_{\sigma(w)}$, would start and finish with non-committed couples. Thus, it forms a permissible path. 
Then, by construction of the weights $\mu_{m,w}$, we have
$$
\mu_{m,w} a_{m,w} + \mu_{\sigma(w),w'} a_{\sigma(w),w'} + \ldots \geq 0,
$$
which obtains $- z_{\sigma(w)} \le  \mu_{m,w} a_{m,w}$.

\end{proof}

\paragraph{Proof of Proposition \ref{prop:pc}}
\begin{proof}

{\bf Necessity.}
Assume that the path consistency condition from Definition \ref{def:PathConsistency} holds. Let us use the same solution to define $q^m_{m,\sigma(m)}, q^{\sigma(m)}_{m,\sigma(m)}$, $P^m_{m,w}, P^w_{m,w}$ and $a_{m,w}$. The first three conditions are satisfied by default. Let $\delta_{m,w} = 1$ if $a_{m,w} < 0$ and zero otherwise, and let $\xi_{m,w} = 1$ if $a_{m,w} \leq 0$ and zero otherwise. Conditions $(iv)-(vii)$ are then satisfied by construction. We now need to construct numbers $z_m$ for all $m$ such that conditions $(viii)-(ix)$ of Proposition \ref{prop:pc} are satisfied.

By Lemma \ref{lemma:PathConsistency2MultiplierConsistency}, there exist numbers $z_m$ and $\mu_{m,w}>0$ for every $m\in M$ and $w\in W$ such that 
\begin{equation*}
\begin{cases}
z_{m}-z_{\sigma(w)} \le  \mu_{m,w} a_{m,w} & \text{ if } (m,\sigma(m)), (\sigma(w),w)\in\mathcal V, \\ 
z_m \le  \mu_{m,w} a_{m,w} & \text{ if } (m,\sigma(m)) \in \mathcal{V}, (\sigma(w),w)\notin\mathcal V, \\ 
- z_{\sigma(w)} \le \mu_{m,w} a_{m,w} & \text{ if } (m,\sigma(m))\notin \mathcal V, (\sigma(w),w)\in\mathcal V.
\end{cases}
\end{equation*}
We will show that these constructed $z_m$ values satisfy conditions $(viii)-(ix)$ of Proposition \ref{prop:pc}. First note that if $\delta_{m,w}$ and $\xi_{m,w} = 0$, conditions $(viii)-(ix)$ are satisfied by default.
Consider inequality $(viii)$ which is relevant only if $\delta_{m,w} = 1$, or equivalently if $a_{m,w} < 0$. We have four cases:

\medskip
\noindent
\textbf{Case 1:} $(m,\sigma(m)), (\sigma(w),w)\in\mathcal V$.
If both $m$ and $w$ are committed, by the first inequality we have $z_m - z_{\sigma(w)} \le \mu_{m,w} a_{m,w}$. As $\mu_{m,w}>0$, we have
\begin{equation*}
    \begin{split}
        z_m - z_{\sigma(w)} &\le \mu_{m,w} a_{m,w} < 0, \text{ or,} \\
        v_mz_m &< v_{\sigma(w)}z_{\sigma(w)}.
    \end{split}
\end{equation*}

\medskip
\noindent
\textbf{Case 2:} $(m, \sigma(m)) \in \gV$, $(\sigma(w),w)\notin \mathcal V$. By the second inequality above, we have $z_m \le \mu_{m,w} a_{m,w}$. As $\mu_{m,w} >0$, we have
\begin{equation*}
    \begin{split}
        z_m &< 0, \text{ or,} \\
        v_mz_m  &< v_{\sigma(w)}z_{\sigma(w)}.
    \end{split}
\end{equation*}

\medskip
\noindent
\textbf{Case 3:}  $(m,\sigma(m)) \notin \mathcal V$, $(\sigma(w),w) \in \mathcal V$. By the third inequality above, we have $z_{\sigma(w)} + \mu_{m,w} a_{m,w} \ge 0$. Once again, as $\mu_{m,w} >0$ and $a_{m,w} < 0$, we have,
\begin{equation*}
    \begin{split}
        0 &< z_{\sigma(w)}, \text{ or,} \\
        v_mz_m  &< v_{\sigma(m)}z_{\sigma(w)}.
    \end{split}
\end{equation*}

\medskip
\noindent
\textbf{Case 4:} $(m, \sigma(m)), (\sigma(w),w) \notin \mathcal V$. Then we know that $a_{m,w} \geq 0$ (as path consistency is satisfied) and $\delta_{m,w} = 0$. 

Therefore, equation $(viii)$ is satisfied for all $(m,w)$. Similar arguments can be made to show that condition $(ix)$ is also satisfied. We can thus conclude that the conditions of Proposition \ref{prop:pc} are feasible whenever Definition \ref{def:PathConsistency} is satisfied.

\medskip
\noindent
{\bf Sufficiency.}
Assume that there exists a solution for the conditions in Proposition \ref{prop:pc} but path consistency is violated. 
Then, there is a path $\rho = (m_1, (m_2,\sigma(m_2))$, $\ldots$, $(m_{n-1},\sigma(m_{n-1})), \sigma(m_n))$ with either $m_1 = m_n$ or $(m_1,\sigma(m_1))$ and $(m_n,\sigma(m_n)) \notin \gV$,  such that for every $1 \leq i < n$, $a_{m_i,\sigma(m_{i+1})}\le 0$ with at least one inequality being strict.
This implies that for these pairs $\xi_{m_i,\sigma(m_{i+1})}=1$ for every $1 \leq i < n$ and for at least one of them $\delta_{m_i,\sigma(m_{i+1})}=1$.
Then we can write down the corresponding set of the last inequalities.
\begin{equation*}
\begin{split}
& v_{m_1} z_{m_1}  \le  v_{m_2} z_{m_2} \\
& v_{m_2} z_{m_2}  \le  v_{m_3} z_{m_3} \\
& \ldots \ \ldots \ \ldots \ \ldots \ \ldots \\
& v_{m_{n-1}} z_{m_{n-1}}  \le  v_{m_{n}} z_{m_n} \\
\end{split}
\end{equation*}
with at least one inequality being strict.
Summing these inequalities gives,
$$
v_{m_1} z_{m_1} < v_{m_{n}} z_{m_n}
$$
If $(m_1,\sigma(m_1))$ and $(m_n,\sigma(m_n)) \notin \gV$, we know that $v_{m_1} = v_{m_n} =0$, then the inequality leads to a contradiction. 
Otherwise if $m_1 = m_n$, the inequality can only be satisfied if $0<0$.
Hence, there cannot be a solution to the system of inequalities if path consistency is violated.
\end{proof}

\subsection{Path Monotonicity}
The path monotonicity condition in Definition \ref{def:PathMonotonicity} can be reformulated in terms of linear inequalities.

\begin{prop}
\label{prop:pm}
Given a data set $\gD$, $A(\gD)$ satisfies path monotonicity if and only if there exist vectors $q^m_{m,\sigma(m)}, q^{\sigma(m)}_{m,\sigma(m)} \in \R^k_+$ and numbers $z_m \in \R$ for all $m \in M$ and vectors $P^m_{m,w}, P^w_{m,w} \in \R^K_{++}$ and numbers $a_{m,w} \in \R$ for all $(m,w) \in (M \cup \lbrace \emptyset \rbrace) \times (W \cup \lbrace \emptyset \rbrace)$ such that
\begin{equation*}
\label{eq:PC}
\begin{aligned} \setcounter{mysubequations}{0}
\mysubnumber\quad & q^m_{m,\sigma(m)}+ q^{\sigma(m)}_{m,\sigma(m)}  = q_{m,\sigma(m)}, \\
\mysubnumber\quad & P^m_{m,w} + P^w_{m,w}  = P_{m,w}, \\
\mysubnumber\quad & a_{m,w} = p_{m,w} q^m_{m,\sigma(m)} + P^m_{m,w} Q_{m,\sigma(m)} + p_{m,w} q^w_{\sigma(w),w}  + P^w_{m,w} Q_{\sigma(w),w} - y_{m,w}, \\
\mysubnumber\quad & v_m z_m - v_{\sigma(w)} z_{\sigma(w)} \le a_{m,w},
\end{aligned}
\end{equation*}
where $v_m = 1 \text{ if } (m,\sigma(m)) \in \gV$ and $0$ otherwise.
\end{prop}

\noindent
Equations $(i)-(iii)$ set the stage by balancing private consumption, Lindahl prices, and defining the edge weights.
Equation $(iv)$ implements path monotonicity using numbers $z_m$ by ensuring that the sum of edge weights along every permissible path is non-negative.
We remark that equation $(iv)$ boils down to the standard cyclical monotonicity condition when all couples are committed (i.e., when $v_m = 1$ for all $m$) \citep[see][]{castillo2019general}.

\begin{proof}
{\bf Necessity.}
Assume that the path monotonicty condition in Definition \ref{def:PathMonotonicity} is satisfied. Let us use the same solution to define $q^m_{m,\sigma(m)}, q^{\sigma(m)}_{m,\sigma(m)}$, $P^m_{m,w}, P^w_{m,w}$ and $a_{m,w}$. The first three conditions are satisfied by default. 
Next, we construct numbers $z_m$ such that condition $(iv)$  is satisfied. Let 
$$
z_m = \min\{ a_{m,w} + a_{\sigma(w),w'} + \ldots \}
$$
where the minimization is defined over (i) all sequence starting with $m\in M$ if $\gV = M\times W$; or (ii) all sequences starting with $m\in M$ and ending with a non-committed couple if  $\gV \subset M\times W$. Consider the four cases:

\medskip
\noindent
\textbf{Case 1:} $(m,\sigma(m)), (\sigma(w),w)\in\mathcal V$. By construction we know that $z_m \le a_{m,w} + z_{\sigma(w)}$. Then
$v_mz_m - v_{\sigma(w)}z_{\sigma(w)} \le  a_{m,w}.
$

\medskip
\noindent
\textbf{Case 2:} $(m,\sigma(m))\in \mathcal V$, $(\sigma(w),w)\notin \mathcal V$.
By construction we know that $z_m \le a_{m,w}$. Therefore, $v_mz_m - v_{\sigma(w)}z_{\sigma(w)} \le  a_{m,w}.$

\medskip
\noindent
\textbf{Case 3:} $(m,\sigma(m))\notin \mathcal V$, $(\sigma(w),w)\in \mathcal V$. We know that adding $(m,w)$ to the sequence that defines $z_{\sigma(w)}$ constitute a permissible path as it starts and ends with non-committed couples. 
As path monotonicity is satisfied, we know that the sum of edge weights along this permissible path ($a_{m,w} +  a_{\sigma(w),w'} + \ldots$) is non-negative. Therefore,
$a_{m,w} + z_{\sigma(w)}\ge 0$. 

\medskip
\noindent
\textbf{Case 4:} $(m,\sigma(m))$, $(\sigma(w),w)\notin \mathcal V$. Then $a_{m,w} \geq 0$ by path monotonicity. Therefore, $v_mz_m - v_{\sigma(w)}z_{\sigma(w)} \le  a_{m,w}.$


\medskip
\noindent
{\bf Sufficiency.}
Assume that there exists a solution for the conditions in Proposition \ref{prop:pm} but path monotonicity is violated. 
Then, there is a path $\rho = (m_1, (m_2,\sigma(m_2))$, $\ldots$, $(m_{n-1},\sigma(m_{n-1})), \sigma(m_n))$ with either $m_1 = m_n$ or $(m_1,\sigma(m_1))$ and $(m_n,\sigma(m_n)) \notin \gV$,  such that $ \sum_{r=1}^{n-1} a_{m_r,\sigma(m_{r+1})} < 0$.
For every couple in this path, condition $(iv)$ of Proposition \ref{prop:pm} implies
$$
v_{m_r} z_{m_r} - v_{m_{r+1}} z_{m_{r+1}} \le a_{m_r,\sigma(m_{r+1})}
$$
Adding up the inequalities along this path gives
$$
v_{m_1} z_{m_1} - v_{m_n} z_{m_n} \le \sum_{r=1}^{n-1} a_{m_r,\sigma(m_{r+1})}
$$
Moreover, either $m_1 = m_n$ or $(m_1,\sigma(m_1))$ and $(m_n,\sigma(m_n)) \notin \gV$, in which case $v_{m_1} = v_{m_n} = 0$. 
In both cases, we obtain 
$$
0 \le \sum_{r=1}^{n-1} a_{m_r,\sigma(m_{r+1})}.
$$
Hence, there cannot be a solution to the system of inequalities in Proposition \ref{prop:pm} if path monotonicity is violated.

\end{proof}

\subsection{Unobserved Commitment Status}
\label{app:LinearProgramming:unknownV}
We can modify the inequalities in Propositions \ref{prop:pc} and \ref{prop:pm} to endogenize the commitment status of couples.
For path consistency conditions, we replace inequalities $(viii)$ and $(ix)$ in Proposition \ref{prop:pc} with the following inequalities,
\begin{align*}
 & z_m  <  z_{\sigma(w)} + (1-\delta_{m,w})M, \\ 
 & z_m  \le  z_{\sigma(w)} + (1-\xi_{m,w})M, \\
 & -v_m M/4 \le z_m \le v_m M/4, 
\end{align*}
where $v_m$ are binary integer variables indicating the unknown commitment status of couples.
If a couple $(m, \sigma(m))$ is not committed ($v_m = 0$), the third inequality enforces $z_m = 0$. Otherwise, it bounds the value of $z_m$ between $-M/4$ and $M/4$.

For path monotonicity conditions, we replace inequality $(iv)$ in Proposition \ref{prop:pm} with the following two inequalities,
\begin{align*}
& z_m - z_{\sigma(w)} \le a_{m,w},  \\ 
& -v_m M/4 \le z_m \le v_m M/4,
\end{align*}
where $v_m$ are binary integer variables representing the unknown commitment status of couples. As with path consistency case, if a couple if committed, the second inequality ensures that $z_m=0$. Otherwise, it bounds $z_m$ between $-M/4$ and $M/4$.

    


\section{Additional Empirical Results}

\label{app:simulation}

\subsection{Excluding Singles}
\label{app:simulation:couples}
In our baseline simulation, we included singles to account for the possibility that married couples consider singles of the opposite gender as potential mates. This approach, however, prevented us from validating the theoretical result presented in Corollary \ref{cor:MutualConsentTransfersNoEmpiricalContent}. In this section, we focus solely on households formed by couples.

Figures \ref{fig_dcavg_withoutsingle} and \ref{fig_shares_withoutsingle} present our main findings, which can be compared to Figures \ref{fig_dcavg} and \ref{fig_shares} in the main text. We see two main effects when comparing these results to the ones in the main text. 
First, the empirical content of the stability conditions decreases when only considering households formed by couples. This is because the presence of singles, who are inherently non-committed, facilitates the formation of blocking paths by committed couples as there are more permissible paths. In the extreme case where all couples are committed and prices are the same within and outside marriages, with household incomes being the sum of individual incomes (i.e., when $\alpha = 0\%$ and the share of committed couples is 100\%), the conditions lose all empirical content. This outcome confirms the theoretical result in Corollary 1, which states that no data set can contain a blocking cycle if transfers are allowed, prices are uniform, and incomes are additive across outside options. 
Second, intrahousehold allocations for committed couples can be identified if path monotonicity is applied and there is variation in prices. However, our empirical rationalizability conditions become less restrictive when excluding singles, leading to wider bound estimates. This widening makes intuitive sense: including singles implies more permissible coalitions and, thus, increased marital competition, which in turn yields more precise identification of intrahousehold allocation patterns.

\begin{figure}[htbp]
\caption{Goodness-of-fit with singles excluded}%
    \label{fig_dcavg_withoutsingle}%
    \centering
    \subfloat[\centering Price variation]{{\includegraphics[scale=0.135]{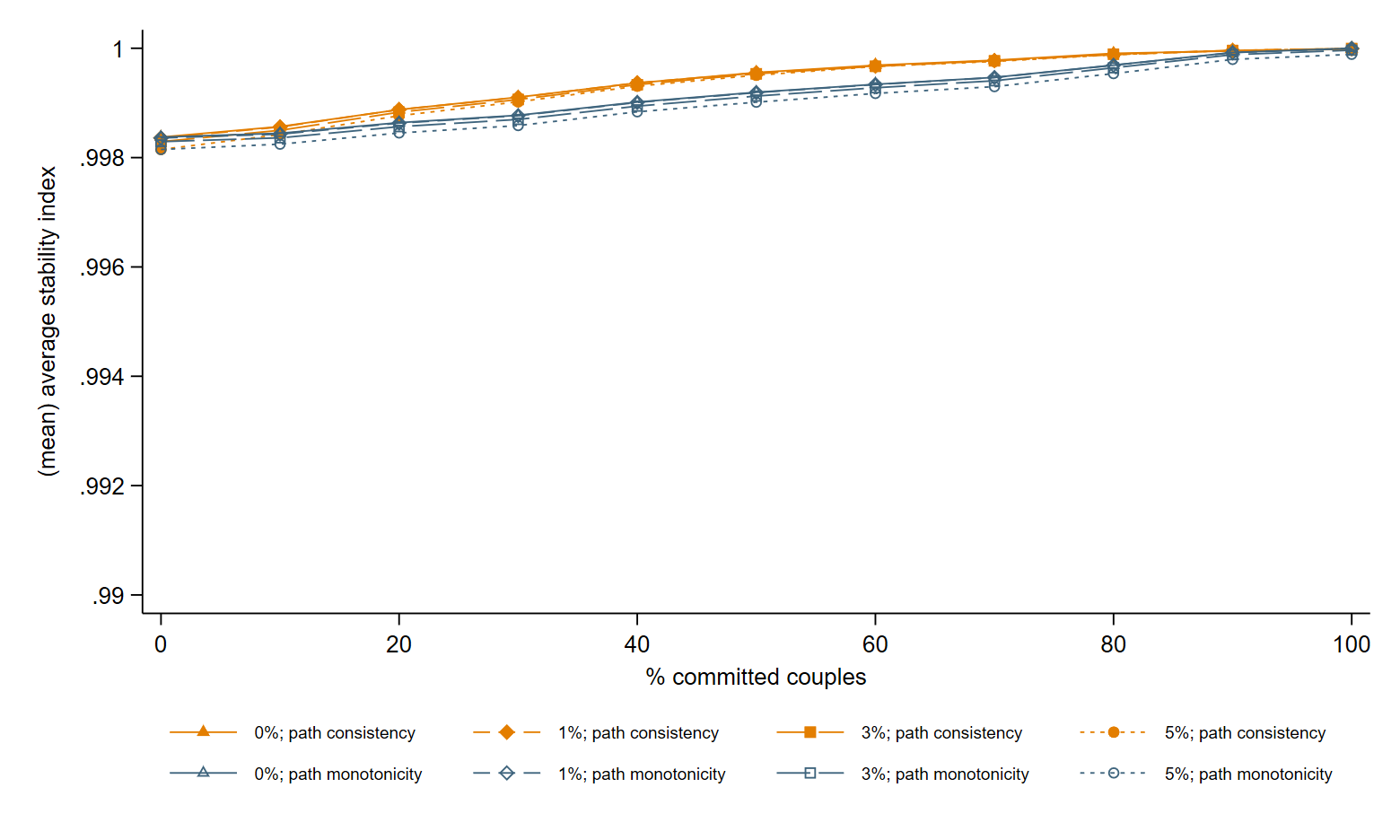} }}%
    \qquad
    \subfloat[\centering Income variation]{{\includegraphics[scale=0.135]{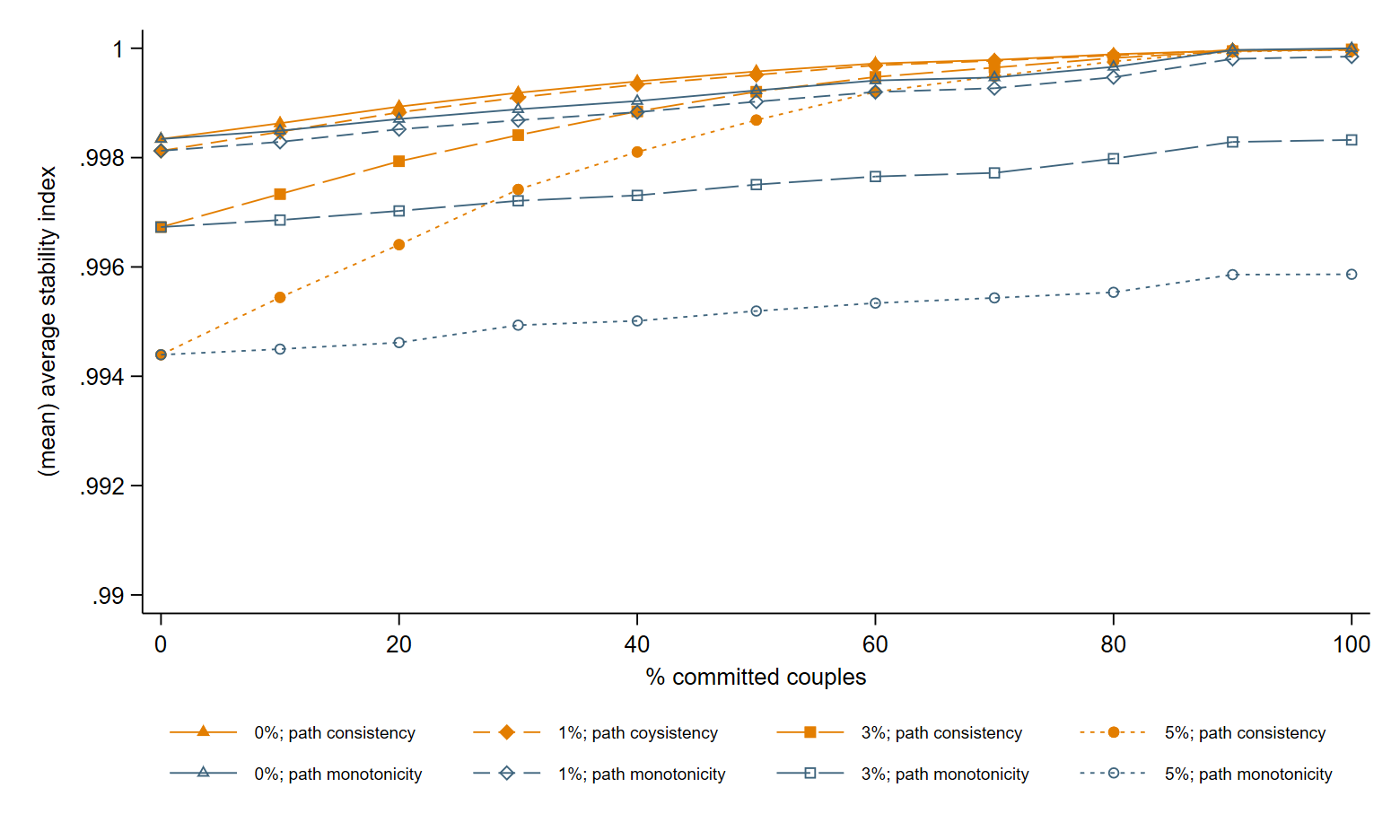} }}%
\end{figure}

\begin{figure}[htbp]
    \centering
    \caption{Relative difference in bounds for committed couples with singles excluded}%
    \label{fig_shares_withoutsingle}%
    \subfloat[\centering Price variation]{{\includegraphics[scale=0.135]{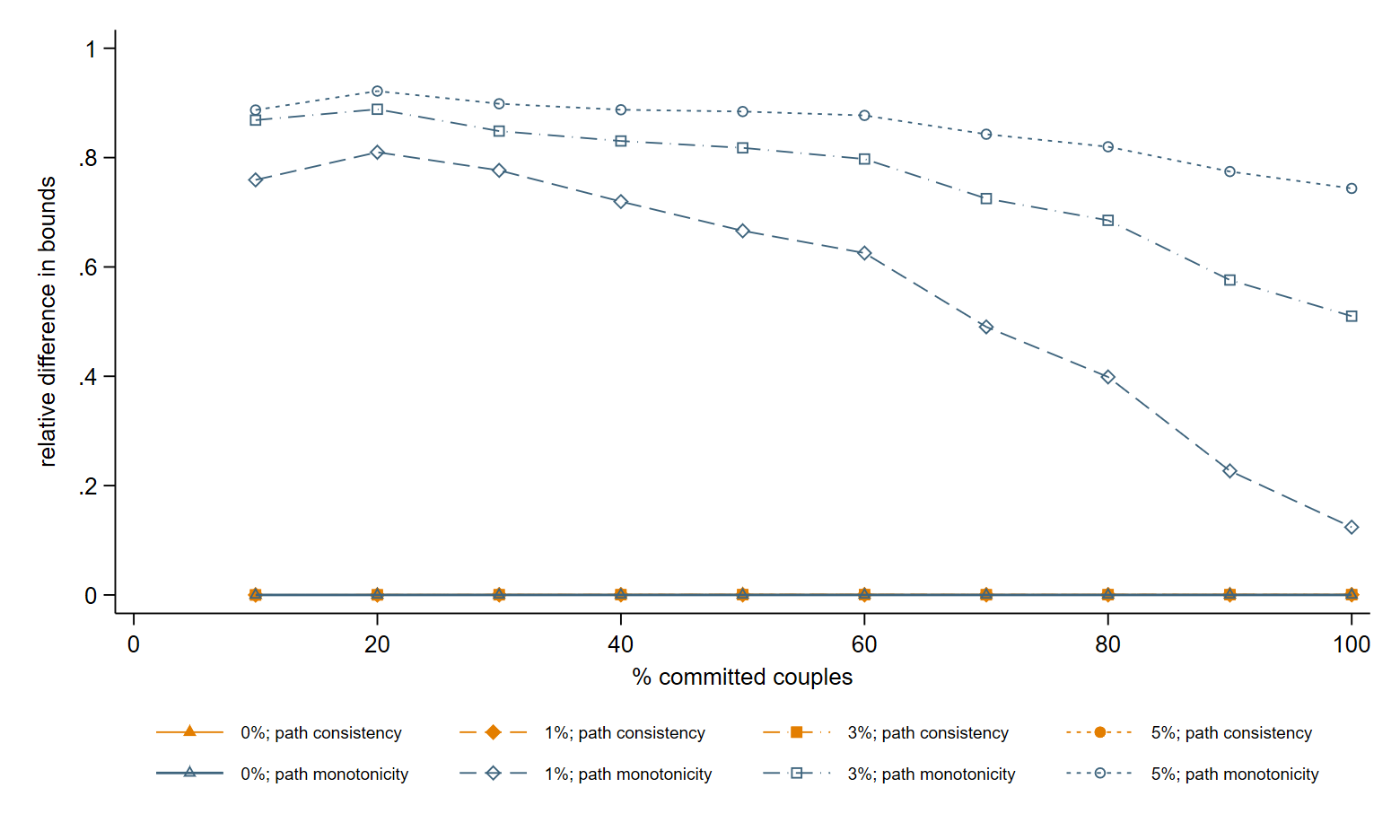} }}%
    \qquad
    \subfloat[\centering Income variation]{{\includegraphics[scale=0.135]{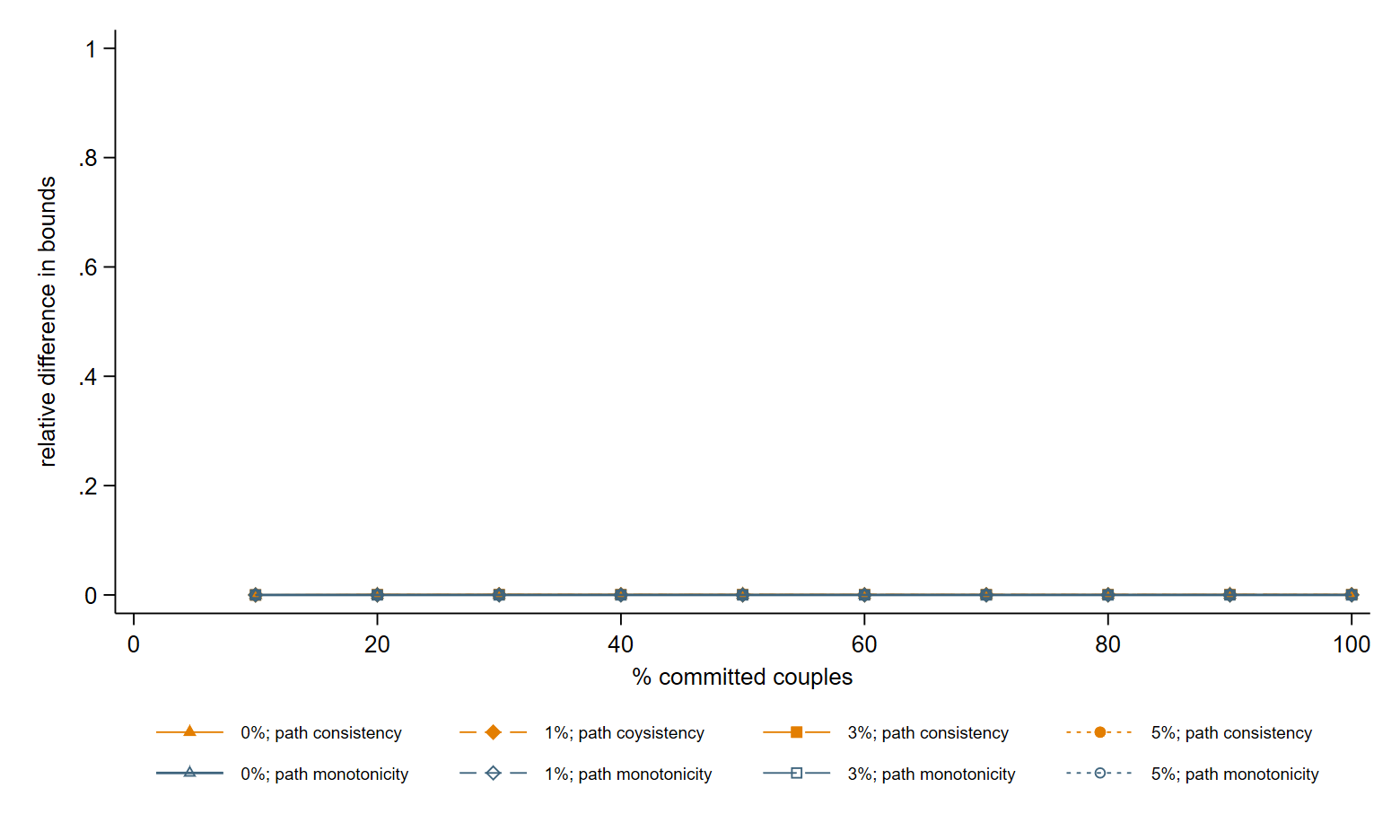} }}%
\end{figure}


\subsection{Non-committed Couples}
\label{app:simulation:noncommitted}
In the main text, we focused on the identifying power of the stability conditions for committed couples. Here, we shift our focus to non-committed couples. Figure \ref{fig_shares_noncommitted} shows the average relative difference between the stable and naive bounds for non-committed couples.
Sub-figures (a) and (b) present the results with variations in prices and income, respectively. In all simulation scenarios, we observe that the stable bounds are notably tighter than the naive bounds for non-committed couples, often leading to near-point identification.
As before, we find that the stable bounds based on path monotonicity are tighter than those derived from path consistency and the tightness of the bounds increases with greater variation in prices or income.

\begin{figure}[htbp]
    \centering
    \caption{Relative difference in bounds for non-committed couples}%
    \label{fig_shares_noncommitted}%
    \subfloat[\centering Price variation]{{\includegraphics[scale=0.135]{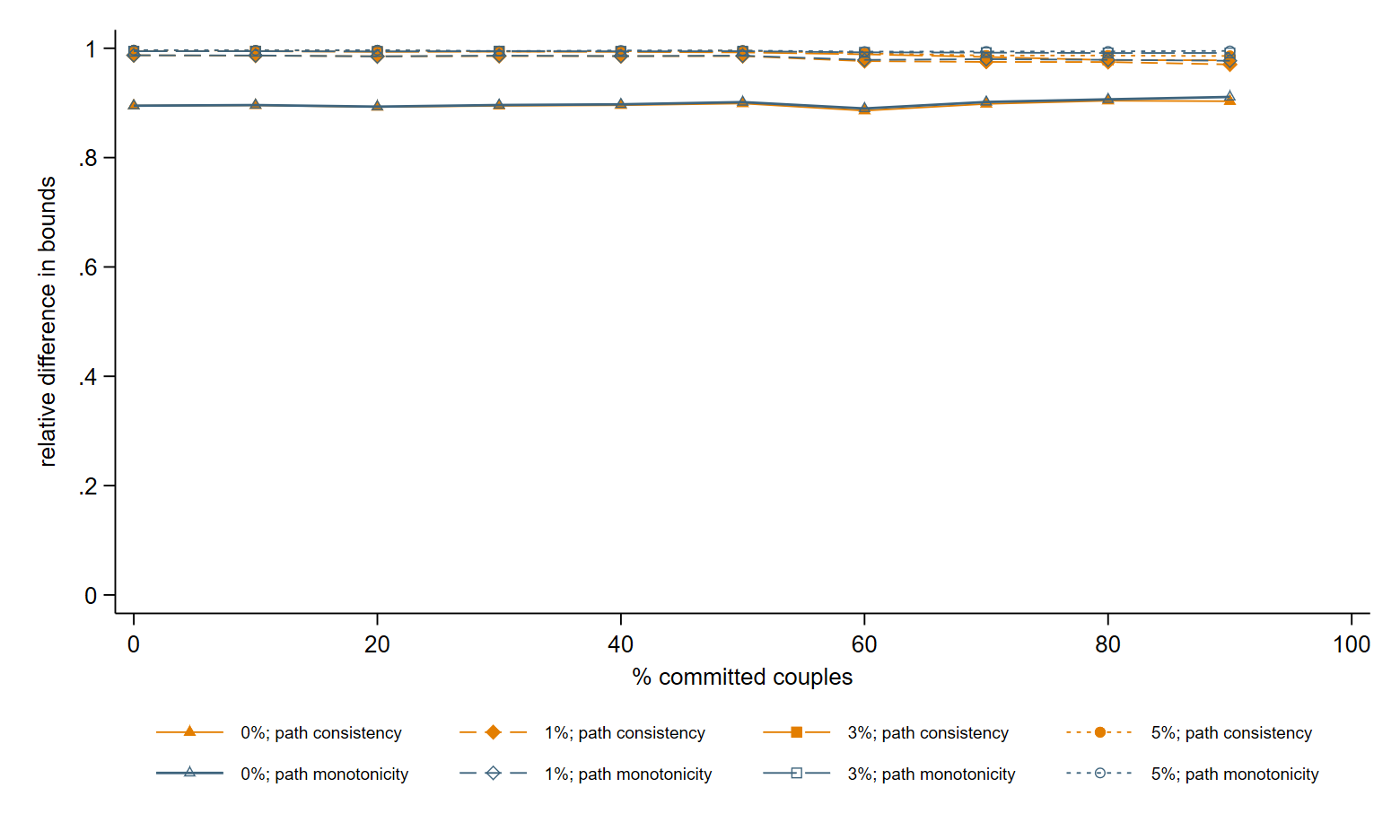} }}%
    \qquad
    \subfloat[\centering Income variation]{{\includegraphics[scale=0.135]{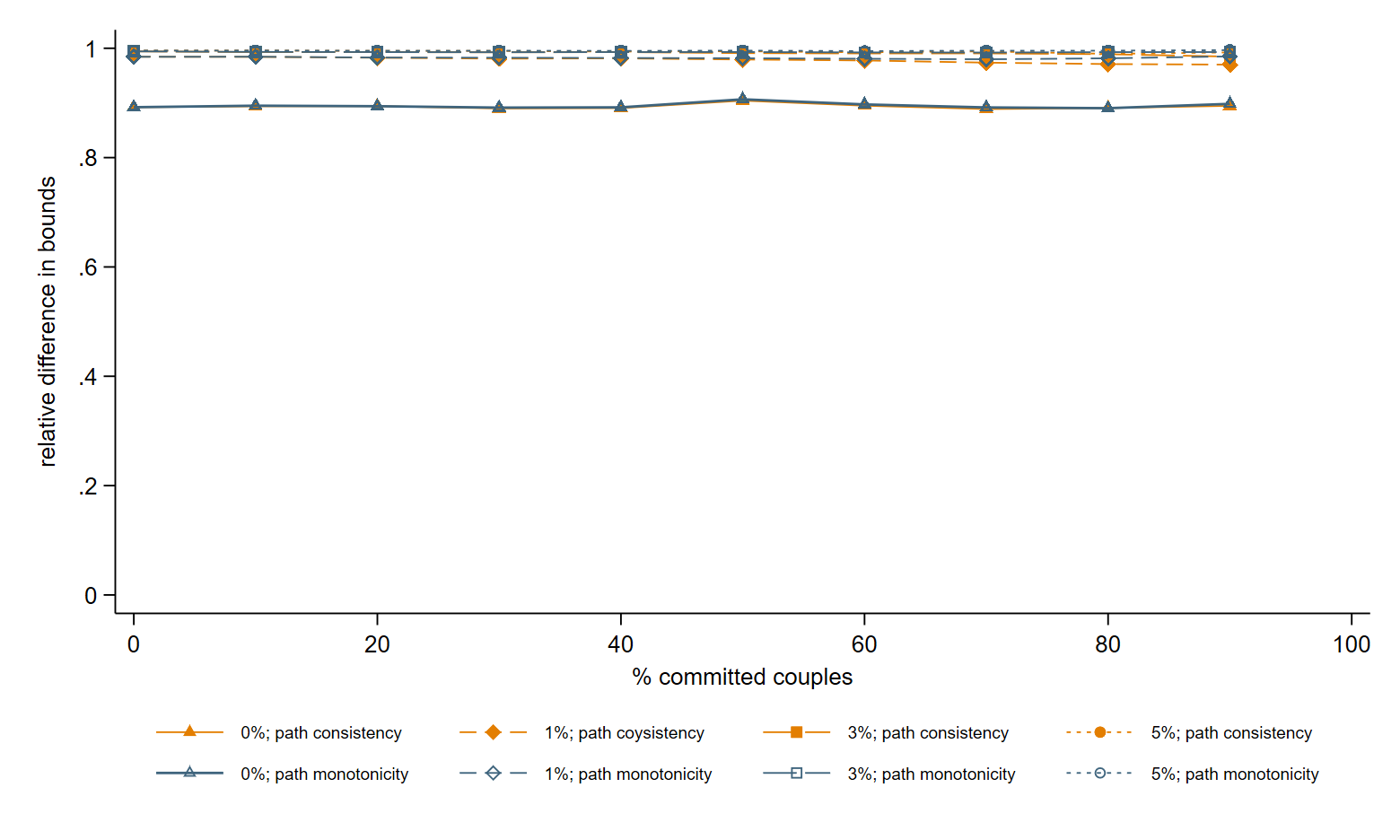} }}%
\end{figure}

\subsection{Subsample Size}
\label{app:simulation:robustness}

In our baseline setting, each subsample consisted of 50 randomly drawn households. As a robustness check, we consider scenarios with 70 and 100 randomly drawn households for each subsample. Figures \ref{fig_passrate_size70} and \ref{fig_shares_size70} show the average stability indices and relative difference in identified bounds, respectively, when using a subsample size of 70. Figures \ref{fig_passrate_size100} and \ref{fig_shares_size100} show the results when using a subsample size of 100.

\begin{figure}[htbp]
    \centering
    \caption{Goodness-of-fit with sample size 70}%
    \label{fig_passrate_size70}%
    \subfloat[\centering Price variation]{{\includegraphics[scale=0.135]{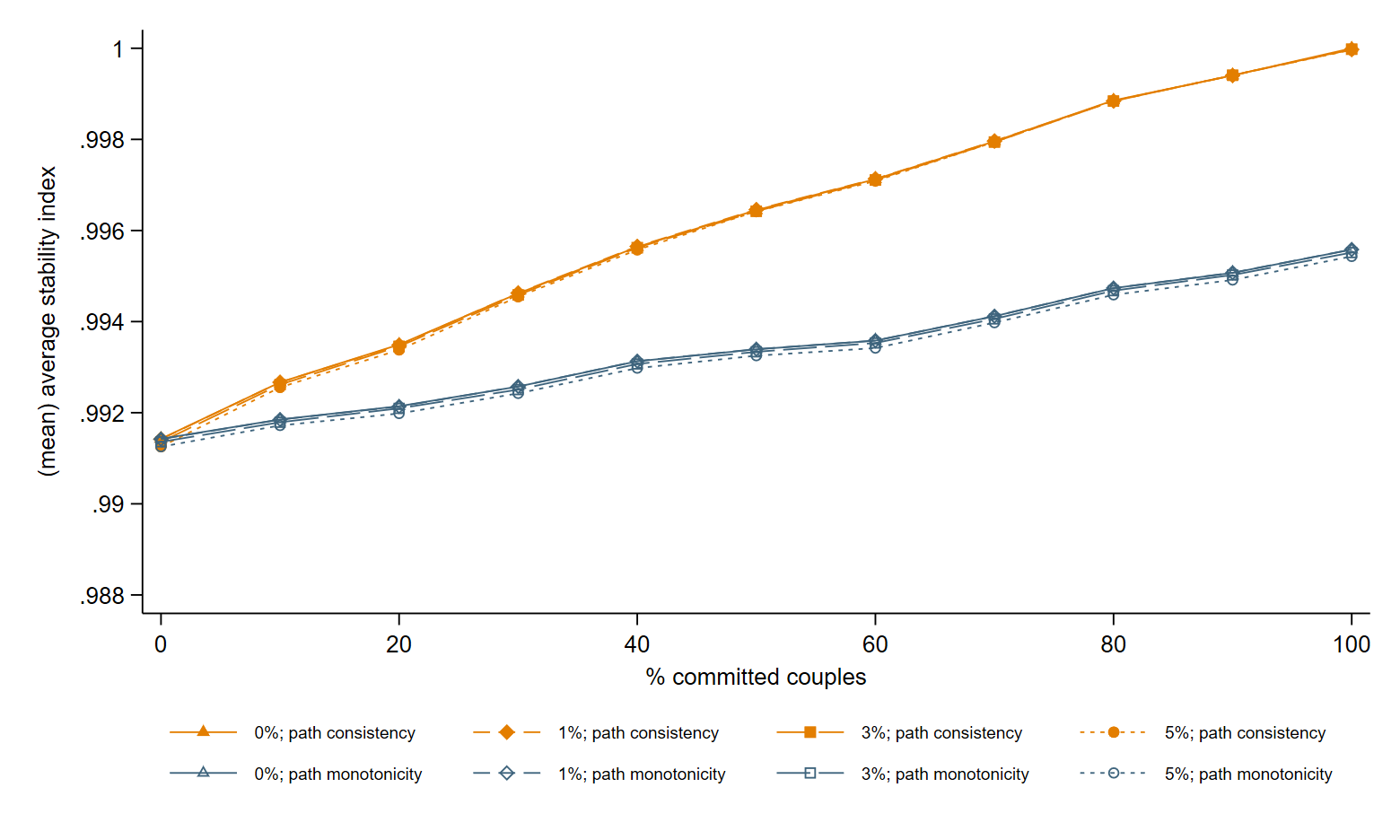} }}%
    \qquad
    \subfloat[\centering Income variation]{{\includegraphics[scale=0.135]{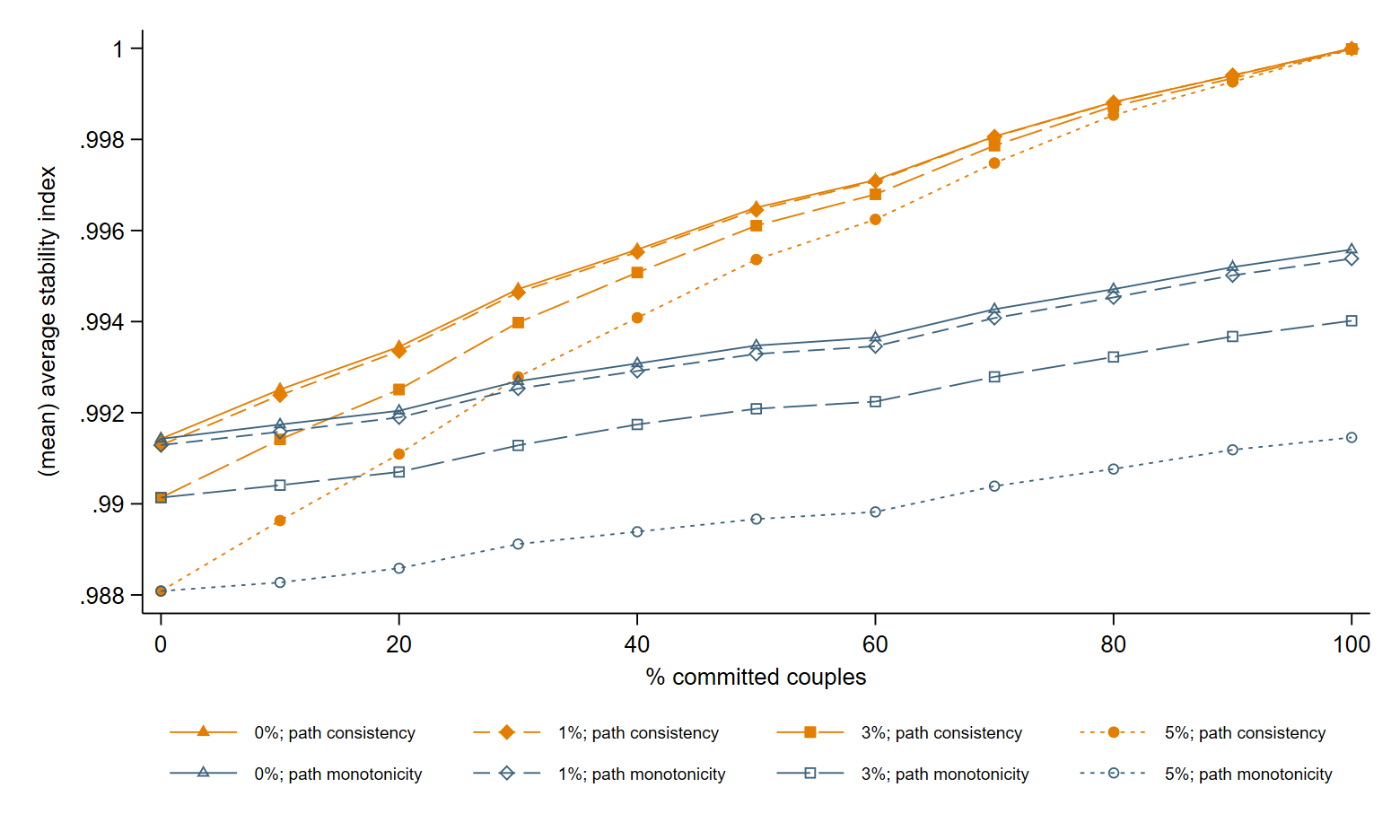} }}   
\end{figure}

\begin{figure}[htbp]
    \centering
    \caption{Relative difference in bounds for committed couples with sample size 70}%
    \label{fig_shares_size70}%
    \subfloat[\centering Price variation]{{\includegraphics[scale=0.135]{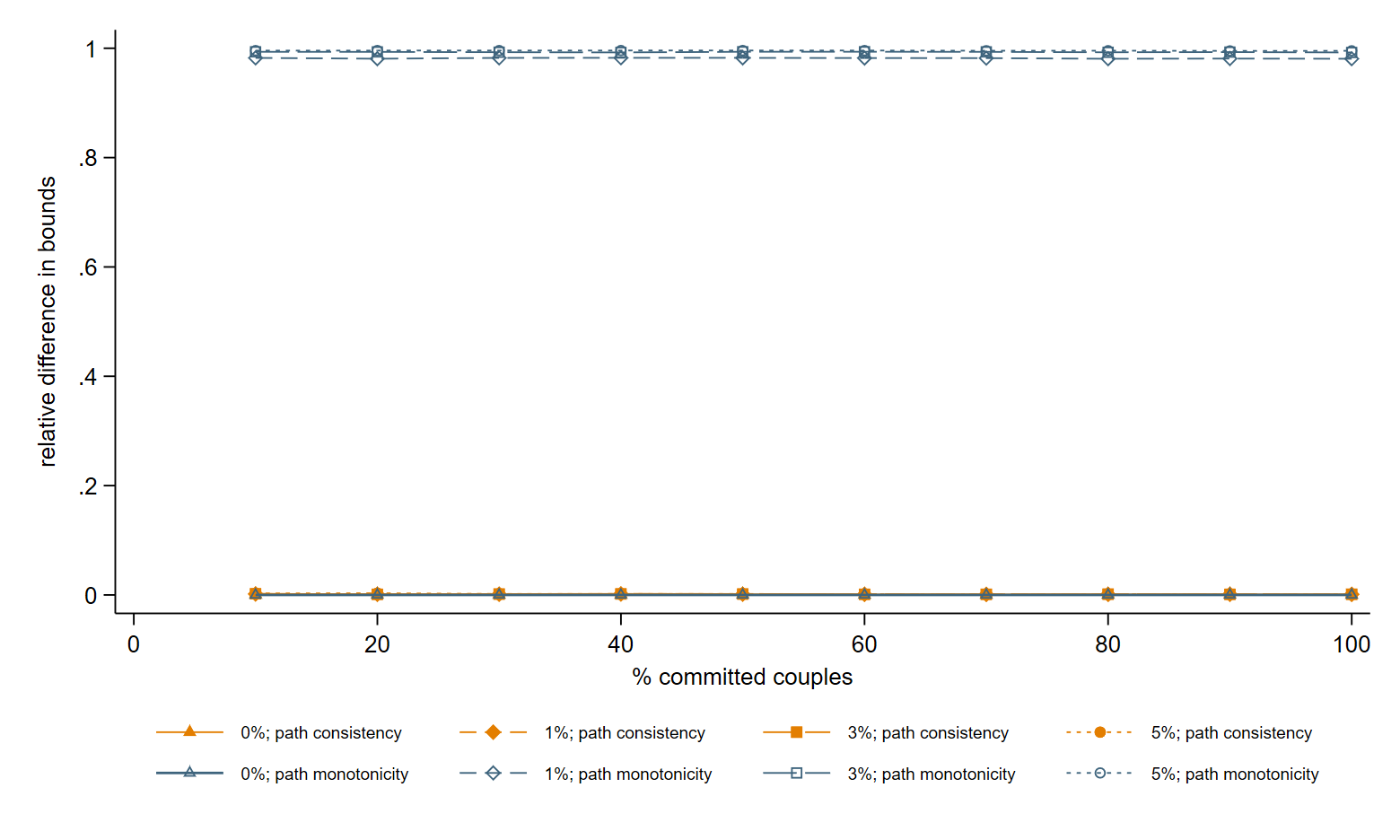} }}%
    \qquad
    \subfloat[\centering Income variation]{{\includegraphics[scale=0.135]{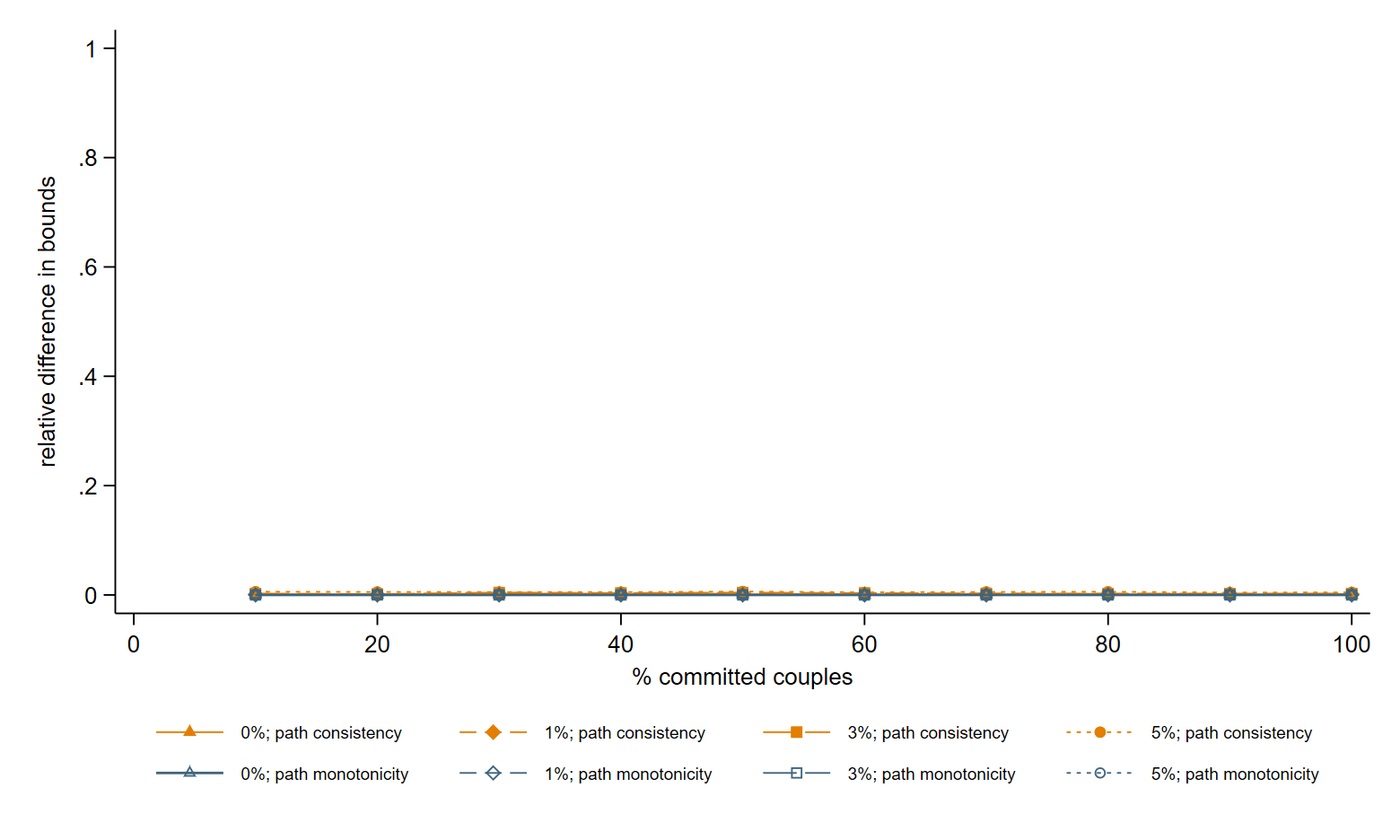} }}  
\end{figure}

\begin{figure}[htb]
    \centering
    \caption{Goodness-of-fit with sample size 100}%
    \label{fig_passrate_size100}%
    \subfloat[\centering Price variation]{{\includegraphics[scale=0.135]{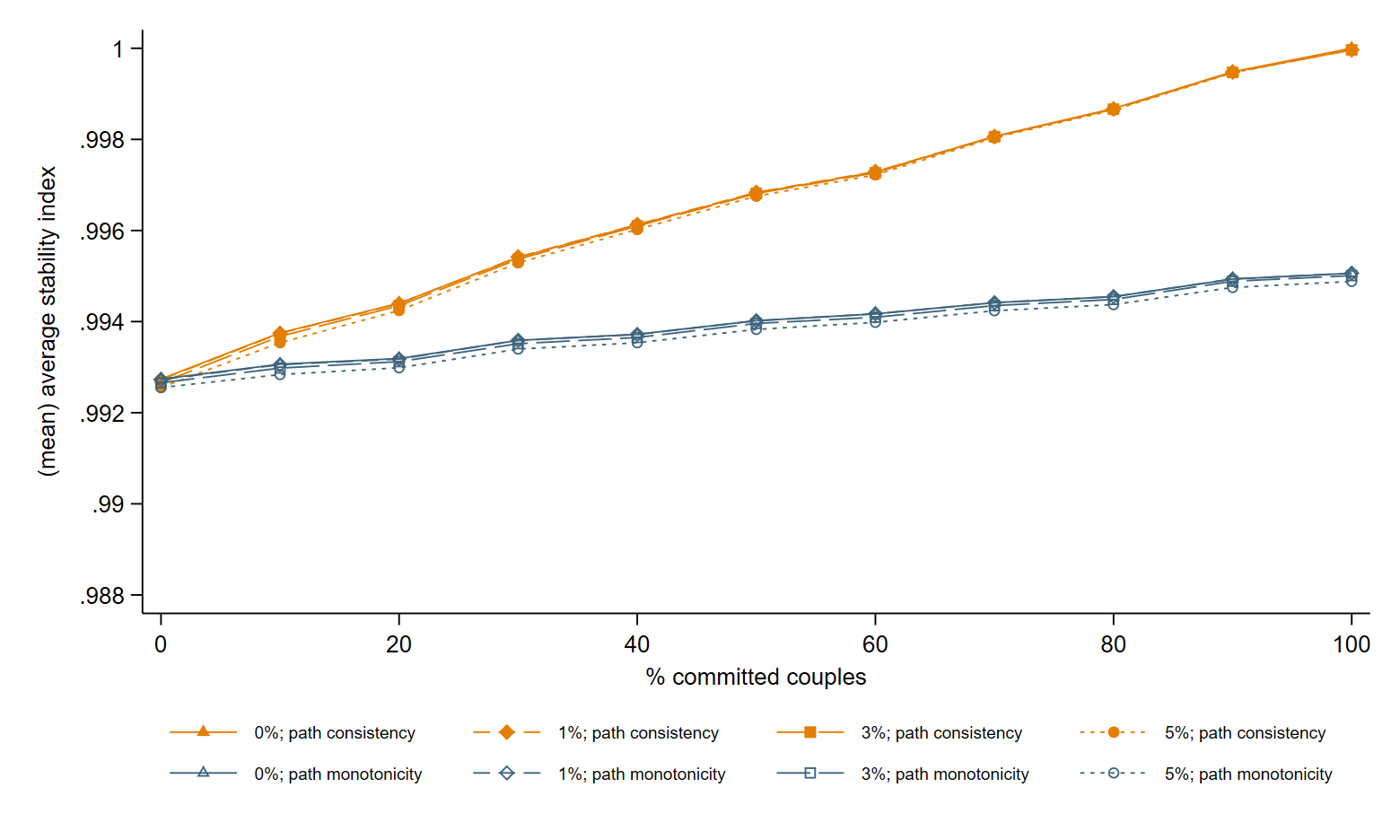} }}%
    \qquad
    \subfloat[\centering Income variation]{{\includegraphics[scale=0.135]{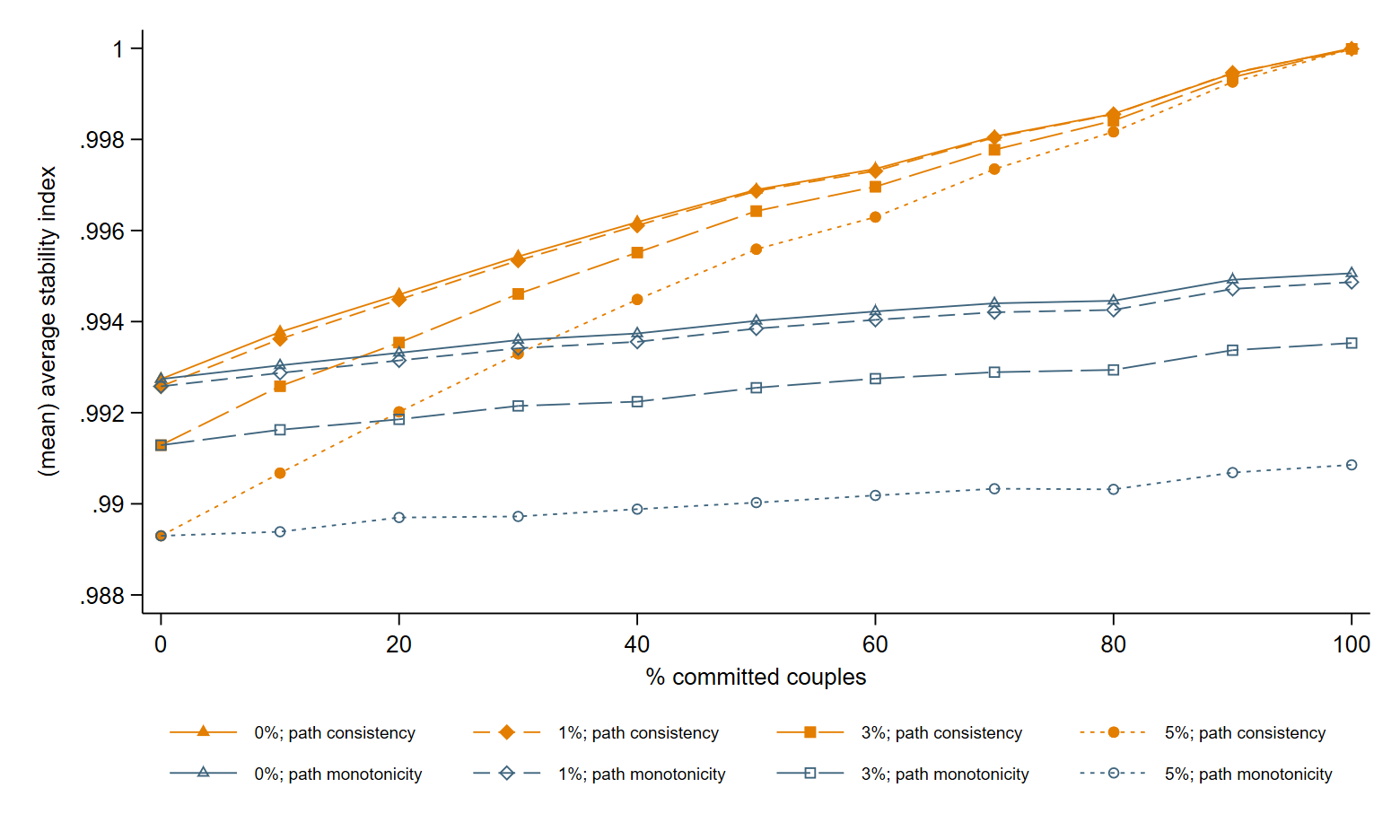} }}   
\end{figure}

\begin{figure}[htb]
    \centering
    \caption{Relative difference in bounds for committed couples with sample size 100}%
    \label{fig_shares_size100}%
    \subfloat[\centering Price variation]{{\includegraphics[scale=0.135]{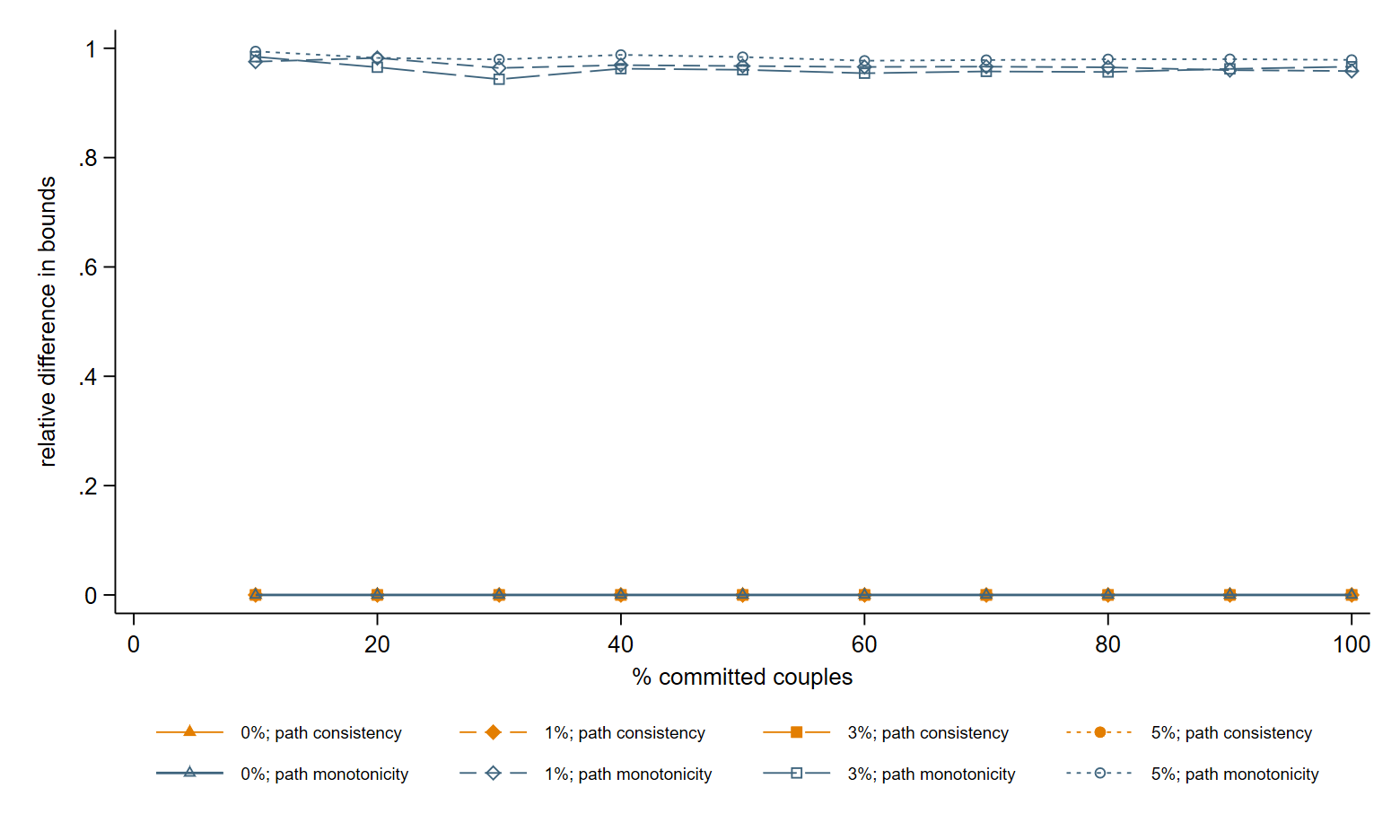} }}%
    \qquad
    \subfloat[\centering Income variation]{{\includegraphics[scale=0.135]{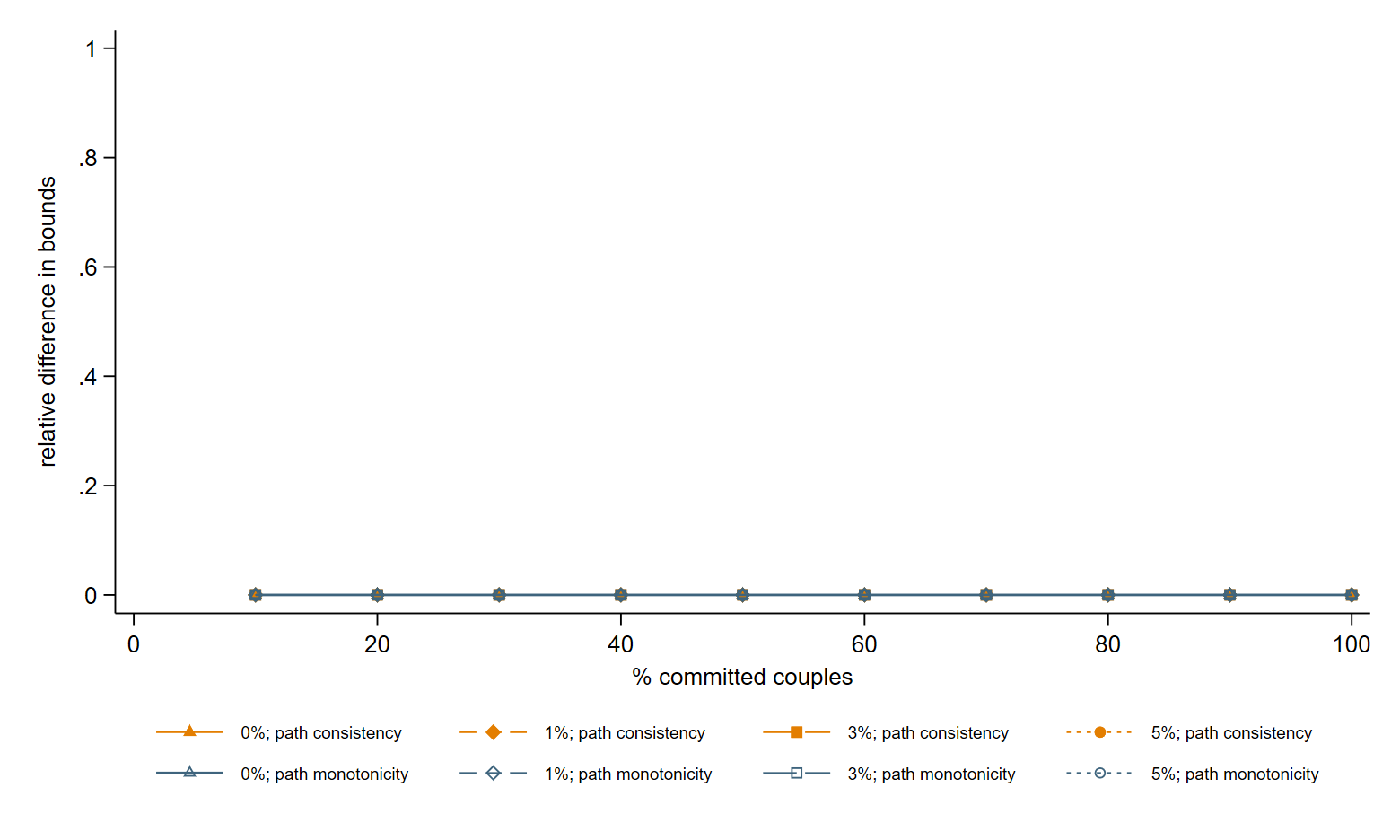} }}  
\end{figure}

We find that increasing the sample size generally results in lower stability indices and tighter bound estimates. Overall, the results closely resemble those presented in the main text. Notably, under path monotonicity, even a small degree of price variation ($\alpha = 1\%$) results in a significant decrease in the average stability index  and almost point identification of private consumption shares. However, even with a large market size of 100 and a high degree of price variation ($\alpha = 5\%$), we do not observe any identifying power from the path consistency condition. While path consistency exhibits empirical content when some non-committed couples are included, it does not lead to identification. Therefore, we conclude that only the implications of stability with transfers are likely to be useful in practical applications.

\clearpage
\bibliographystyle{te.bst}
\bibliography{refs}

\end{document}